\newtheorem{Theorem}{Theorem}
\newtheorem{Definition}{Definition}
\newtheorem{Example}{Example}
\newtheorem{Remark}{Remark}
\newtheorem{Lemma}{Lemma}
\newtheorem{Proposition}{Proposition}
\begin{document}
\title{A Systematic Construction of MDS Codes\\ With Small Sub-packetization Level and\\ Near-Optimal Repair Bandwidth}
\author{Jie~Li,~\IEEEmembership{Member,~IEEE,} Yi Liu,~\IEEEmembership{Graduate~Student~Member,~IEEE,} and  Xiaohu Tang,~\IEEEmembership{Senior~Member,~IEEE}
\thanks{
The work of J. Li was supported in part by the National Science Foundation of China under Grant  61801176.   The work of Y. Liu and  X. Tang was supported in part by the National Natural Science Foundation of China under Grant 61871331
and Grant 61941106. This paper was presented in part at the 2019 IEEE
International Symposium on Information Theory.}
\thanks{J. Li is with the Department of Mathematics and Systems Analysis,
                    Aalto University, FI-00076 Aalto,  Finland, and also with the Hubei Key Laboratory of Applied Mathematics, Faculty of Mathematics and Statistics, Hubei University,
Wuhan 430062, China (e-mail: jie.0.li@aalto.fi, jieli873@gmail.com).}
\thanks{Y. Liu and X. Tang are with the Information Security and National Computing Grid Laboratory, Southwest Jiaotong University, Chengdu, 610031, China (e-mail: yiliu.swjtu@outlook.com; xhutang@swjtu.edu.cn).}
}
\maketitle

\begin{abstract}
In the literature, all the known high-rate MDS codes with the optimal repair bandwidth possess a significantly large sub-packetization level, which may prevent the codes to be implemented in practical systems. To build MDS codes with small sub-packetization level, existing constructions and theoretical bounds imply that one may sacrifice the optimality of the repair bandwidth.
Partly motivated by the work of Tamo \textit{et al.} (IEEE Trans. Inform. Theory, 59(3), 1597-1616, 2013), in this paper,  we present a   transformation that can greatly reduce the sub-packetization level of   MDS codes  with the optimal repair bandwidth with respect to  the same code length $n$. As applications of the transformation, four high-rate MDS codes having both small sub-packetization level and near-optimal repair bandwidth can be obtained, where three of them are  explicit and the required field sizes are   around or even smaller than the code length $n$. Additionally, we propose another explicit  MDS code which has a similar structure as that of the first resultant code obtained by the generic transformation, but can be built on a smaller finite field.
\end{abstract}

\begin{IEEEkeywords}
Distributed storage, high-rate, MDS codes, sub-packetization, repair bandwidth.
\end{IEEEkeywords}

\section{Introduction}
\IEEEPARstart{I}{n} distributed storage systems such as Hadoop Distributed File System (HDFS) and Google File System (GFS), redundancy is imperative to ensure reliability. An attractive solution  is to call upon the maximum distance separable (MDS) codes, which provide the optimal tradeoff between fault tolerance and storage overhead. By distributing the codeword across distinct storage nodes, in the case of node failures, the missing data can be recovered from the data at some surviving nodes,  named helper nodes as well.  For this  scenario, one of the most important parameters is the \textit{repair bandwidth}, which is defined as the amount of data downloaded from the helper nodes to repair the failed node. Particularly, Dimakis \textit{et al.} \cite{Dimakis} derived a lower bound on the repair bandwidth of MDS codes,
 which motivated abundant recent research in coding for distributed storage \cite{Goparaju,invariant_subspace,product,repair-parity-zigzag,hadamard,Hadamard-strategy,Sasidharan-Kumar2,Zigzag,Long_IT,transform-ISIT,transform-IT,Barg1,Barg2,extend-zigzag,elyasi2019cascade,elyasi2018cascade,YiLiu,li2019systematic,balaji2018erasure}.

In the literature, most existing MDS codes with the repair bandwidth achieving the lower bound in \cite{Dimakis} are  a kind of array codes. A codeword of an $(n,k)$ array code is an $N\times n$ matrix, where the parameter $N$ is called the \textit{sub-packetization level} and $n$ is called the \textit{code length}.  When deploying an array code to a distributed storage system, a code symbol (i.e., a column) corresponds to a storage node. Then, an array code is said to have the \textit{MDS property} if any $k$ out of the $n$ columns of the matrix can recover the remaining $n-k$ columns.
It was proved in \cite{Dimakis} that the repair bandwidth $\gamma(d)$ of an $(n,k)$ MDS array code with  sub-packetization level $N$ should satisfy
\begin{equation}\label{Eqn_bound_on_gamma}
 \gamma(d)\ge \gamma^*(d) \triangleq\frac{d}{d-k+1}N,
\end{equation}
where $d$ ($k\le d\le n-1$) is the number of helper nodes.
An MDS array code is said to have the optimal repair bandwidth if  $\gamma(d)=\gamma^*(d)$,  i.e., the amount of data downloaded from each helper node is $\frac{N}{d-k+1}$.
In the particular case, when $d=n-1$, $\gamma^*(d)$ can be  reduced to the minimal value $\frac{n-1}{n-k}N$. Therefore, $d=n-1$ is the main concern in the most known  works \cite{repair-parity-zigzag,hadamard,Zigzag,Hadamard-strategy,extend-zigzag,Long_IT,transform-ISIT,transform-IT,invariant_subspace}. In this paper, we  also follow the same setting  and thus abbreviate $\gamma^*(n-1)$ to $\gamma^*$. Especially,
we focus on MDS array codes, and   refer to them as MDS codes for simplicity.

Up to now, various  MDS code constructions with the optimal repair bandwidth have been proposed,  among which some   notable works are \cite{Goparaju,product,hadamard,Sasidharan-Kumar2,invariant_subspace,Zigzag,Barg1,Barg2,YiLiu,transform-IT,elyasi2019cascade,elyasi2018cascade}.
However, in the high-rate regime,  all the known  $(n, k)$ MDS code constructions with the optimal repair bandwidth possess a significantly large  sub-packetization level $N$, i.e., $N\ge r^{n\over r+1}$ where $r=n-k$ \cite{Long_IT}.
In  \cite{Goparaju_bound}, it was shown that  for an $(n,k)$ MDS code with the optimal repair bandwidth, a  sub-packetization level $N$  being  exponential in the square root of $k$  is necessary. Very recently, Alrabiah and Guruswami \cite{alrabiah2019exponential}  further improved the lower bound on $N$ to being  exponential in   $k$ and they conjectured that the construction in \cite{Long_IT} with $N=r^{n\over r+1}$ is exactly tight. An MDS code with a larger sub-packetization level can lead to a reduced design space in
terms of various system parameters and  make management
of meta-data difficult. Moreover,  the implementation in practical systems is a big challenge \cite{Rawat}.

Existing constructions and  theoretical bounds imply that one may construct  high-rate MDS codes with small sub-packetization level by sacrificing the  optimality of the repair bandwidth.  In \cite{Rawat}, two  high-rate $(n,k)$ MDS codes with small sub-packetization level were presented. The first one can have a sub-packetization level as small as $N=r^\tau$  where $r=n-k$ and $\tau$ is a positive integer with $1\le \tau \le \lceil\frac{n}{r} \rceil-1$, while the repair bandwidth is no larger than $(1+\frac{1}{\tau})\gamma^*$.  However, the code is constructed over a significantly large finite field $\textbf{F}_q$ with $q>n^{(r-1)N+1}$, which may  hinder its deployment in practical systems.  The second MDS code  is obtained by combining an MDS code with the optimal repair bandwidth and another error-correcting code  with specific parameters. The proposed codes, therefore, rely on the existence of the latter, which may not always be available. For convenience, we refer to the two codes in \cite{Rawat} as RTGE code 1 and RTGE code 2 in this paper.
In \cite{Zigzag}, an $(n=sk'+2,k=sk')$ MDS code with sub-packetization level $2^{k'-1}$ and near-optimal repair bandwidth only for systematic nodes was proposed, which is termed duplication-zigzag code in this paper. In fact, the duplication-zigzag code is constructed based on $s$-duplication of the $(k'+2,k')$ zigzag code, but  can only support two parity nodes.

In this paper, we aim to construct high-rate MDS codes that have both small sub-packetization level and near-optimal repair bandwidth for general parameters  $n$ and $k$  over a small finite field $\textbf{F}_q$.  Partly motivated by the work in \cite{Zigzag}, we present a   transformation that can convert any   $(n'=k'+r, k')$ MDS  code with the optimal repair bandwidth that is defined in the parity-check matrix form into another $(n=k+r, k)$ MDS code with much longer code length. Specifically, the repair bandwidth of the new MDS code is upper bounded by $(1+\frac{r}{n'})\gamma^*$, but     the sub-packetization level is kept unchanged,   or equivalently the generic transformation can reduce the  sub-packetization level $N$ of the original codes with respect to the same code length $n$. By directly applying the generic transformation to several known  high-rate MDS codes with the optimal repair bandwidth, we get four high-rate $(n,k)$ MDS codes with both small sub-packetization level $N$ and near-optimal repair bandwidth,   three of which  are explicit and the required field sizes are  {around or smaller than  the code length $n$. Besides, we propose another new   MDS code    which has a similar structure as that of the first resultant code obtained by the generic transformation, but can be built on a smaller finite field.  The obtained MDS codes outperform the RTGE code 1 in \cite{Rawat} in terms of the field size, and   the first codes in both \cite{Barg1} and \cite{YiLiu} as well as the RTGE code 2 in \cite{Rawat}  in terms of the sub-packetization level. As a matter of convenience, we refer to the first  two codes in \cite{Barg1} respectively as YB code 1 and YB code 2, while referring to the first code in \cite{YiLiu} as the improved YB code 2 (since it is an improvement  of the YB code 2 in \cite{Barg1} with respect to the field size).

The remainder of the paper is organized as follows. Section II reviews some necessary preliminaries. Section
III proposes the generic transformation and
its asserted properties. Section IV demonstrates several  applications of the generic transformation,  three of which    are explicit. Section V presents another new explicit construction of high-rate MDS code over a small finite field that   has a small sub-packetization level,  near-optimal repair bandwidth, and the optimal update property. Section VI gives comparisons of   key parameters among the MDS codes proposed in this paper and some existing  notable  MDS  codes. Finally, Section VII concludes the study.

\section{Preliminaries}
In this section, we introduce some  preliminaries on high-rate MDS  codes, and  a series of special partitions for a given basis set.

\subsection{$(n,k)$ MDS codes}
Denote by $q$ a prime power and $\mathbf{F}_q$ the finite field with $q$ elements. For any two integers $a$ and $b$ with $b>a$, denote by $[a, b)$ the set $\{a, a+1, \ldots, b-1\}$. Let $\mathbf{f}_0, \mathbf{f}_1, \ldots, \mathbf{f}_{n-1}$ be the data stored across  a distributed storage system consisting of $n$ nodes   based on an $(n,k)$ MDS code, where $\mathbf{f}_i$ is a column vector of length $N$ over $\mathbf{F}_q$.  Throughout this paper, we consider   $(n,k)$ MDS codes that permit a definition in the following parity-check form:
\begin{equation}\label{Eqn parity check eq}
  \underbrace{\left(\hspace{-2mm}
        \begin{array}{cccc}
                A_{0,0} & A_{0,1} & \cdots & A_{0,n-1} \\
                A_{1,0} & A_{1,1} & \cdots & A_{1,n-1} \\
       \vdots & \vdots & \ddots & \vdots \\
            A_{r-1,0} & A_{r-1,1} & \cdots & A_{r-1,n-1}
            \end{array}
            \hspace{-2mm}\right)}_{A}
\left(\hspace{-2mm}\begin{array}{c}
              \mathbf{f}_0\\
              \mathbf{f}_1\\
             \vdots\\
         \mathbf{f}_{n-1}
            \end{array}
         \hspace{-2mm}\right)=\mathbf{0}_{rN},
\end{equation}
where $r=n-k\ge2$, $\mathbf{0}_{rN}$ denotes the zero column vector of length $rN$, and will be abbreviated as $\mathbf{0}$ in the sequel if its length is clear. The $rN\times nN$ block matrix $A$ in \eqref{Eqn parity check eq} is called the \textit{parity-check matrix} of the code, which can be written as
\begin{equation*}
A=(A_{t,i})_{t\in [0,r),i\in[0,n)}
\end{equation*}
to indicate the block entries.

For every  $t\in[0,r)$, by \eqref{Eqn parity check eq}, we have  $\sum\limits_{i=0}^{n-1}A_{t,i}\mathbf{f}_i=\mathbf{0}$, which contains $N$ linear equations. Particularly, we say that $\sum\limits_{i=0}^{n-1}A_{t,i}\mathbf{f}_i=\mathbf{0}$ is the $t$-th \textit{parity-check group}.

\subsection{The MDS property}
An $(n,k)$  MDS code defined by \eqref{Eqn parity check eq} possesses the MDS property that the source file can be reconstructed by connecting to any $k$ out of the $n$ nodes. That is,  any $r\times r$ sub-block matrix of $(A_{t,i})_{t\in [0,r),i\in[0,n)}$
is nonsingular \cite{Barg1}.

In particular, if
\begin{equation}\label{Eqn A power}
 A_{t,i}=A_i^{t}, ~t\in [0,r), ~i\in [0,n)
\end{equation}
for some matrices $A_i$ of order $N$, then we have the following result.

\begin{Lemma} [\cite{Barg1}]\label{Lemma pre MDS}
An $(n,k)$   code defined by \eqref{Eqn parity check eq}  and \eqref{Eqn A power} has the MDS property if
$A_iA_j=A_jA_i$ and $A_i-A_j$ is nonsingular for all $i,j\in [0,n)$ with $i\ne j$.
\end{Lemma}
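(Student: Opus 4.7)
The plan is to reduce the MDS property to a block Vandermonde nonsingularity statement and then exploit the commutativity hypothesis to emulate the scalar Vandermonde elimination argument. As noted just before the lemma, it suffices to prove that for every subset $\{i_0,i_1,\ldots,i_{r-1}\}\subseteq[0,n)$ of size $r$ the block matrix
\[
M=\begin{pmatrix}
I & I & \cdots & I\\
A_{i_0} & A_{i_1} & \cdots & A_{i_{r-1}}\\
\vdots & \vdots & & \vdots\\
A_{i_0}^{r-1} & A_{i_1}^{r-1} & \cdots & A_{i_{r-1}}^{r-1}
\end{pmatrix}
\]
is nonsingular. I would proceed by induction on $r$, with the base case $r=1$ being immediate.

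For the inductive step I would perform block row operations from the bottom up, replacing the $t$-th block row by itself minus $A_{i_0}$ times the $(t-1)$-st block row for $t=r-1,r-2,\ldots,1$. Because $A_{i_0}$ and $A_{i_j}$ commute, the $(t,j)$-entry becomes
\[
A_{i_j}^{t}-A_{i_0}A_{i_j}^{t-1}=A_{i_j}^{t-1}\bigl(A_{i_j}-A_{i_0}\bigr),
\]
so after these operations the first block column collapses to $(I,0,\ldots,0)$, while for each $j\ge 1$ the remaining entries in column $j$ form the corresponding column of a smaller block Vandermonde in $A_{i_1},\ldots,A_{i_{r-1}}$ right-multiplied by $A_{i_j}-A_{i_0}$. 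Cofactor expansion along the first block column, followed by pulling out the block diagonal factor with diagonal entries $A_{i_1}-A_{i_0},\ldots,A_{i_{r-1}}-A_{i_0}$ on the right, reduces the nonsingularity of $M$ to the conjunction of (i) the nonsingularity of a smaller block Vandermonde, handled by the induction hypothesis, and (ii) the nonsingularity of each $A_{i_j}-A_{i_0}$, which is assumed.

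The main subtlety is that ordinary row and column manipulations over a non-commutative matrix ring can easily scramble the factorizations one expects from the scalar case. The commutativity assumption $A_iA_j=A_jA_i$ must be invoked precisely at the moment the factor $A_{i_j}-A_{i_0}$ is extracted, since without it the identity $A_{i_j}^{t}-A_{i_0}A_{i_j}^{t-1}=A_{i_j}^{t-1}(A_{i_j}-A_{i_0})$ fails and the inductive structure collapses. Once this point is handled carefully, and the cofactor expansion is justified in the block setting (where it corresponds to multiplying $M$ on the left by a unipotent block lower triangular matrix and on the right by a block permutation), the rest is a direct block analogue of the classical Vandermonde determinant argument.
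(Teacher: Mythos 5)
The paper itself does not prove Lemma~\ref{Lemma pre MDS}; it is imported verbatim from Ye and Barg \cite{Barg1}, so there is no in-text proof to compare against. Your argument is nevertheless a correct, self-contained proof of the underlying fact. The reduction to the nonsingularity of every $r\times r$ block Vandermonde $M=\bigl(A_{i_j}^t\bigr)_{t,j\in[0,r)}$ is stated in the paper just before the lemma, and your induction via block row elimination is the standard derivation of the factorization $\det M=\prod_{u<v}\det(A_{i_v}-A_{i_u})$ that Ye and Barg rely on. You also put your finger on exactly where commutativity enters: writing $A_{i_j}^t-A_{i_0}A_{i_j}^{t-1}=A_{i_j}^{t-1}(A_{i_j}-A_{i_0})$ needs $A_{i_0}A_{i_j}^{t-1}=A_{i_j}^{t-1}A_{i_0}$, and it is precisely because $A_{i_j}-A_{i_0}$ then emerges on the right with a prefactor depending only on the row index $t$ that it can be collected into a single block diagonal right factor whose determinant splits off. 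One small point worth tightening is the phrase ``cofactor expansion along the first block column'': after the bottom-up row operations the matrix is block upper triangular of the form $\bigl(\begin{smallmatrix} I & B \\ 0 & M' \end{smallmatrix}\bigr)$, so its determinant equals $\det(M')$ directly, and the row operations themselves amount to left multiplication by the unipotent block bidiagonal matrix with $I$ on the diagonal and $-A_{i_0}$ on the subdiagonal, hence are determinant-preserving; neither a block permutation nor a cofactor expansion is actually needed.
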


\subsection{Repair}\label{sec:repair}

When repairing a failed node $i$ ($i\in[0,n)$) of an $(n,k)$ MDS   code,  denote by $\beta_{i,j}$ the amount of data downloaded from node $j$, where  $j\in [0,n)\backslash\{i\}$. In fact, the data downloaded from  helper node $j$ can be represented by $R_{i,j}\mathbf{f}_j$, where $R_{i,j}$ is a $\beta_{i,j} \times N$ matrix  of full rank.  Throughout this paper, $R_{i,j}$ is called  the \textit{repair matrix} of node $i$.

Clearly,  a failed node can be repaired if there are  $N$ linearly independent equations with respect to the  $N$ unknowns of $\mathbf{f}_i$. Specially, the $N$ equations should be chosen elaborately so that the interference in these equations can be cancelled by the downloaded data $R_{i,j}\mathbf{f}_j$ from the helper nodes $j\in [0,n)\backslash\{i\}$. In this paper, similar to that in \cite{YiLiu}, for convenience, we only consider the symmetric situation where appropriate $N/r$  linearly independent equations are acquired from each of the  $r$ parity-check groups, which are linear combinations of the corresponding $N$ parity-check equations.
Precisely, these $N/r$  linearly independent  equations  can be obtained by multiplying  the $t$-th parity-check group with an $N/r \times N$ matrix $S_{i,t}$ of full rank, where $S_{i,t}$ is called the \textit{select matrix}  in \cite{YiLiu}.
As a consequence, the following linear equations are available.
\begin{equation*}
\underbrace{\left(\begin{array}{c}
S_{i,0} A_{0,i} \\
S_{i,1} A_{1,i}\\
\vdots\\
S_{i,r-1} A_{r-1,i}
\end{array}\right)\mathbf{f}_i}_{\mathrm{useful ~data}}
+\sum_{j=0,j\ne i}^{n-1}\underbrace{\left(\begin{array}{c}
S_{i,0} A_{0,j}\\
S_{i,1} A_{1,j}\\
\vdots\\
S_{i,r-1} A_{r-1,j}
\end{array}\right)\mathbf{f}_j}_{\mathrm{interference ~by~}\mathbf{f}_j}=\mathbf{0},
\end{equation*}
thus regenerating node $i$ requires that
\begin{itemize}
\item [(i)] the coefficient matrix of the useful data  is of full rank, i.e.,
\begin{equation}\label{repair_node_requirement1 n-1}
\textrm{rank}(\left(
\begin{array}{c}
S_{i,0} A_{0,i}\\
S_{i,1} A_{1,i} \\
\vdots \\
S_{i,r-1} A_{r-1,i}
\end{array}
\right)) =N, \, i\in [0,n) ,
\end{equation}
\item [(ii)] the interference caused by $\mathbf{f}_j$  can  be determinable by the data $R_{i,j} \mathbf{f}_{j}$ downloaded from node $j$ for all $j\in [0,n)\backslash\{i\}$, i.e.,
\begin{equation*}\label{repair_node_requirement2 n-1}
  \mbox{rank}(\left(
  \begin{array}{c}
  R_{i,j}\\
  S_{i,0}A_{0,j}\\
  S_{i,1}A_{1,j}\\
  \vdots\\
  S_{i,r-1}A_{r-1,j}
  \end{array}
  \right))=\mbox{rank}\left(R_{i,j}\right),
\end{equation*}
for $i,j\in[0,n)$ with $i\ne j$,
which means that
\begin{equation}\label{repair_node_requirement3 n-1}
\textrm{rank} (\left(
\begin{array}{c}
R_{i,j} \\
S_{i,t} A_{t,j}
\end{array}
\right)) =\mbox{rank}(R_{i,j})
\end{equation}
for $i,j\in[0,n)$ with $i\ne j$, $t\in [0,r)$.
\end{itemize}

Then, the repair bandwidth of node $i$ is
\begin{equation}\label{Eqn_RB}
\gamma_i=\sum\limits_{j=0,j\ne i}^{n-1} \mathrm{rank}(R_{i,j})=\sum\limits_{j=0,j\ne i}^{n-1} \beta_{i,j}.
\end{equation}
As mentioned before,  a lower repair bandwidth of a node is desirable. According to \eqref{Eqn_bound_on_gamma}, if $\gamma_i=\gamma^*=(n-1){N\over r}$, then node $i$ is said to have the optimal repair bandwidth.  If $\gamma_i\le (1+\epsilon)\gamma^*=(1+\epsilon)(n-1){N\over r}$ for a small constant $\epsilon$, then node $i$ is said to have the \textit{near-optimal repair bandwidth} \cite{Rawat}.

In addition to the (near-) optimal repair bandwidth,  an $(n,k)$ MDS code is also preferred to have the \textit{optimal update} property, that is,  the minimum number of elements need to be updated  when an information element is changed.   In \cite{Barg1}, Ye and Barg showed that an $(n,k)$ MDS code defined in the form of \eqref{Eqn parity check eq} and \eqref{Eqn A power} has the optimal update property if all the block matrices of the parity-check matrix are diagonal.

\subsection{Partition of basis $\{e_0,\cdots,e_{N-1}\}$}\label{subsection:partition}

Assuming that $N=r^m$ for two integers $r$ and $m$ with $r,m\ge2$,  let $e_0,\cdots,e_{r^m-1}$ be a basis of $\mathbf{F}_q^{r^m}$. For example,
they can be simply set as the standard basis, i.e.,
\begin{equation*}
    e_i=(0,\cdots,0,1,0,\cdots,0),\,\,i\in [0, r^m),
\end{equation*}
with only the $i$th entry being nonzero.

In \cite{invariant_subspace}, a series of special  partitions of the set $\{e_0,\cdots,e_{r^m-1}\}$ is given for $r=2$. These  set partitions can be easily generalized
to  the  case of  $r\ge2$, which will  play an important role in our proposed  new  constructions.

For consistency, we follow the notation in  \cite{invariant_subspace} hereafter.
Given an integer $0\le a<r^m$, denote by $(a_0,\cdots,a_{m-1})$
its $r$-ary expansion, i.e., $a=\sum\limits_{j=0}^{m-1}r^{m-1-j}a_{j}$.
For $0\le i< m$ and $0\le t<r$, define a subset of $\{e_0,\cdots,e_{r^m-1}\}$ as
\begin{equation}\label{Eqn_Vt}
V_{i,t}=\{e_a|a_i=t, 0\le a< r^m\},
\end{equation}
where $a_i$ is the $i$th element in the $r$-ary expansion of $a$. Moreover, for $0\le t<r$, we define a special subset of $\{e_0,\cdots,e_{r^m-1}\}$ as
\begin{equation}\label{Eqn V*}
V_{*,t}=\{e_a|a_0+a_1+\cdots+a_{m-1}=t, 0\le a< r^m\},~
\end{equation}
where $a_0+a_1+\cdots+a_{m-1}$ is computed modulo $r$. This special subset  will be used in the MDS code construction in Section \ref{sec:C2C3}.

Straightforwardly, $|V_{i,t}|=r^{m-1}$, and  $\{V_{i,0},V_{i,1},\cdots,V_{i,r-1}\}$ is a partition of the set $\{e_0,\cdots,e_{r^m-1}\}$ for any $i\in [0,m)\cup \{*\}$.
Table \ref{example partition} gives two examples of the set partitions  defined in \eqref{Eqn_Vt} and \eqref{Eqn V*}.
{\small
\begin{table}[htbp]
\begin{center}
\caption{(a) and (b) denote the $m+1$  partitions of  the set $\{e_0,\cdots,e_{r^m-1}\}$    defined by \eqref{Eqn_Vt}  and \eqref{Eqn V*} for $m=3,r=2$, and $m=2,r=3$, respectively.}
\label{example partition}\begin{tabular}{|c|c|c|c|c|c|c|c|c|c|}
\hline $i$ & 0 & 1 & 2 & * &$i$ & 0 & 1 & 2&*\\
\hline \multirow{4}{*}{$V_{i,0}$ }&$e_0$&$e_0$&$e_0$&$e_0$&\multirow{4}{*}{$V_{i,1}$ }&$e_4$&$e_2$&$e_1$&$e_1$\\
  & $e_1$&$e_1$&$e_2$ &$e_3$&& $e_5$&$e_3$&$e_3$&$e_2$\\
  &$e_2$&$e_4$&$e_4$&$e_5$&& $e_6$&$e_6$&$e_5$&$e_4$\\
  &$e_3$&$e_5$&$e_6$&$e_6$&& $e_7$&$e_7$&$e_7$&$e_7$\\
\hline\multicolumn{8}{c}{\hspace{15mm}(A)}
\end{tabular}\\\vspace{5mm}\setlength{\tabcolsep}{4pt}
\begin{tabular}{|c|c|c|c|c|c|c|c|c|c|c|c|c}
\hline $i$ & 0 & 1 & *& $i$ & 0 & 1 & *&$i$ & 0 & 1&*\\
\hline \multirow{3}{*}{$V_{i,0}$ }&$e_0$&$e_0$&$e_0$&\multirow{3}{*}{$V_{i,1}$ }&$e_3$&$e_1$&$e_1$&\multirow{3}{*}{$V_{i,2}$ }&$e_6$&$e_2$&$e_2$\\
  & $e_1$&$e_3$&$e_5$&& $e_4$&$e_4$&$e_3$&& $e_7$&$e_5$&$e_4$\\
  & $e_2$&$e_6$&$e_7$&& $e_5$&$e_7$&$e_8$&& $e_8$&$e_8$&$e_6$\\
\hline\multicolumn{6}{c}{\hspace{43mm}(B)}
\end{tabular}
\end{center}
\end{table}
}

Based on the $m$ set partitions in \eqref{Eqn_Vt}, let us define
\begin{equation}\label{B3}
    V_{i+sm,t}=V_{i,t},~i\in [0,~m),~s\ge1,~\mbox{and}~t\in [0,~r).
\end{equation}

Further, for any $0\le i_1, i_2< sm$ and $i_1\not \equiv i_2\mbox{\ mod\ } m$, we define $V_{i_1,i_2,t_1,t_2}=V_{i_2,i_1,t_2,t_1}=V_{i_1,t_1}\cap V_{i_2,t_2}$, i.e.,
\begin{IEEEeqnarray*}{rCl}
V_{i_1,i_2,t_1,t_2}&=& V_{i_2,i_1,t_2,t_1}\\&=&\{e_a|a_{i_1}=t_1, ~a_{i_2}=t_2, ~a\in [0,~ r^m)\},
\end{IEEEeqnarray*}
where $0\le t_1,t_2<r$.
Then, we have
\begin{equation}\label{Eqn_B3}
V_{i_1,t_1}=V_{i_1,i_2,t_1,0}\cup\cdots \cup V_{i_1,i_2,t_1,r-1}.
\end{equation}
For the easy of notation,  we also denote by $V_{i_1,t_1}$ and $V_{i_1,i_2,t_1,t_2}$  the $r^{m-1}\times r^m$ and $r^{m-2}\times r^m$ matrices,
 whose rows are formed by vectors $e_i$
in their corresponding sets, respectively,  such that $i$  is sorted in ascending order.  For example, when $r=2$ and $m=3$, $V_{1,0}$ can be viewed as a $4\times 8$ matrix as follows
\begin{equation*}
V_{1,0}=\left(e_0^{\top}~ e_1^{\top} ~e_4^{\top}~ e_5^{\top}\right)^{\top},
\end{equation*}
where $\top$ represents the transpose operator.

\section{A generic transformation}\label{sec generic}

In this section, we present a generic transformation that can convert any MDS code with the optimal repair bandwidth defined in the form of \eqref{Eqn parity check eq} to a new MDS code with longer code length and near-optimal repair bandwidth.

{\textbf{A generic transformation: }The transformation can be performed through the following two steps.}
\\\textbf{Step 1. Choose  an $(n',k')$ MDS code  with the optimal repair bandwidth as the base code}

We choose an $(n',k')$ MDS code  in the form of \eqref{Eqn parity check eq}, with the optimal repair bandwidth  over a finite field containing at least $q'$ elements,  as the base code. Let $N$ denote its sub-packetization level, $r=n'-k'$, and let $(A'_{t,i})_{t\in[0,r), i\in[0,n')}$ denote its parity-check matrix while the $N/r\times N$ matrices $R'_{i,j}$ and $S'_{i,t}$,  $i,j\in[0,n')$ with $j\ne i$, $t\in[0,r)$, respectively denote the repair matrices and select matrices.\\
\textbf{Step 2.  Transform the base code  to the new MDS code}

Through the generic transformation, we intend to design a new $(n=k+r,k)$ MDS code over a certain finite field $\textbf{F}_q$ ($q>q'$) having arbitrary code length $n$ ($n>n'$) while maintaining the same sub-packetization level $N$.

The transition from the base code to the new MDS code is done by designing the   parity-check matrix, the repair matrices, and the select matrices  of the new MDS code from those of the base code
as follows.
\begin{IEEEeqnarray}{rCl}
A_{t,j}&=&x_{t,j}A'_{t,j\%n'},\label{Eqn general coding matrix}\\
R_{i,j}&=&\left\{
                      \begin{array}{ll}
                       R'_{i\% n',j\% n'}, &\mbox{\ \ if\ \ } j\not\equiv i \bmod n', \\
                        I_{N}, & \mbox{\ \ otherwise,\ \ }
                      \end{array}
                    \right.\label{Eqn general R}
\end{IEEEeqnarray}
and
\begin{equation}\label{Eqn general S}
S_{i,t}=S'_{i\% n',t}
\end{equation}
where $x_{t,j}\in \mathbf{F}_q\backslash\{0\}$, $t\in[0,r)$,
$i,j\in[0,n)~\mbox{with}~j\ne i$,
$\%$ denotes the modulo operation,  and $I_{N}$ denotes the identity matrix of order $N$, which will be abbreviated as $I$ in the sequel if its order is clear.

\begin{Remark}
For an $(n',k')$ MDS code defined over a finite field that contains at least $q'$ elements, it  can of course be defined over a larger finite field $\textbf{F}_q$  ($q>q'$). In the above generic transformation, the base code is then assumed to be defined over the same finite field  $\textbf{F}_q$  of the resultant new code.
\end{Remark}

Like many  MDS  codes in the literature,  the  MDS property of the resultant code  can be guaranteed by the  Combinatorial Nullstellensatz in \cite{Alon}.

\begin{Lemma}[Theorem 1.2 of \cite{Alon}]\label{Lemma Comb Null}
Let $\mathbf{F}_q$ be an arbitrary field, and $f=f(x_1,\cdots,x_n)$ be a polynomial in $\mathbf{F}_q[x_1,\cdots,x_n]$. Suppose that the degree  of $f$ is $\sum\limits_{i=1}^{n}t_i$, where each $t_i$ is a nonnegative integer, and the coefficient of $\prod\limits_{i=1}^{n}x_i^{t_i}$ in $f$ is nonzero. Then, if $S_1,\cdots,S_n$ are subsets of $\mathbf{F}_q$ with $|S_i|> t_i$,  there are $s_1\in S_1,\cdots,s_n\in S_n$ so that
 \begin{equation*}
  f(s_1,\cdots,s_n)\ne 0.
 \end{equation*}
\end{Lemma}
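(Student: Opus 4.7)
The plan is to prove the Combinatorial Nullstellensatz by first establishing a vanishing criterion and then reducing the main claim to it through a term-rewriting argument. The preliminary lemma I would prove first is: if $g \in \mathbf{F}_q[x_1,\dots,x_n]$ has degree in $x_i$ strictly less than $|S_i|$ for every $i$, and $g$ vanishes on $S_1 \times \cdots \times S_n$, then $g$ is identically zero. This is the criterion that will eventually conflict with the nonvanishing-coefficient hypothesis of the main statement.

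I would prove the preliminary lemma by induction on $n$. The base case $n=1$ is immediate, since a nonzero univariate polynomial of degree less than $|S_1|$ has fewer than $|S_1|$ roots, contradicting that it vanishes on all of $S_1$. For the inductive step, I would view $g$ as a polynomial in $x_n$ with coefficients in $\mathbf{F}_q[x_1,\dots,x_{n-1}]$. For each fixed $(s_1,\dots,s_{n-1}) \in S_1 \times \cdots \times S_{n-1}$, the resulting polynomial in $x_n$ has degree less than $|S_n|$ and vanishes on $S_n$, so by the base case every coefficient, evaluated at $(s_1,\dots,s_{n-1})$, equals zero. Thus each coefficient polynomial vanishes on $S_1 \times \cdots \times S_{n-1}$ while satisfying the degree restriction, and the induction hypothesis forces it to be the zero polynomial, so $g \equiv 0$.

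For the main theorem, I would argue by contradiction, assuming $f$ vanishes on $S_1 \times \cdots \times S_n$. Set $g_i(x_i) = \prod_{s \in S_i}(x_i - s) = x_i^{|S_i|} - h_i(x_i)$ with $\deg h_i < |S_i|$. On $S_i$ we have $x_i^{|S_i|} = h_i(x_i)$, so iteratively replacing any factor $x_i^{|S_i|}$ occurring in monomials of $f$ by $h_i(x_i)$ yields a polynomial $\widetilde{f}$ that agrees with $f$ pointwise on $S_1 \times \cdots \times S_n$ and has degree strictly less than $|S_i|$ in each variable $x_i$.

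The crux of the argument, and the step where I expect the main care to be needed, is verifying that the coefficient of $\prod_i x_i^{t_i}$ in $\widetilde{f}$ remains nonzero. The key observation is that replacing $x_i^{|S_i|}$ inside a monomial by $h_i(x_i)$ strictly decreases the total degree of that monomial, because $\deg h_i \leq |S_i|-1$. Hence every single reduction produces only monomials of strictly lower total degree than the one reduced. Since $\deg f = \sum_i t_i$ and each $t_i < |S_i|$, the specific monomial $\prod_i x_i^{t_i}$ is never itself subject to reduction, and no reduction of another monomial can contribute to a monomial of this maximum degree. Thus the coefficient of $\prod_i x_i^{t_i}$ is preserved in $\widetilde{f}$ and remains nonzero. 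But $\widetilde{f}$ also vanishes on $S_1 \times \cdots \times S_n$ and satisfies the degree hypothesis of the preliminary lemma, so it must be identically zero, contradicting the nonzero coefficient. Hence $f$ cannot vanish on the entire product, and the desired $(s_1,\dots,s_n)$ exists.
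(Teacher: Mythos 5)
The paper does not prove this lemma; it cites it directly as Theorem~1.2 of Alon's paper, so there is no in-paper proof to compare against. Your proposal is a correct self-contained proof. It proceeds by the standard two-step route: first a grid-vanishing criterion (a multivariate polynomial whose degree in $x_i$ is strictly below $|S_i|$ and which vanishes on $S_1\times\cdots\times S_n$ must be zero), proven by induction on $n$, and then a reduction argument rewriting $x_i^{|S_i|}\mapsto h_i(x_i)$. The one step that needs care---that the coefficient of $\prod_i x_i^{t_i}$ survives the reduction---is handled correctly: since each rewrite strictly lowers total degree and $\sum_i t_i$ is already the top degree of $f$, the target monomial can neither be reduced nor be created by reducing other monomials. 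Worth noting for your own comparison with the literature: Alon's original argument derives the statement from a decomposition $f = \sum_i h_i g_i$ with controlled degrees (his Theorem~1.1), whereas your approach replaces that decomposition with a direct term-rewriting normalization plus the simpler grid-vanishing lemma; both are standard, and yours is arguably more elementary since it avoids stating the decomposition theorem explicitly.
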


\begin{Theorem}\label{Thm general MDS}
The new $(n,k)$ code over $\mathbf{F}_q$ obtained by the generic transformation  can possess the MDS property if
\begin{itemize}
    \item [i)] $q>N{n-1\choose r-1}+1$,\footnote{Note that the field size required for the base code is $\ge q'$, therefore, $q$ should actually satisfy $q\ge \max\{q', N{n-1\choose r-1}+2\}$. However, the smallest field size required for any known explicit MDS code with the optimal repair bandwidth in the literature is far less than $N{n-1\choose r-1}+2$. So, we make an assumption here that $q'<N{n-1\choose r-1}+2$.} and
    \item [ii)] every block matrix $A'_{t,j}$ of the  parity-check matrix $(A'_{t,j})_{t\in[0,r), j\in[0,n')}$ of the base code    is nonsingular.
\end{itemize}
\end{Theorem}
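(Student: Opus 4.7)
My plan is to invoke the Combinatorial Nullstellensatz (Lemma above) on the product of all the $\binom{n}{r}$ determinants that must be nonzero for the MDS property. Recall that the new code has the MDS property iff $\det(M_J)\neq 0$ for every $J\subseteq[0,n)$ with $|J|=r$, where $M_J$ is the $rN\times rN$ block matrix obtained by restricting $(A_{t,i})_{t\in[0,r),\,i\in[0,n)}$ to the block columns indexed by $J$. I would treat the scalars $\{x_{t,j}\}$ as indeterminates, form the polynomial $f=\prod_{J}\det(M_J)\in\mathbf{F}_q[\{x_{t,j}\}]$, and hunt for an assignment in $\mathbf{F}_q\setminus\{0\}$ making $f\neq 0$.

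The central step is to show that each factor $\det(M_J)$ is a nonzero polynomial. Writing $J=\{j_0,\dots,j_{r-1}\}$, I single out the monomial $\prod_{t\in[0,r)}x_{t,j_t}^{N}$ and compute its coefficient in $\det(M_J)$. Since $x_{t,j_s}$ appears only in block $(t,s)$ of $M_J$, the only Leibniz terms contributing to this monomial come from permutations sending block row $t$ entirely into block column $t$; such block-diagonal permutations are parametrized by an independent permutation of $[0,N)$ inside each diagonal block, and the signed sum factors as $\prod_{t\in[0,r)}\det(A'_{t,j_t\%n'})$, which is nonzero by hypothesis (ii).

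Since every factor $\det(M_J)$ is nonzero as a polynomial, their product $f$ is nonzero, hence some monomial $\prod x_{t,j}^{t_{t,j}}$ of total degree $\deg f$ has nonzero coefficient. For each variable I have $t_{t,j}\le\deg_{x_{t,j}}(f)\le N\binom{n-1}{r-1}$, because $\deg_{x_{t,j}}\det(M_J)\le N$ (the scalar $x_{t,j}$ multiplies only the $N$ rows of the $(t,j)$-block, so any Leibniz term uses it at most $N$ times) and $j$ lies in exactly $\binom{n-1}{r-1}$ subsets $J$ of size $r$. Taking $S_{t,j}=\mathbf{F}_q\setminus\{0\}$ of size $q-1$, condition (i) gives $|S_{t,j}|>t_{t,j}$, so the Combinatorial Nullstellensatz yields nonzero values $x_{t,j}\in\mathbf{F}_q\setminus\{0\}$ with $f\neq 0$, i.e., every $\det(M_J)\neq 0$, establishing the MDS property.

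The main obstacle is the coefficient computation in the second step: the signs and the block structure need care. The critical observation is that the chosen monomial $\prod_{t}x_{t,j_t}^{N}$ forces the Leibniz permutation to be block-diagonal (otherwise some block $(t,s)$ with $s\neq t$ is hit, introducing an unwanted $x_{t,j_s}$), after which everything decouples into a product of $r$ genuine determinants of the nonsingular base-code blocks. Once that identity is in hand, the rest of the argument is a routine degree count together with an application of the Combinatorial Nullstellensatz.
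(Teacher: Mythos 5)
Your proof is correct and follows essentially the same route as the paper: reduce the MDS property to nonvanishing of the $\binom{n}{r}$ sub-block determinants, form their product, exhibit a nonzero coefficient of the block-diagonal monomial $\prod_t x_{t,j_t}^{N}$ in each factor, bound the per-variable degree by $N\binom{n-1}{r-1}$, and invoke the Combinatorial Nullstellensatz. The only difference is cosmetic: you obtain the nonzero coefficient by a direct Leibniz-expansion argument that forces a block-diagonal permutation, whereas the paper packages the same fact as a separate Proposition proved by induction via Laplace expansion.
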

\begin{proof}
The proof is given  in  Appendix \ref{sec:Appen1}.
\end{proof}

\begin{Remark}
To the best of our knowledge, there are only four classes of  MDS codes  with the optimal repair bandwidth  that are defined in parity-check matrix form, where the requirement in Theorem \ref{Thm general MDS}-ii)  can  be satisfied for two of them, i.e., the YB code 2 in \cite{Barg1} and    the improved YB code 2  in \cite{YiLiu},  while the remaining codes (i.e., the YB code 1 in \cite{Barg1} and the constructions in \cite{Barg2} and \cite{Sasidharan-Kumar2}) need a minor modification.  As a concrete example, the YB code 1 in \cite{Barg1}   satisfying this requirement will be   illustrated in  Section \ref{sec:C1}.
\end{Remark}

\begin{Theorem}\label{Thm gene band}
Every failed node of the new $(n,k)$ code obtained by the generic transformation can be regenerated by the repair matrices defined in \eqref{Eqn general R}, where the repair bandwidth   for node $i$ ($i\in [0, n)$) is
\begin{equation*}
  \gamma_i = \left\{
                      \begin{array}{ll}
                      (1+\frac{(\lceil\frac{n}{n'}\rceil-1)(r-1)}{n-1})\gamma^*, &\mbox{\ \ if\ \ } 0\le i\% n'<n\%n', \\
                       (1+\frac{(\lfloor\frac{n}{n'}\rfloor-1)(r-1)}{n-1})\gamma^*, & \mbox{\ \ otherwise}.
                      \end{array}
                    \right.
\end{equation*}
\end{Theorem}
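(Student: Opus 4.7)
The plan is to reduce the two repair conditions from Section II-C for the new code back to the repair conditions already satisfied by the base code, and then to count helpers by their residue modulo $n'$.

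The key observation is that, for any $i\in[0,n)$, $t\in[0,r)$ and $j\in[0,n)$, definitions \eqref{Eqn general coding matrix} and \eqref{Eqn general S} give
\[
S_{i,t}A_{t,j}\;=\;x_{t,j}\,S'_{i\%n',\,t}\,A'_{t,\,j\%n'},
\]
with each $x_{t,j}$ a nonzero scalar that only rescales rows. Consequently every rank appearing in \eqref{repair_node_requirement1 n-1} and \eqref{repair_node_requirement3 n-1} for the new code coincides with the rank of the corresponding primed matrices. Condition \eqref{repair_node_requirement1 n-1} for node $i$ therefore collapses to the same condition for node $i\%n'$ of the base code, which holds by assumption. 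For \eqref{repair_node_requirement3 n-1}, I would split on the residue of $j$: if $j\equiv i\bmod n'$ and $j\neq i$, then by \eqref{Eqn general R} we have $R_{i,j}=I_N$ of rank $N$ and the condition is automatic; otherwise $R_{i,j}=R'_{i\%n',\,j\%n'}$ and \eqref{repair_node_requirement3 n-1} reduces verbatim to the analogous condition for the base code with primed indices $i\%n'$ and $j\%n'$. Either way, the base-code repair property delivers the required identity.

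With repairability settled, the bandwidth \eqref{Eqn_RB} naturally splits the helpers of $i$ into two groups according to \eqref{Eqn general R}. Let $i'=i\%n'$ and let $L$ denote the number of $j\in[0,n)$ with $j\equiv i'\bmod n'$. A direct count yields $L=\lceil n/n'\rceil$ when $i'<n\%n'$ and $L=\lfloor n/n'\rfloor$ otherwise. There are $L-1$ helpers $j\neq i$ in the first group, each contributing $\mathrm{rank}(R_{i,j})=N$, while the remaining $n-L$ helpers lie in the second group, each contributing $\mathrm{rank}(R'_{i\%n',\,j\%n'})=N/r$ by the optimality of the base code's repair bandwidth. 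Summing gives $\gamma_i=(L-1)N+(n-L)N/r$, and a short manipulation using $\gamma^{*}=(n-1)N/r$ recasts this as $\gamma^{*}\bigl(1+\tfrac{(r-1)(L-1)}{n-1}\bigr)$, which reproduces the two cases in the statement.

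I do not anticipate any real obstacle; the only bookkeeping is the case split for $L$ and the algebraic step that converts the two helper counts into the stated closed form. Note that the nonsingularity hypothesis of Theorem \ref{Thm general MDS}-ii) plays no role here: the argument uses only that every $x_{t,j}$ is nonzero and that the base code already attains the optimal repair bandwidth with its $R'_{i',j'}$ and $S'_{i',t}$.
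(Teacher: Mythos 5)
Your proposal is correct and follows essentially the same route as the paper: reduce each rank condition \eqref{repair_node_requirement1 n-1} and \eqref{repair_node_requirement3 n-1} back to the base code by noting that the nonzero scalars $x_{t,j}$ do not change ranks, handle the $j\equiv i\bmod n'$ helpers separately via $R_{i,j}=I_N$, then count the two types of helpers modulo $n'$ and simplify. The only cosmetic difference is that you introduce the count $L$ explicitly and write $\gamma_i=(L-1)N+(n-L)N/r$, whereas the paper writes the same sum directly as $(n-1)\tfrac{N}{r}+\tfrac{(r-1)N}{r}(L-1)$; the algebra and the two-case outcome coincide.
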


\begin{proof}
Since the $(n',k')$ base code possesses the optimal repair bandwidth,  by \eqref{repair_node_requirement1 n-1} and \eqref{repair_node_requirement3 n-1}, we have
\begin{equation}\label{Eqn base full rank}
\textrm{rank}(\left(
\begin{array}{c}
S'_{i,0} A'_{0,i}\\
S'_{i,1} A'_{1,i} \\
\vdots \\
S'_{i,r-1}A'_{r-1,i}
\end{array}
\right))=N,~\mbox{for}~i\in [0,n'),
\end{equation}
and
\begin{equation}\label{Eqn base half rank}
\textrm{rank} (\left(
\begin{array}{c}
R'_{i,j} \\
S'_{i,t} A'_{t,j}
\end{array}
\right))
=N/r,~i,j\in [0,n')~\mbox{with}~i\ne j
\end{equation}
for $t\in[0,r)$.

For $i,j\in[0,n)$ with $j\ne i$, we rewrite $i$ and $j$ as $i=un'+i'$ and $j=vn'+j'$ such that $i',j'\in [0,n')$.
Firstly, we verify \eqref{repair_node_requirement1 n-1} for the new code.
By \eqref{Eqn general coding matrix} and \eqref{Eqn general S},
\begin{IEEEeqnarray*}{rCl}
&&\textrm{rank}(\left(
\begin{array}{c}
S_{i,0} A_{0,i}\\
S_{i,1} A_{1,i} \\
\vdots \\
S_{i,r-1} A_{r-1,i}
\end{array}
\right))\\&=&\textrm{rank}(\left(
\begin{array}{c}
S'_{i',0} A'_{0,i'}\\
S'_{i',1} A'_{1,i'} \\
\vdots \\
S'_{i',r-1} A'_{r-1,i'}
\end{array}
\right))\\&=&N,
\end{IEEEeqnarray*}
where the last equality follows from \eqref{Eqn base full rank}.

Next, we check \eqref{repair_node_requirement3 n-1}   for the new code.
When $i'\ne j'$,
\begin{IEEEeqnarray}{rCl}\label{Eqn_ge_RB1}
\nonumber\textrm{rank} (\left(
\begin{array}{c}
R_{i,j} \\
S_{i,t} A_{t,j}
\end{array}
\right))&=&
\textrm{rank} (\left(
\begin{array}{c}
R'_{i',j'} \\
S'_{i',t}A'_{t,j'}
\end{array}
\right))\\
\nonumber&=&N/r\\&=&\textrm{rank}
(R_{i,j})
, ~t\in[0,r),
\end{IEEEeqnarray}
where  the second and third equalities follows from \eqref{Eqn base half rank} and  \eqref{Eqn general R}, respectively.
When  $i'=j'$,  similarly, we have
\begin{IEEEeqnarray}{rCl}\label{Eqn_ge_RB2}
\nonumber\textrm{rank} (\left(
\begin{array}{c}
R_{i,j} \\
S_{i,t} A_{t,j}
\end{array}
\right))&=&
\textrm{rank} (\left(
\begin{array}{c}
 I \\
S'_{i',t} A'_{t,j'}
\end{array}
\right))\\
\nonumber&=&N\\
&=&\textrm{rank}(R_{i,j}), ~t\in[0,r).
\end{IEEEeqnarray}

Therefore, according to \eqref{Eqn_RB}, \eqref{Eqn_ge_RB1}, and \eqref{Eqn_ge_RB2}, the repair bandwidth of node $i$ is
\begin{IEEEeqnarray*}{rCl}
\gamma_i &=& \sum\limits_{j=0,j\ne i}^{n-1} \mbox{rank}(R_{i,j})\\&=&(n-1)\frac{N}{r}\\&&+\frac{(r-1)N}{r}|\{j:j\in [0, n)\backslash\{i\},~ j\equiv i \bmod n'\}|\\&=&
  \left\{
                      \begin{array}{ll}
                      (1+\frac{(\lceil\frac{n}{n'}\rceil-1)(r-1)}{n-1})\gamma^*, &\mbox{\ \ if\ \ } 0\le i\% n'<n\%n', \\
                       (1+\frac{(\lfloor\frac{n}{n'}\rfloor-1)(r-1)}{n-1})\gamma^*, & \mbox{\ \ otherwise},
                      \end{array}
                    \right.
\end{IEEEeqnarray*}
where  $\gamma^*=(n-1)\frac{N}{r}$ is the optimal value for the repair bandwidth. This finishes the proof.
\end{proof}

\begin{Remark}
In fact, any $(n',k')$ MDS code without the optimal repair bandwidth can also be chosen as the base code in the generic transformation. Its repair bandwidth is  $(n'-1)\beta$, i.e., a failed node can be regenerated by downloading an amount of $\beta$ symbols from  each surviving node.  Then the repair bandwidth of the resultant MDS code would be  upper bounded by $(1+{(\lceil\frac{n}{n'}\rceil-1)(N/\beta-1)\over (n-1)})(n-1)\beta$ according to   a similar analysis as the proof of Theorem \ref{Thm gene band}.
\end{Remark}

\section{MDS code constructions by directly applying the generic transformation}

In this section,  by directly applying the generic transformation in Section \ref{sec generic} respectively to the  $(n', k')$ YB codes 1 and 2  in \cite{Barg1},  the $(n', k')$  improved  YB code 2  in \cite{YiLiu},  and the counterpart of the long MSR code  \cite{Long_IT} in the parity-check form,  we get four MDS codes with small sub-packetization level.

\subsection{An $(n,k)$ MDS code $\mathcal{C}_1$ by applying the generic transformation  to  the  YB code 1 in \cite{Barg1}}\label{sec:C1}

The $(n', k')$ YB code 1 was defined in \cite{Barg1} in the form of \eqref{Eqn parity check eq} and \eqref{Eqn A power},  with the optimal update property and the sub-packetization level being $N=r^{n'}$ where $r=n'-k'$. More precisely, the  parity-check matrix  $(A'_{t,i})_{t\in[0,r), i\in[0,n')}$ of the $(n', k')$ YB code 1  satisfies $A'_{t,i}=(A'_i)^t$ and
\begin{equation}\label{Eqn_YB_code}
\left(
                     \begin{array}{c}
                       V_{i,0} \\
                       V_{i,1} \\
                       \vdots \\
                       V_{i,r-1}
                     \end{array}
                   \right)A'_i=\left(
                     \begin{array}{c}
                      \lambda_{i,0} V_{i,0} \\
                      \lambda_{i,1} V_{i,1} \\
                       \vdots \\
                      \lambda_{i,r-1} V_{i,r-1}
                     \end{array}
                   \right),
\end{equation}
where  $V_{i,0},V_{i,1},\cdots,V_{i,r-1}$ are defined in \eqref{Eqn_Vt}, $\{\lambda_{i,t}\}_{i\in[0, n'),t\in[0,r)}$ are $rn'$ distinct elements in a finite field containing at least $rn'$ elements,
the repair matrices and select matrices are defined by
\begin{equation*}\label{Eqn Hadmard RS}
R'_{i,j}=S'_{i,t}=V_{i,0}+V_{i,1}+\cdots+V_{i,r-1}
\end{equation*}
for $i,j\in[0,n')$ with $j\ne i$, $t\in[0,r)$.

From \eqref{Eqn_YB_code}, it is obvious that $A'_{i}$ is nonsingular if and only if $\{\lambda_{i,t}\}_{t\in[0,r)}$ are $r$ nonzero elements. In order to meet Theorem \ref{Thm general MDS}-ii), i.e.,  in order for matrices in \eqref{Eqn_YB_code} to be invertible,   we can add a restriction that $\{\lambda_{i,t}\}_{i\in[0, n'),t\in[0,r)}$ are $rn'$ nonzero elements when applying the generic transformation to YB code 1. Accordingly, the requirement of the  field size $q$ of YB code 1 is then only increased from $q\ge rn'$ to $q\ge rn'+1$, which can be easily satisfied as the resultant new code will be defined over a finite field with size larger than $rn'$.

\begin{Theorem}\label{Thm C1}
By choosing the $(n', k')$ YB code 1 as the base code for  the generic transformation in Section \ref{sec generic}, an $(n,k)$ MDS code $\mathcal{C}_1$  over $\mathbf{F}_q$ with
 $k=n-r$ and $q>N{n-1\choose r-1}+1$ can be obtained. Specifically, the sub-packetization level of the MDS code $\mathcal{C}_1$ is  $r^{n'}$ while its repair bandwidth  for node $i$ ($i\in [0, n)$) is
\begin{equation*}
  \gamma_i = \left\{
                      \begin{array}{ll}
                      (1+\frac{(\lceil\frac{n}{n'}\rceil-1)(r-1)}{n-1})\gamma^*, &\mbox{\ \ if\ \ } 0\le i\% n'<n\%n', \\
                       (1+\frac{(\lfloor\frac{n}{n'}\rfloor-1)(r-1)}{n-1})\gamma^*, & \mbox{\ \ otherwise}.
                      \end{array}
                    \right.
\end{equation*}
\end{Theorem}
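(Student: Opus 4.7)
The plan is to view this theorem as an essentially immediate application of the two main results of Section~\ref{sec generic}, namely Theorem~\ref{Thm general MDS} (MDS property) and Theorem~\ref{Thm gene band} (repair bandwidth), once it is verified that the YB code~1 can be made to satisfy the hypotheses of Theorem~\ref{Thm general MDS}-ii). So the work splits into three parts: check the nonsingularity of the base-code blocks, invoke the two theorems, and read off the claimed parameters.

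First I would check Theorem~\ref{Thm general MDS}-ii) for the YB code~1. Since $A'_{t,i} = (A'_i)^t$, it is enough to show that each $A'_i$ is nonsingular, because then every nonnegative power of it is nonsingular as well. From \eqref{Eqn_YB_code}, the matrix on the left of $A'_i$ is obtained by stacking the subsets $V_{i,0},\ldots,V_{i,r-1}$ of a basis, hence it is a permutation matrix and in particular invertible. Therefore $A'_i$ is similar to the block-diagonal matrix with scalar blocks $\lambda_{i,t} I$, and so $A'_i$ is invertible if and only if $\lambda_{i,0},\ldots,\lambda_{i,r-1}$ are all nonzero. This is the restriction discussed just before the theorem: by requiring the $\{\lambda_{i,t}\}_{i\in[0,n'),t\in[0,r)}$ to be $rn'$ \emph{nonzero} distinct elements, the field-size requirement of the base code grows only from $rn'$ to $rn'+1$, which is harmless since the resultant code will anyway live over $\mathbf{F}_q$ with $q > N\binom{n-1}{r-1}+1 \gg rn'+1$.

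Next, with both hypotheses of Theorem~\ref{Thm general MDS} in force, i.e., $q > N\binom{n-1}{r-1}+1$ and every $A'_{t,j}$ nonsingular, the theorem directly yields that the new $(n,k)$ code $\mathcal{C}_1$ produced by the generic transformation has the MDS property. The sub-packetization level is preserved by Step~2 of the transformation (the block entries in \eqref{Eqn general coding matrix} are scalar multiples of the base-code blocks), so $N = r^{n'}$ is inherited directly from the YB code~1. The repair-bandwidth expression then follows verbatim from Theorem~\ref{Thm gene band}, applied with $n'$ being the length of the YB code~1; no further computation is needed beyond identifying $\gamma^* = (n-1)N/r$ and plugging into the case split on $i\%n'$ versus $n\%n'$.

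I do not expect a serious obstacle here: the delicate ingredient, namely establishing the MDS property under a moderate field-size assumption, is already packaged in Theorem~\ref{Thm general MDS} and its Appendix~\ref{sec:Appen1} proof via the Combinatorial Nullstellensatz (Lemma~\ref{Lemma Comb Null}), and the repair analysis is fully subsumed by Theorem~\ref{Thm gene band}. The only point that genuinely requires a brief argument specific to this construction is the invertibility of each $A'_i$, which follows from the permutation-matrix observation above. Thus the proof reduces to a short verification plus two citations.
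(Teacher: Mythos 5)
Your proposal is correct and follows essentially the same route as the paper: the paper states Theorem~\ref{Thm C1} without a separate proof, relying on the remark immediately preceding it (that $A'_i$ is nonsingular iff all $\lambda_{i,t}$ are nonzero, so requiring the $rn'$ scalars to be nonzero enforces Theorem~\ref{Thm general MDS}-ii)) and then on Theorems~\ref{Thm general MDS} and~\ref{Thm gene band}, exactly as you do. Your permutation-matrix observation is a tidy way of making explicit what the paper calls ``obvious'' from \eqref{Eqn_YB_code}, and the rest is the same two citations.
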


For the MDS code $\mathcal{C}_1$ directly obtained by the generic transformation, the required field size is  relatively large and the construction is implicit. In the following, through a concrete assignment of the coefficients $x_{t,j}$, $t\in [0, r)$ and $j\in [0, n)$ in \eqref{Eqn general coding matrix}, we provide a solution to determine  the exact field size of the MDS code  $\mathcal{C}_1$,    which is much smaller than $N{n-1\choose r-1}+2$.

\begin{Theorem}\label{Thm MDS C1 explicit}
The field size $q$ of the  $(n,k)$ MDS code $\mathcal{C}_1$ can be reduced to
\begin{equation}\label{Eqn_Thm4_q}
q >\hspace{-1mm}\left\{\hspace{-2mm}
                      \begin{array}{ll}
                       rn'(\lceil\frac{n}{rn'}\rceil\hspace{-.5mm}-\hspace{-.5mm}1)\hspace{-.5mm}+\hspace{-.5mm}r(n\%n'), &\hspace{-2mm} \mbox{if~}0\hspace{-.5mm}<\hspace{-.5mm}n\% (rn')<n',\\
                       rn'\lceil\frac{n}{rn'}\rceil, &\hspace{-2mm}\mbox{otherwise},
                      \end{array}
                    \right.
\end{equation}
with $r\mid(q-1)$ by setting
\begin{equation}\label{Eqn C1 la assi}
\lambda_{i',t}=\delta^{t}c^{i'}
\end{equation} in \eqref{Eqn_YB_code} and
\begin{equation}\label{Eqn C1 x assi}
x_{t,i}=x_{i}^t=(c^{zn'}\delta^v)^t
\end{equation}
in \eqref{Eqn general coding matrix}
for $t\in [0,r)$, $i=zrn'+vn'+i'\in [0,n)$, $z\in [0, \lceil\frac{n}{rn'}\rceil)$, $v\in [0, r)$, and $i'\in [0, n')$, where $c$ is a primitive element of the finite field $\mathbf{F}_q$ and $\delta=c^{\frac{q-1}{r}}$, i.e.,  a primitive $r$-th root of unity  in the finite field $\mathbf{F}_q$.
\end{Theorem}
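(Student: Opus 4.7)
My plan is to establish the MDS property of $\mathcal{C}_1$ directly at the level of $r\times r$ sub-blocks, bypassing the loose Combinatorial Nullstellensatz bound that was used in Theorem \ref{Thm general MDS}. The starting observation is that \eqref{Eqn_YB_code} forces each $A'_{i'}$ to be diagonal in the standard basis, with $(a,a)$-entry $\lambda_{i', a_{i'}}$; hence each transformed block $A_{t,j} = x_{t,j}(A'_{j\%n'})^t = (x_j A'_{j\%n'})^t$ is also diagonal. Consequently, for any $r$ distinct indices $j_1,\ldots,j_r \in [0,n)$, a simultaneous row/column permutation turns the $r\times r$ sub-block $(A_{t,j_s})_{t \in [0,r),\, s \in [1,r]}$ into a block-diagonal matrix whose $a$-th block, for $a \in [0, r^{n'})$, is the Vandermonde matrix
\[
M^{(a)} = \bigl((x_{j_s}\lambda_{j_s\%n',\, a_{j_s\%n'}})^t\bigr)_{t\in[0,r),\, s\in[1,r]}.
\]
Thus the MDS property reduces to the pointwise distinctness $y_i(a)\neq y_j(a)$ for every $a$ and every pair of distinct indices $i,j\in[0,n)$, where $y_i(a) := x_i\lambda_{i\%n',\, a_{i\%n'}}$. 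Writing $i = zrn' + vn' + i'$ and using $\delta = c^{(q-1)/r}$, the assignments \eqref{Eqn C1 la assi}--\eqref{Eqn C1 x assi} give $y_i(a) = c^{zn'+i'}\delta^{v + a_{i'}}$, so the distinctness question becomes a congruence condition modulo $q-1$.

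I would then split the analysis into two cases. In Case A ($i\%n' = j\%n'$), the $\lambda$ factors cancel and a collision requires $(z-z')n' + (v-v')(q-1)/r \equiv 0 \pmod{q-1}$; multiplying by $r$ shows this can happen only if $r(z-z')n'$ is a multiple of $q-1$, which is excluded as soon as $r(Z-1)n' < q-1$ with $Z=\lceil n/(rn')\rceil$ (the sub-case $z=z'$, $v\neq v'$ is automatic because $0<|v-v'|<r$). In Case B ($i\%n' \neq j\%n'$), since $a_{i\%n'}$ and $a_{j\%n'}$ range independently over $[0,r)$, the exponent of $\delta$ hits every residue modulo $r$, and the collision condition collapses to the integer $(z-z')n' + (i\%n' - j\%n')$ being a multiple of $(q-1)/r$. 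As this integer is nonzero (because $n'\nmid(i\%n'-j\%n')$ while $(z-z')n'$ is a multiple of $n'$), it suffices to force $\left|(z-z')n' + (i\%n' - j\%n')\right| < (q-1)/r$. A quick comparison shows that Case B is uniformly tighter than Case A, so only Case B determines the final field-size bound.

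The delicate step will be computing the maximum of $|(z-z')n'+(i\%n'-j\%n')|$ under the asymmetric restriction that, when $z$ attains its extreme value $Z-1$, the pair $(v, i\%n')$ is confined to the range dictated by $n\%(rn')$. When $0 < n\%(rn') < n'$, the partial last chunk forces $v=0$ and $i'\in[0, n\%n')$ at $z=Z-1$, and a direct enumeration gives the maximum $(Z-1)n' + n\%n' - 1$, which together with the divisibility $r\mid(q-1)$ (needed for $\delta$ to exist) yields the first branch $q > rn'(Z-1) + r(n\%n')$ of \eqref{Eqn_Thm4_q}. In every other case the full range $i'\in[0,n')$ remains accessible even at $z=Z-1$, the maximum becomes $Zn' - 1$, and the second branch $q > rn'\lceil n/(rn')\rceil$ results. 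Once these bounds are in place, every Vandermonde block $M^{(a)}$ is nonsingular and the MDS property of $\mathcal{C}_1$ follows.
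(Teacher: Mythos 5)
Your proposal is correct and follows essentially the same route as the paper's proof: both exploit the diagonal structure of the $A_i$ to reduce the MDS property to pointwise distinctness of the diagonal entries $y_i(a)=x_i\lambda_{i\%n',a_{i\%n'}}$, split into the cases $i\equiv j\pmod{n'}$ and $i\not\equiv j\pmod{n'}$, use $\delta^r=1$ to eliminate $\delta$, and bound the resulting integer exponent. You inline the Vandermonde block-diagonalization instead of invoking Lemma~\ref{Lemma pre MDS}, and your ``uniformly tighter'' claim for Case~B should really be ``at least as tight'' (the two cases coincide when $n\%(rn')=1$), but the casework and the final exponent bounds agree with the paper's.
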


\begin{proof} Obviously, we only need to verify the  MDS property of the  code $\mathcal{C}_1$.
Note  from \eqref{Eqn C1 x assi} that $\mathcal{C}_1$ is  defined in the form of    \eqref{Eqn parity check eq} and \eqref{Eqn A power}, i.e.,
\begin{equation}\label{Eqn coding matrix C1}
A_{t,i}=x_{t,i}A'_{t,i'}=(c^{zn'}\delta^v A'_{i'})^t=A_i^t
\end{equation}
for $i=zrn'+vn'+i'$ and the matrix $A_i\triangleq c^{zn'}\delta^v A'_{i'}$.
Then, by Lemma \ref{Lemma pre MDS},   the  code $\mathcal{C}_1$    possesses the MDS property if $A_iA_j=A_jA_i$ and $A_i-A_j$ is nonsingular for all $i,j\in [0,n)$ with $i\ne j$.

Firstly, from \eqref{Eqn_YB_code} and \eqref{Eqn coding matrix C1}, it is seen that $A_i$ is diagonal for $i\in[0,n)$, then $A_iA_j=A_jA_i$ holds for any $i,j\in [0,n)$ with $i\ne j$.

Secondly, we show that $A_i-A_j$ is nonsingular for all $i,j\in [0,n)$ with $i\ne j$.
Let $i=z_0rn'+v_0n'+i'$ and $j=z_1rn'+v_1n'+j'$, where $i\ne j$, $z_0, z_1\in [0, \lceil\frac{n}{rn'}\rceil)$, $v_0, v_1\in [0, r)$, and $i', j'\in [0, n')$.

If $j\not\equiv i \bmod n'$, i.e., $i'\ne j'$,  then
{\small
\begin{IEEEeqnarray*}{rCl}
\hspace{0mm}\nonumber &&\hspace{0mm}\mbox{rank}(A_i-A_j)\\
\hspace{0mm}\nonumber &=&\hspace{0mm}\mbox{rank}(c^{z_0n'}\delta^{v_0}A'_{i'}-c^{z_1n'}\delta^{v_1}A'_{j'})\\
\hspace{0mm}\nonumber &=&\hspace{0mm}\mbox{rank}(\left(
                     \begin{array}{c}
                       V_{i',0} \\
                       \vdots \\
                       V_{i',r-1}
                     \end{array}
\right)(c^{z_0n'}\delta^{v_0}A'_{i'}-c^{z_1n'}\delta^{v_1}A'_{j'}))\\
\hspace{0mm}\nonumber &=&\hspace{0mm}\mbox{rank}(\left(
                     \begin{array}{c}
 V_{i',j',0,0}(c^{z_0n'}\delta^{v_0}A'_{i'}-c^{z_1n'}\delta^{v_1}A'_{j'}) \\
                       \vdots\\
 V_{i',j',0,r-1}(c^{z_0n'}\delta^{v_0}A'_{i'}-c^{z_1n'}\delta^{v_1}A'_{j'}) \\
                        \vdots\\
 V_{i',j',r-1,0}(c^{z_0n'}\delta^{v_0}A'_{i'}-c^{z_1n'}\delta^{v_1}A'_{j'}) \\
                       \vdots \\
 V_{i',j',r-1,r-1}(c^{z_0n'}\delta^{v_0}A'_{i'}-c^{z_1n'}\delta^{v_1}A'_{j'})
                     \end{array}
\right))\\
\hspace{0mm}\nonumber&=&\hspace{0mm}\mbox{rank}(\left(\hspace{-1mm}
                     \begin{array}{c}
 (c^{z_0n'}\delta^{v_0}\lambda_{i',0}-c^{z_1n'}\delta^{v_1}\lambda_{j',0}) V_{i',j',0,0} \\
                       \vdots\\
 (c^{z_0n'}\delta^{v_0}\lambda_{i',0}-c^{z_1n'}\delta^{v_1}\lambda_{j',r-1})V_{i',j',0,r-1} \\
                        \vdots\\
 (c^{z_0n'}\delta^{v_0}\lambda_{i',r-1}-c^{z_1n'}\delta^{v_1}\lambda_{j',0}) V_{i',j',r-1,0} \\
                       \vdots \\
 (c^{z_0n'}\delta^{v_0}\lambda_{i',r-1}-c^{z_1n'}\delta^{v_1}\lambda_{j',r-1})V_{i',j',r-1,r-1}
                     \end{array}
\hspace{-1mm}\right))\\
\hspace{0mm}\nonumber&=&\hspace{0mm}\mbox{rank}(\left(\hspace{-1mm}
                     \begin{array}{c}
                      (\delta^{v_0}c^{z_0n'+i'}-\delta^{v_1}c^{z_1n'+j'}) V_{i',j',0,0} \\
                       \vdots\\
                        (\delta^{v_0}c^{z_0n'+i'}-\delta^{v_1+r-1}c^{z_1n'+j'})V_{i',j',0,r-1} \\
                        \vdots\\
                      (\delta^{v_0+r-1}c^{z_0n'+i'}-\delta^{v_1}c^{z_1n'+j'}) V_{i',j',r-1,0} \\
                       \vdots \\
                       (\delta^{v_0+r-1}c^{z_0n'+i'}-\delta^{v_1+r-1}c^{z_1n'+j'})V_{i',j',r-1,r-1}
                     \end{array}
\hspace{-1mm}\right))
\end{IEEEeqnarray*}
}
where the first, third, fourth, and fifth equalities  follow from  \eqref{Eqn coding matrix C1},  \eqref{Eqn_B3}, \eqref{Eqn_YB_code}, and \eqref{Eqn C1 la assi}, respectively.
Thus, $\mbox{rank}(A_i-A_j)=N$ if and only if
\begin{equation}\label{Eqn C1 inequality equi}
\delta^{v_0+t_0-v_1-t_1}\ne c^{(z_1-z_0)n'+j'-i'} \mbox{~for~all~}t_0, t_1\in[0,r).
\end{equation}
Note that \eqref{Eqn C1 inequality equi} always holds, otherwise,
$$\delta^{v_0+t_0-v_1-t_1}= c^{(z_1-z_0)n'+j'-i'}$$ for  some $t_0, t_1\in[0,r)$.
Raising   both sides  to the power of $r$, by $\delta^r=1$ one then gets
\begin{equation}\label{Eqn_contradiction}
1=\delta^{r(v_0+t_0-v_1-t_1)}= c^{r\left((z_1-z_0)n'+j'-i'\right)}.
\end{equation}
In the following, we prove that \eqref{Eqn_contradiction} does not hold, i.e., $$0<|r\left((z_1-z_0)n'+j'-i'\right)|<q-1.$$

Clearly,
\begin{equation*}
0<|r\left((z_1-z_0)n'+j'-i'\right)|\le W
\end{equation*}
where $W= zrn'+rw$, $z=\lceil\frac{n}{rn'}\rceil-1$, $w=-1$ if $n\%(rn')=1$ (in this case $zrn'+w=n-2$ due to $j'-i'\ne 0$), $w=n\%n'-1$ if $1<n\%(rn')< n'$ (in this case $zrn'+w=n-1$), and  $w=n'-1$ else (in this case $zrn'+w< n-1$ unless $n\%(rn')=n'$) , i.e.,
\begin{equation*}
W\hspace{-1mm}=\hspace{-1mm}\left\{\hspace{-1mm}\begin{array}{ll}
rn'(\lceil\frac{n}{rn'}\rceil-1)-r,&\hspace{-1mm} \mbox{if}~n\%(rn')=1\\
rn'(\lceil\frac{n}{rn'}\rceil-1)+r(n\%n')-r, &\hspace{-1mm}\mbox{if}~1<n\%(rn')< n'\\
rn'\lceil\frac{n}{rn'}\rceil-r, &\hspace{-1mm} \mbox{else}
\end{array}\right.
\end{equation*}
which together with $r\mid(q-1)$ implies that \eqref{Eqn_contradiction} does not hold when \eqref{Eqn_Thm4_q} is
satisfied.

If $j\equiv i \bmod n'$, i.e., $i'=j'$,  then
\begin{IEEEeqnarray*}{rCl}
&&\mbox{rank}(A_i-A_j)\\&=&\mbox{rank}(c^{z_0n'}\delta^{v_0}A'_{i'}-c^{z_1n'}\delta^{v_1}A'_{j'})\\
\nonumber &=&\mbox{rank}((c^{z_0n'}\delta^{v_0}-c^{z_1n'}\delta^{v_1})A'_{i'}),
\end{IEEEeqnarray*}
therefore, $A_i-A_j$ is nonsingular if and only if
\begin{IEEEeqnarray}{rCl}\label{Eqn_C1_inequality2}
\nonumber&&c^{z_0n'}\delta^{v_0}-c^{z_1n'}\delta^{v_1}\\
\nonumber&=&c^{z_1n'+\frac{q-1}{r}v_1}\left(c^{(z_0-z_1)n'+\frac{q-1}{r}(v_0-v_1)}-1\right)\\
&\ne& 0
\end{IEEEeqnarray}
since $A'_{i'}$ is nonsingular. Note that $z_0,z_1\in [0, \lceil\frac{n}{rn'}\rceil)$, $v_0,v_1\in [0, r)$, and $(z_0, v_0)\ne (z_1, v_1)$ according to $i'=j'$ and $i\ne j$, then we have
\begin{IEEEeqnarray*}{rCl}
 0&<&|(z_0-z_1)n'+\frac{q-1}{r}(v_0-v_1)|\\&\le &\left(\lceil\frac{n}{rn'}\rceil-1\right)n'+\frac{q-1}{r}(r-1),
\end{IEEEeqnarray*}
thus \eqref{Eqn_C1_inequality2} holds if $q-1>\left(\lceil\frac{n}{rn'}\rceil-1\right)n'+\frac{q-1}{r}(r-1)$, i.e., $q>\left(\lceil\frac{n}{rn'}\rceil-1\right)rn'+r$ by combining $r\mid (q-1)$.

This finishes the proof   after combining the above analysis.
\end{proof}

In the following, we give a concrete example of the MDS code $\mathcal{C}_1$ according to Theorem \ref{Thm MDS C1 explicit}.

\begin{Example}\label{Ex1}
Let $n'=3$, $r=2$ and $n=12$, then the parity-check matrix of the $(12,10)$ MDS code $\mathcal{C}_1$ over $\mathbf{F}_{13}$ is defined through
\begin{IEEEeqnarray*}{rCl}
&&A_0=\left(
      \begin{array}{c}
        e_0 \\
        e_1 \\
        e_2 \\
        e_3 \\
        \delta e_4 \\
        \delta e_5 \\
        \delta e_6 \\
        \delta e_7
      \end{array}
    \right),~A_1=\left(
      \begin{array}{c}
        c e_0 \\
        c e_1 \\
        \delta c e_2 \\
        \delta c e_3 \\
        c e_4 \\
        c e_5 \\
        \delta c e_6 \\
        \delta c e_7
      \end{array}
    \right),~A_2=\left(
      \begin{array}{c}
        c^2 e_0 \\
        \delta c^2 e_1 \\
        c^2 e_2 \\
        \delta c^2 e_3 \\
        c^2 e_4 \\
        \delta c^2 e_5 \\
        c^2 e_6 \\
        \delta c^2 e_7
      \end{array}
    \right),\\[6pt]&&A_3=\left(
      \begin{array}{c}
         \delta e_0 \\
         \delta e_1 \\
         \delta e_2 \\
         \delta e_3 \\
          e_4 \\
         e_5 \\
         e_6 \\
         e_7
      \end{array}
    \right),~A_4=\left(
      \begin{array}{c}
        \delta c e_0 \\
        \delta c e_1 \\
        c e_2 \\
        c e_3 \\
        \delta c e_4 \\
        \delta c e_5 \\
        c e_6 \\
        c e_7
      \end{array}
    \right),~A_5=\left(
      \begin{array}{c}
        \delta c^2e_0 \\
        c^2 e_1 \\
        \delta c^2e_2 \\
        c^2 e_3 \\
        \delta c^2e_4 \\
        c^2 e_5 \\
        \delta c^2e_6 \\
        c^2 e_7
      \end{array}
    \right),\\[6pt]
&&A_6=\left(
      \begin{array}{c}
      c^3 e_0 \\
      c^3   e_1 \\
    c^3     e_2 \\
    c^3     e_3 \\
        \delta c^3 e_4 \\
        \delta c^3 e_5 \\
        \delta c^3 e_6 \\
        \delta c^3 e_7
      \end{array}
\right),~A_7=\left(
      \begin{array}{c}
        c^4 e_0 \\
        c^4 e_1 \\
        \delta c^4 e_2 \\
        \delta c^4 e_3 \\
        c^4 e_4 \\
        c^4 e_5 \\
        \delta c^4 e_6 \\
        \delta c^4 e_7
      \end{array}
\right),~A_{8}=\left(\hspace{-.5mm}
      \begin{array}{c}
         c^5 e_0 \\
        \delta c^5 e_1 \\
        c^5 e_2 \\
        \delta c^5 e_3 \\
        c^5 e_4 \\
        \delta c^5 e_5 \\
        c^5 e_6 \\
        \delta c^5 e_7
      \end{array}
   \hspace{-.5mm}\right),\\[6pt]&&A_{9}=\left(\hspace{-.5mm}
      \begin{array}{c}
         \delta c^3 e_0 \\
         \delta c^3 e_1 \\
         \delta c^3 e_2 \\
         \delta  c^3 e_3 \\
        c^3 e_4 \\
        c^3 e_5 \\
        c^3 e_6 \\
        c^3 e_7
      \end{array}
\hspace{-.5mm}\right),~A_{10}=\left(\hspace{-.5mm}
      \begin{array}{c}
         \delta c^4 e_0 \\
        \delta c^4 e_1 \\
        c^4 e_2 \\
        c^4 e_3 \\
        \delta c^4 e_4 \\
        \delta c^4 e_5 \\
        c^4 e_6 \\
        c^4 e_7
      \end{array}\hspace{-.5mm}\right),~A_{11}=\left(\hspace{-.5mm}
      \begin{array}{c}
       \delta c^5e_0 \\
        c^5 e_1 \\
        \delta c^5e_2 \\
        c^5 e_3 \\
        \delta c^5e_4 \\
        c^5 e_5 \\
        \delta c^5e_6 \\
        c^5 e_7
      \end{array}\hspace{-.5mm}\right),
\end{IEEEeqnarray*}
where $c=2$ and $\delta=c^6=-1$.

To save space, we only give the repair matrices and select matrices of node 0, which are
\begin{equation*}
  R_{0,j}=\left\{
                      \begin{array}{ll}

                        I, &\mbox{\ \ if\ \ } j=3,6,9, \\
                        \left(
                          \begin{array}{c}
                            e_0+e_4 \\
                            e_1+e_5 \\
                             e_2+e_6 \\
                             e_3+e_7
                          \end{array}
                        \right), & \mbox{\ \ otherwise,\ \ }
                      \end{array}
                    \right.
\end{equation*}
and \begin{equation*}
      S_{0,0} =S_{0,1}= \left(
                          \begin{array}{c}
                            e_0+e_4 \\
                            e_1+e_5 \\
                             e_2+e_6 \\
                             e_3+e_7
                          \end{array}
                        \right).
    \end{equation*}
\end{Example}

\begin{Theorem}\label{C1-update}
The MDS code   $\mathcal{C}_1$ has the optimal update property.
\end{Theorem}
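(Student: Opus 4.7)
The plan is to invoke the Ye--Barg characterization recalled at the end of Section~II-C: an $(n,k)$ MDS code written in the form of \eqref{Eqn parity check eq} and \eqref{Eqn A power} has the optimal update property whenever every block $A_{t,i}$ of its parity-check matrix is diagonal. Thus the whole task reduces to verifying that, under the explicit assignment \eqref{Eqn C1 la assi}--\eqref{Eqn C1 x assi}, each $A_{t,i}$ in $\mathcal{C}_1$ is diagonal.

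First, I would revisit \eqref{Eqn_YB_code}. Since $\{V_{i,0},\dots,V_{i,r-1}\}$ is a partition of the standard basis $\{e_0,\dots,e_{N-1}\}$, the left-hand side of \eqref{Eqn_YB_code} is just a row permutation of $A'_{i'}$, while the right-hand side scales each standard basis vector $e_a$ by $\lambda_{i',a_{i'}}$. Consequently $A'_{i'}e_a=\lambda_{i',a_{i'}}e_a$ for every $a\in[0,N)$, so $A'_{i'}$ is diagonal in the standard basis. This is the base-code ingredient we need.

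Next, by \eqref{Eqn coding matrix C1}, the constituent matrix of $\mathcal{C}_1$ at node $i=zrn'+vn'+i'$ is $A_i=c^{zn'}\delta^{v}A'_{i'}$. A nonzero scalar multiple of a diagonal matrix is diagonal, so $A_i$ is diagonal. Hence $A_{t,i}=A_i^{t}$ is diagonal for every $t\in[0,r)$ and every $i\in[0,n)$, since any power of a diagonal matrix remains diagonal.

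Combining these observations with the Ye--Barg criterion immediately yields that $\mathcal{C}_1$ has the optimal update property. There is no real obstacle in this proof: the only subtlety worth writing out carefully is the first step, i.e., that \eqref{Eqn_YB_code} together with the standard-basis choice of $V_{i,t}$ forces $A'_{i'}$ to be diagonal rather than merely permutation-similar to a diagonal matrix, and this follows because the row permutation on the left is the same permutation that carries the ordering of $\{e_a\}$ to the block ordering on the right.
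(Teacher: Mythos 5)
Your proposal is correct and follows the same route as the paper: both rest on the Ye--Barg criterion that a code in the form of \eqref{Eqn parity check eq} and \eqref{Eqn A power} has the optimal update property when every $A_{t,i}$ is diagonal. The only difference is that the paper simply asserts that the block matrices of $\mathcal{C}_1$ are diagonal, whereas you spell out the (correct) verification chain: \eqref{Eqn_YB_code} forces each standard basis vector to be an eigenvector of $A'_{i'}$, so $A'_{i'}$ is diagonal; $A_i=c^{zn'}\delta^{v}A'_{i'}$ is a nonzero scalar multiple and hence diagonal; and $A_{t,i}=A_i^t$ is a power of a diagonal matrix, hence diagonal.
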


\begin{proof}
Note that all the  block matrices of the parity-check matrix of the MDS code $\mathcal{C}_1$  are diagonal. By the definition of  the optimal update property and the arguments in \cite{Barg1}, we conclude that the MDS code  $\mathcal{C}_1$ has the optimal update property.
\end{proof}

\subsection{Two $(n,k)$ MDS codes $\mathcal{C}_2$ and $\mathcal{C}_3$ by applying the generic transformation respectively to the YB code 2 in \cite{Barg1} and the improved YB code 2  in \cite{YiLiu}}\label{sec:C2C3}

For consistency, we borrow the notation in \cite{Barg1} and \cite{YiLiu}  in what follows.
Let $N=r^{n'-1}$ where $r=n'-k'$. For any  $a\in [0,N)$ with $(a_0,a_1,\cdots,a_{n'-2})$ being its $r$-ary expansion,
define
\begin{equation}\label{Eqn ai}
a(i,u)=(a_0,\cdots,a_{i-1},u,a_{i+1},\cdots,a_{n'-2})
\end{equation}
and
\begin{IEEEeqnarray}{rCl}\label{Eqn aij}
\nonumber \hspace{-7mm}&&a(i,j,u,v)\\\hspace{-7mm}&=&(a_0,\cdots,a_{i-1},u,a_{i+1},\cdots,a_{j-1},v,a_{j+1},\cdots,a_{n'-2}),
\end{IEEEeqnarray}
 where $0\le i<j< n'-1$ and $u,v\in[0,r)$.

For the $(n', k')$ YB code 2 in \cite{Barg1} and the $(n', k')$ improved YB code 2 in \cite{YiLiu}, both of them are defined  in the form of \eqref{Eqn parity check eq} and \eqref{Eqn A power}  with the sub-packetization level
$N$. More precisely,
the parity-check matrix  $(A'_{t,i})_{t\in[0,r), i\in[0,n')}$ of the $(n', k')$ YB code 2 in \cite{Barg1} is defined by  $A'_{t,i}=(A'_i)^t$ and
\begin{equation*}\label{Eqn_Barg_CodingM}
A'_i=\left\{\begin{array}{ll}
\sum\limits_{a=0}^{N-1}\lambda_{i,a_i}e_a^{\top}e_{a(i,a_i+1)}, &i\in [0,n'-1),\\
I, & i=n'-1,
\end{array}\right.
\end{equation*}
where
\begin{equation*}
\lambda_{i,a_i}=\left\{
\begin{array}{ll}
c^{i+1}, & \textrm{if $a_i =0$},\\
1, & \textrm{otherwise},
\end{array}
\right.
\end{equation*}
with $c$ being a primitive element of a   finite field with size larger than $n'$.
While the parity-check matrix  $(A'_{t,i})_{t\in[0,r), i\in[0,n')}$ of the $(n', k')$ improved YB code 2   in \cite{YiLiu} is defined by  $A'_{t,i}=(A'_i)^t$ and
\begin{equation}\label{Eqn_Liu_CodingM}
A'_i=\left\{\begin{array}{ll}
\sum\limits_{a=0}^{N-1}\lambda_{i,a}e_a^{\top}e_{a(i,a_i+1)}, &  i\in [0,n'-1),\\
I, & i=n'-1,
\end{array}\right.
\end{equation}
where
\begin{equation}\label{Eqn C3 lam}
\lambda_{i,a}=\left\{
\begin{array}{ll}
c, & \textrm{if $\sum\limits_{t=0}^{i}a_t = 0$},\\
1, & \textrm{otherwise},
\end{array}
\right.
\end{equation}
with $c$ being a primitive element of   a   finite field $\mathbf{F}_q$ with $(q-1)\nmid (r-1)$.

The YB code 2 in \cite{Barg1} and the improved YB code 2   in \cite{YiLiu} have the same
repair matrices and select matrices, which are respectively defined by
\begin{equation*}
R'_{i,j}=\left\{
\begin{array}{ll}
V_{i,0}, &\mathrm{if~}i\in[0,n'-1),\\
V_{*,0},&\mathrm{if~}i=n'-1,
\end{array}
\right.
\end{equation*}
and
\begin{equation*}\label{Eqn Liu S}
S'_{i,t}=\left\{
\begin{array}{ll}
V_{i,0}, &\mathrm{if~}i\in [0,n'-1),\\
V_{*,r-t},&\mathrm{if~}i=n'-1,
\end{array}
\right.
\end{equation*}
where $V_{i,0}$, $V_{*,0}$ and $V_{*,r-t}$ are defined in \eqref{Eqn_Vt} and \eqref{Eqn V*}.

By directly applying the generic transformation in Section \ref{sec generic}, we have the following result.
\begin{Theorem}\label{Thm C2}
Respectively choosing the  $(n', k')$ YB code 2 in \cite{Barg1} and the $(n', k')$ improved YB code 2 in \cite{YiLiu}  as the base code for the generic transformation in Section \ref{sec generic},  two $(n,k)$ MDS codes $\mathcal{C}_2$ and $\mathcal{C}_3$ over $\mathbf{F}_q$ with
$k=n-r$ and $q>N{n-1\choose r-1}+1$ can be obtained.  Particularly, for  both the MDS codes $\mathcal{C}_2$ and $\mathcal{C}_3$,  the sub-packetization level is   $r^{n'-1}$ while the repair bandwidth    for node $i$ ($i\in [0, n)$) is
\begin{equation*}
  \gamma_i = \left\{
                      \begin{array}{ll}
                      (1+\frac{(\lceil\frac{n}{n'}\rceil-1)(r-1)}{n-1})\gamma^*, &\mbox{\ \ if\ \ } 0\le i\% n'<n\%n', \\
                       (1+\frac{(\lfloor\frac{n}{n'}\rfloor-1)(r-1)}{n-1})\gamma^*, & \mbox{\ \ otherwise}.
                      \end{array}
                    \right.
\end{equation*}
\end{Theorem}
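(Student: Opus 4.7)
The plan is to obtain Theorem \ref{Thm C2} as a direct corollary of the two theorems already established for the generic transformation, namely Theorem \ref{Thm general MDS} (MDS property) and Theorem \ref{Thm gene band} (repair bandwidth). So the only real work is to verify that both base codes, the YB code 2 of \cite{Barg1} and the improved YB code 2 of \cite{YiLiu}, actually satisfy the hypotheses of Theorem \ref{Thm general MDS}; the repair bandwidth expression will then be inherited verbatim from Theorem \ref{Thm gene band} once we recall that both base codes are known to attain the optimal repair bandwidth (established in \cite{Barg1} and \cite{YiLiu} respectively).

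For the MDS property, condition (i) of Theorem \ref{Thm general MDS} is built into the hypothesis $q > N\binom{n-1}{r-1}+1$, so the real task is condition (ii): every block $A'_{t,j}$ of the base parity-check matrix must be nonsingular. Since both base codes fall under \eqref{Eqn A power} with $A'_{t,i}=(A'_i)^t$, it suffices to prove that $A'_i$ itself is nonsingular for every $i\in[0,n')$. The case $i=n'-1$ is trivial because $A'_{n'-1}=I$. For $i\in[0,n'-1)$, I would observe that the formulas
\[
A'_i=\sum_{a=0}^{N-1}\lambda_{i,a_i}\,e_a^{\top}e_{a(i,a_i+1)}\qquad\text{and}\qquad A'_i=\sum_{a=0}^{N-1}\lambda_{i,a}\,e_a^{\top}e_{a(i,a_i+1)}
\]
each describe a scaled permutation matrix. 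Concretely, the map $a\mapsto a(i,a_i+1)$ on $[0,N)$ is a bijection (it simply increments the $i$-th $r$-ary digit modulo $r$), so the nonzero pattern of $A'_i$ coincides with that of a permutation matrix. Hence $A'_i$ is nonsingular as long as every weight $\lambda_{i,\cdot}$ is nonzero. For the YB code 2 of \cite{Barg1} the weights are either $c^{i+1}$ or $1$, both nonzero because $c$ is a primitive element; for the improved YB code 2 of \cite{YiLiu} the weights are either $c$ or $1$, again nonzero. This delivers condition (ii).

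With both hypotheses verified, Theorem \ref{Thm general MDS} immediately gives the MDS property of $\mathcal{C}_2$ and $\mathcal{C}_3$ over $\mathbf{F}_q$ with $k=n-r$ and $q>N\binom{n-1}{r-1}+1$. The sub-packetization level is preserved by the generic transformation and equals $N=r^{n'-1}$ for both base codes. Finally, since the YB code 2 and the improved YB code 2 attain the optimal repair bandwidth (a hypothesis of Theorem \ref{Thm gene band}), plugging either of them in as the base code yields precisely the piecewise repair-bandwidth formula claimed, the two cases corresponding to whether the index $i\%n'$ lies in the ``long'' batch or the ``short'' batch of copies produced by the transformation.

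The only step that might be viewed as the main obstacle is checking condition (ii), but as shown above this reduces to a one-line observation once one recognises the scaled-permutation structure of $A'_i$. No nontrivial calculation is required beyond that, and in particular there is no need to revisit the repair scheme: the fact that $R_{i,j}=I$ when $j\equiv i\bmod n'$, and $R_{i,j}$ equals the base-code repair matrix otherwise, is precisely the content of \eqref{Eqn general R}, and Theorem \ref{Thm gene band} already did the accounting for the resulting bandwidth.
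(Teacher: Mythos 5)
Your proof is correct and takes the same route the paper intends: the theorem is stated without an explicit proof because it follows immediately from Theorem~\ref{Thm general MDS} and Theorem~\ref{Thm gene band}, with the only nontrivial check being condition~(ii) of Theorem~\ref{Thm general MDS}. The paper dispatches that check in the Remark immediately following Theorem~\ref{Thm general MDS}, asserting (without detail) that YB code 2 and the improved YB code 2 satisfy it; your scaled-permutation-matrix observation is exactly the right justification and fills in that step cleanly.
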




In the following, by  a concrete assignment of the coefficients $x_{t,j}$, $t\in [0, r)$ and $j\in [0, n)$ in \eqref{Eqn general coding matrix}, we provide a solution to determine the exact field sizes of the MDS codes  $\mathcal{C}_2$ and $\mathcal{C}_3$,  which are much smaller than $N{n-1\choose r-1}+2$. Hereafter, we only derive the  values of  $x_{t,j}$, $t\in [0, r)$ and $j\in [0, n)$ in \eqref{Eqn general coding matrix}  for the MDS code $\mathcal{C}_3$ in detail, while for MDS code $\mathcal{C}_2$, we just give the results but omit the analysis since it is similar to that of the  MDS code $\mathcal{C}_3$.

\begin{Theorem}\label{Thm MDS C2}
The field size $q$ of the  MDS code $\mathcal{C}_2$ can be reduced to $q>r\lceil{n'\over r}\rceil(\lceil{n\over n'}\rceil-1)+n'$ by setting $x_{t,i}=x_{i}^t=c^{\lfloor {i\over n'}\rfloor \lceil{n'\over r} \rceil t}$ in \eqref{Eqn general coding matrix}  for $t\in [0,r)$ and $i\in [0,n)$, where $c$ is a primitive element of $\mathbf{F}_q$.
\end{Theorem}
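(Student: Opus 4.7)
The plan is to verify the MDS property via Lemma 1 applied to $A_i:=x_iA'_{i\%n'}$, noting that the prescribed assignment $x_{t,i}=x_i^t$ gives $A_{t,i}=A_i^t$. We need (a) $A_iA_j=A_jA_i$ and (b) $A_i-A_j$ nonsingular for all distinct $i,j\in[0,n)$. Commutativity (a) follows immediately from the tensor structure of the YB code 2 base matrices: under the identification $\mathbf{F}_q^N\cong(\mathbf{F}_q^r)^{\otimes(n'-1)}$, each $A'_l$ with $l\in[0,n'-1)$ acts as an $r\times r$ matrix $B_l$ on the $l$-th tensor factor and as the identity on every other factor, while $A'_{n'-1}=I$; hence the $A'_l$ commute pairwise, and scalar multiplication preserves this.

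For (b), I would split on whether $i':=i\%n'$ equals $j':=j\%n'$. If $i'=j'$, then $A_i-A_j=(x_i-x_j)A'_{i'}$, where $A'_{i'}$ is nonsingular (either the identity or a matrix whose determinant works out to $\pm c^{i'+1}\neq 0$), so nonsingularity reduces to $x_i\neq x_j$, i.e., to $(z_i-z_j)\lceil n'/r\rceil\not\equiv 0\pmod{q-1}$ with $z_l:=\lfloor l/n'\rfloor$. Since $0<|z_i-z_j|\leq\lceil n/n'\rceil-1$, this is comfortably implied by the stated field-size bound.

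The main obstacle is the case $i'\neq j'$, where I must show $\det(x_iA'_{i'}-x_jA'_{j'})\neq 0$. The plan is to exploit the tensor structure to reduce to a determinant on just the $i'$- and $j'$-factors. A direct expansion shows that each $B_l$ has characteristic polynomial $\lambda^r-c^{l+1}$. Simultaneous triangularization of $B_{i'}\otimes I_r$ and $I_r\otimes B_{j'}$ over an algebraic closure, followed by the telescoping $\prod_{\alpha^r=c^{i'+1}}(x\alpha-y\beta)=\pm(y^r\beta^r-x^rc^{i'+1})$, yields
\[
\det(xB_{i'}\otimes I_r-yI_r\otimes B_{j'})=\pm(y^rc^{j'+1}-x^rc^{i'+1})^r,
\]
and tensoring with the remaining $n'-3$ identity factors gives $\det(x_iA'_{i'}-x_jA'_{j'})=\pm(x_j^rc^{j'+1}-x_i^rc^{i'+1})^{r^{n'-2}}$; an analogous (simpler) computation covers the sub-case where one of $i',j'$ equals $n'-1$. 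Nonvanishing therefore reduces to the clean condition $(x_i/x_j)^r\neq c^{j'-i'}$.

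Finally, substituting $x_l=c^{z_l\lceil n'/r\rceil}$ translates this into the integer non-congruence $r\lceil n'/r\rceil(z_i-z_j)\not\equiv(j'-i')\pmod{q-1}$. Setting $D:=r\lceil n'/r\rceil(z_i-z_j)-(j'-i')$, I would check $D\neq 0$ as an integer (either $z_i=z_j$ and $D=-(j'-i')\neq 0$, or $|r\lceil n'/r\rceil(z_i-z_j)|\geq n'>|j'-i'|$), and that $|D|\leq r\lceil n'/r\rceil(\lceil n/n'\rceil-1)+(n'-1)$, which is strictly less than $q-1$ precisely under the claimed bound $q>r\lceil n'/r\rceil(\lceil n/n'\rceil-1)+n'$, closing the argument.
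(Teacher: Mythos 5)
Your proof is correct in essence but takes a genuinely different route from what the paper intends. The paper does not actually write out a proof for this theorem: it states that the analysis for $\mathcal{C}_2$ is omitted because it is "similar to that of the MDS code $\mathcal{C}_3$," and the $\mathcal{C}_3$ proof (Theorem \ref{Thm MDS C3}) proceeds by a coordinate-recursion argument, using the action of $A'_{i}$ on individual coordinates $x_a$ (Lemma 4 of \cite{YiLiu}) and a telescoping product of $\lambda$'s (Lemma 3 of \cite{YiLiu}) to show $(A_i-A_j)X=\mathbf{0}\Rightarrow X=\mathbf{0}$. Your argument instead exploits the fact that the YB code 2 matrices are \emph{pure} tensor products $A'_l = I^{\otimes l}\otimes B_l\otimes I^{\otimes(n'-2-l)}$ with $B_l^r=c^{l+1}I$, so commutativity is immediate (whereas for the improved YB code 2 the coefficients $\lambda_{i,a}$ couple multiple coordinates, forcing the paper to invoke a separate commutativity lemma), and the nonsingularity of $A_i-A_j$ reduces to an eigenvalue computation over $\overline{\mathbf{F}}_q$ that yields the closed form $\det(x_iA'_{i'}-x_jA'_{j'})=(x_i^rc^{i'+1}-x_j^rc^{j'+1})^{r^{n'-2}}$. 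This is cleaner and more conceptual than the coordinate recursion, and it makes the origin of the field-size bound transparent; the tradeoff is that it only applies to the uncoupled YB code 2, not to $\mathcal{C}_3$.

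One small correction: your "clean condition" $(x_i/x_j)^r\neq c^{j'-i'}$ is derived for the case $i',j'<n'-1$ and does \emph{not} hold verbatim when $j'=n'-1$, since $A'_{n'-1}=I$ rather than a matrix with $B_{n'-1}^r=c^{n'}I$. The correct condition there is $(x_i/x_j)^r\neq c^{-(i'+1)}$, i.e., the integer $D'=r\lceil n'/r\rceil(z_i-z_j)+(i'+1)$ must be nonzero and $|D'|<q-1$. Fortunately the same bound $|D'|\le r\lceil n'/r\rceil(\lceil n/n'\rceil-1)+(n'-1)<q-1$ and the same nonvanishing check (either $z_i=z_j$ giving $D'=i'+1\in[1,n'-1]$, or $|r\lceil n'/r\rceil(z_i-z_j)|\ge n'>n'-1$) both go through, so the conclusion is unaffected; but you should state the sub-case condition explicitly rather than subsuming it into the wrong formula.
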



Before proving the result on $\mathcal{C}_3$,
 we  first introduce some results related to
the  parity-check matrix  (see \eqref{Eqn_Liu_CodingM}) of the $(n', k')$ improved YB code 2 in \cite{YiLiu}.

\begin{Lemma}[Lemma 2, \cite{YiLiu}] \label{lem commu}
For any $i,j\in [0,n')$ with $i\ne j$, $A'_iA'_j=A'_jA'_i$, where $A'_i$ and $A'_j$ are defined  in \eqref{Eqn_Liu_CodingM}.
\end{Lemma}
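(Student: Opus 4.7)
The plan is to prove commutativity by computing the action of $A'_iA'_j$ and $A'_jA'_i$ on each standard basis column vector $e_b^{\top}$ and matching the results. Two preliminary observations cut the work down considerably. First, if $i=n'-1$ or $j=n'-1$, the corresponding factor is the identity by \eqref{Eqn_Liu_CodingM}, so $A'_iA'_j=A'_jA'_i$ is immediate; hence I may restrict to $i,j\in[0,n'-1)$ and, by the symmetry of the statement, assume $i<j$. Second, each such $A'_i$ is a weighted permutation matrix: reading off from $A'_i=\sum_{a}\lambda_{i,a}\,e_a^{\top}e_{a(i,a_i+1)}$, the unique nonzero entry in column $b$ sits in row $b(i,b_i-1)$ with value $\lambda_{i,b(i,b_i-1)}$, so
\begin{equation*}
A'_i\,e_b^{\top}=\lambda_{i,\,b(i,b_i-1)}\,e_{b(i,b_i-1)}^{\top},
\end{equation*}
where arithmetic on the $i$-th coordinate is taken modulo $r$.

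Composing this action twice, and using the elementary identity $b(i,b_i-1)(j,b_j-1)=b(j,b_j-1)(i,b_i-1)=b(i,j,b_i-1,b_j-1)$, valid because $i\ne j$ (cf.\ \eqref{Eqn aij}), I get
\begin{align*}
A'_jA'_i\,e_b^{\top}&=\lambda_{i,\,b(i,b_i-1)}\,\lambda_{j,\,b(i,j,b_i-1,b_j-1)}\,e_{b(i,j,b_i-1,b_j-1)}^{\top},\\
A'_iA'_j\,e_b^{\top}&=\lambda_{j,\,b(j,b_j-1)}\,\lambda_{i,\,b(i,j,b_i-1,b_j-1)}\,e_{b(i,j,b_i-1,b_j-1)}^{\top}.
\end{align*}
Both products send $e_b^{\top}$ to a scalar multiple of the same basis vector, so the lemma reduces to the scalar identity
\begin{equation*}
\lambda_{i,\,b(i,b_i-1)}\,\lambda_{j,\,b(i,j,b_i-1,b_j-1)}=\lambda_{j,\,b(j,b_j-1)}\,\lambda_{i,\,b(i,j,b_i-1,b_j-1)}.
\end{equation*}

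This is where the specific structure of \eqref{Eqn C3 lam} enters. Because $i<j$, the two $\lambda_i$ factors are governed by the partial sum $\sum_{t=0}^{i}a_t$ evaluated on strings that agree in coordinates $0,\ldots,i$ (both have $a_t=b_t$ for $t<i$ and $a_i=b_i-1$), so they are equal and cancel from the two sides. What remains is to match the two $\lambda_j$ factors, whose controlling partial sums $\sum_{t=0}^{j}a_t$ differ only in how the $i$-th coordinate enters. I expect the main difficulty to sit here: a case analysis driven by the residue modulo $r$ of the relevant partial sum must verify that the shift produced by the extra decrement of the $i$-th coordinate does not change the indicator value defining $\lambda_j=c$. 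Once that check is in place, the scalar identity above holds for every $b$ and commutativity follows; everything preceding it is a routine index chase built on the weighted-permutation structure displayed above.
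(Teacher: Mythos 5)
Your reduction of the commutativity claim to a scalar identity is mechanically correct: the restriction to $i,j\in[0,n'-1)$ with $i<j$, the identification of $A'_i$ as a weighted permutation matrix with $A'_i e_b^{\top}=\lambda_{i,b(i,b_i-1)}e_{b(i,b_i-1)}^{\top}$, the composition step, and the cancellation of the two $\lambda_i$ factors (since $b(i,b_i-1)$ and $b(i,j,b_i-1,b_j-1)$ agree on coordinates $0,\dots,i$) are all fine. After this cancellation, the commutativity claim is equivalent, for every $b$, to
\begin{equation*}
\lambda_{j,\,b(i,j,b_i-1,b_j-1)} \;=\; \lambda_{j,\,b(j,b_j-1)}.
\end{equation*}

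The gap is that you do not actually prove this: you write that ``a case analysis \dots\ must verify that the shift produced by the extra decrement of the $i$-th coordinate does not change the indicator value'' and then assert ``once that check is in place \dots\ commutativity follows.'' But that check is precisely the content of the lemma, and carrying it out exposes a problem. Writing $\sigma_j(a)=\sum_{t=0}^{j}a_t\pmod r$, the two strings in question have
\begin{equation*}
\sigma_j\bigl(b(i,j,b_i-1,b_j-1)\bigr)\equiv\sigma_j(b)-2,\qquad \sigma_j\bigl(b(j,b_j-1)\bigr)\equiv\sigma_j(b)-1 \pmod r,
\end{equation*}
so the two controlling sums differ by exactly $1\pmod r$. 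With $\lambda_{j,a}$ as given in \eqref{Eqn C3 lam}, these two $\lambda_j$ values are therefore \emph{unequal} whenever $\sigma_j(b)\equiv 1$ or $\sigma_j(b)\equiv 2\pmod r$. For example, take $r=2$, $n'=3$, $i=0$, $j=1$, $b=0=(0,0)$: then $b(i,j,b_i-1,b_j-1)=(1,1)$ gives $\lambda_{1,(1,1)}=c$, while $b(j,b_j-1)=(0,1)$ gives $\lambda_{1,(0,1)}=1$, and indeed the $(3,0)$ entries of $A'_1A'_0$ and $A'_0A'_1$ come out as $c$ and $1$ respectively. So the ``routine index chase'' you defer to cannot be completed as you envision it; the claim that the extra decrement does not change the indicator is false. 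The commutativity of the matrices from \cite{YiLiu} must come from structure that your argument does not capture (and indeed the definition as reproduced in \eqref{Eqn C3 lam} appears to need care when applied literally), so this step is a genuine, unfilled gap rather than a routine finishing touch.
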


\begin{Lemma}[Lemma 3, \cite{YiLiu}] \label{lem lambda}
 For any $a\in [0,N)$ and $i, j\in [0, n'-1)$,
\begin{itemize}
\item [(i)] $\prod\limits_{t=0}^{r-1}\lambda_{i,a(i,j,a_i-t,a_j+t+l)}=c$ for $j>i$;
\item [(ii)] $\prod\limits_{t=0}^{r-1}\lambda_{j,a(i,j,a_i-t,a_j+t+l)}=1 \mbox{~or~} c^r$ for $j>i$;
\item [(iii)] $\prod\limits_{t=0}^{r-1}\lambda_{j,a(j,a_j+t)}=c$ for $j \geq 0$,
\end{itemize}
where $l\in [0,r)$ is a constant,  $c$ is a primitive element of $\mathbf{F}_q$, $a(i,j,u,v)$  and $\lambda_{i,a}$ are respectively defined in \eqref{Eqn aij} and \eqref{Eqn C3 lam}.
\end{Lemma}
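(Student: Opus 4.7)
The plan is to unpack the definition of $\lambda_{i,a}$ in \eqref{Eqn C3 lam}: it equals $c$ exactly when $\sum_{s=0}^{i} a_s \equiv 0 \pmod r$ and $1$ otherwise. So each of the three products reduces to tracking how the partial sum $\sum_{s=0}^{i}(\cdot)$ or $\sum_{s=0}^{j}(\cdot)$ of the modified index vector behaves as $t$ varies over $[0,r)$, and counting how often it vanishes modulo $r$.

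For (i), since $j>i$, the $j$-th coordinate lies outside the range that determines $\lambda_{i,\cdot}$, so replacing $a_j$ by $a_j+t+l$ has no effect on the relevant partial sum; only the substitution of $a_i$ by $a_i-t$ matters. Thus the partial sum equals $(\sum_{s=0}^{i}a_s)-t \bmod r$, which, as $t$ runs over $[0,r)$, hits every residue class exactly once. Hence exactly one factor equals $c$ and the remaining $r-1$ equal $1$, giving product $c$. Part (iii) is the same idea with only the $j$-th coordinate perturbed: the partial sum up to position $j$ becomes $(\sum_{s=0}^{j}a_s)+t$, again cycling through every residue once as $t$ varies, so the product is again $c$.

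The slightly different part is (ii): because we are now evaluating $\lambda_{j,\cdot}$ with $j>i$, both perturbed coordinates $i$ and $j$ lie in the summation range $[0,j]$. The shift $-t$ at position $i$ and the shift $+t+l$ at position $j$ combine to give a partial sum of $(\sum_{s=0}^{j}a_s)+l \bmod r$, independent of $t$. Therefore every one of the $r$ factors takes the same value — all equal to $c$ (product $c^r$) when this constant vanishes mod $r$, and all equal to $1$ (product $1$) otherwise.

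I do not foresee a real obstacle: the whole argument is a bookkeeping exercise on the $r$-ary expansion, with the only subtlety being that in (ii) the two shifts cancel because both indices fall within the range of the partial sum, whereas in (i) one of them falls outside. The three cases can be written together in a short proof by explicitly computing the partial sum $\sum_{s=0}^{\text{upper}}$ of the modified vector for each case and invoking either the "$t$ traverses all residues" observation or the cancellation.
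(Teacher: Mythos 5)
Your proof is correct. Note that the paper itself does not prove this lemma; it imports it verbatim as Lemma~3 of \cite{YiLiu}, so there is no in-paper argument to compare against. Your calculation is exactly the natural one and almost certainly mirrors what appears in that reference: by \eqref{Eqn C3 lam}, $\lambda_{i,a}$ depends only on $\sum_{s=0}^{i}a_s \bmod r$, so one just tracks how the relevant partial sum of the modified index vector varies with $t$. In (i), position $j>i$ lies outside the range $[0,i]$ of the partial sum, so only the shift $a_i\mapsto a_i-t$ contributes and the sum $\bigl(\sum_{s=0}^{i}a_s\bigr)-t$ hits each residue class mod $r$ exactly once as $t$ runs over $[0,r)$; hence exactly one factor equals $c$ and the product is $c$. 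In (iii), the single shift $a_j\mapsto a_j+t$ likewise makes the partial sum $\bigl(\sum_{s=0}^{j}a_s\bigr)+t$ cycle through all residues, giving product $c$. In (ii), both perturbed positions $i<j$ lie inside $[0,j]$ and the shifts $-t$ and $+t+l$ cancel, so the partial sum is the constant $\bigl(\sum_{s=0}^{j}a_s\bigr)+l$ and all $r$ factors are equal, yielding $c^r$ if that constant vanishes mod $r$ and $1$ otherwise. The only small presentational point worth making explicit is that all coordinate arithmetic $a_i-t$, $a_j+t+l$ and the comparison $\sum a_s=0$ in \eqref{Eqn C3 lam} are to be read modulo $r$, which your argument tacitly (and correctly) assumes.
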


\begin{Lemma}[Lemma 4,  \cite{YiLiu}]\label{lem AX}
 For any $i\in [0,n'-1)$ and $X=\sum\limits_{a=0}^{N-1}x_ae_a^{\top}\in \mathbf{F}_q^N$,
$A'_iX=\sum\limits_{a=0}^{N-1}\lambda_{i,a}x_{a(i,a_{i}+1)}e_a^{\top}$ where $A'_i$ is defined  in \eqref{Eqn_Liu_CodingM}.
\end{Lemma}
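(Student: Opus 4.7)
The plan is to verify the statement by direct matrix--vector expansion, relying on the rank-one outer-product decomposition of $A'_i$ given in \eqref{Eqn_Liu_CodingM} together with the orthogonality of the standard basis $\{e_0, e_1, \ldots, e_{N-1}\}$. Since $A'_i$ is presented explicitly as a sum of terms $\lambda_{i,a}\, e_a^{\top} e_{a(i,a_i+1)}$, once multiplication by $X$ is distributed through the sum, each term reads off a single coordinate of $X$ and redeposits it into a prescribed position.

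First I would substitute both the expression for $A'_i$ and the vector $X = \sum_{b=0}^{N-1} x_b\, e_b^{\top}$ into the product and invoke bilinearity to obtain
\begin{equation*}
A'_i X = \sum_{a=0}^{N-1}\sum_{b=0}^{N-1} \lambda_{i,a}\, x_b\, e_a^{\top}\bigl(e_{a(i,a_i+1)}\, e_b^{\top}\bigr).
\end{equation*}
Next I would observe that the inner scalar $e_{a(i,a_i+1)}\, e_b^{\top}$ equals the Kronecker delta $\delta_{a(i,a_i+1),\, b}$, since the row vector $e_c$ has its sole nonzero entry in position $c$ and $e_b^{\top}$ has its sole nonzero entry in position $b$. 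Collapsing the sum over $b$ therefore retains only the term $b = a(i, a_i+1)$, which yields
\begin{equation*}
A'_i X = \sum_{a=0}^{N-1} \lambda_{i,a}\, x_{a(i,a_i+1)}\, e_a^{\top},
\end{equation*}
as claimed.

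There is essentially no substantive obstacle here; the content is pure bookkeeping. The only care required is to respect the row/column conventions fixed in Section \ref{subsection:partition} (each $e_a$ is a row vector, so $e_a^{\top}$ is a column), and to correctly interpret the index $a(i, a_i+1)$ from \eqref{Eqn ai} as the unique integer whose $r$-ary expansion differs from that of $a$ only in the $i$-th digit, where $a_i$ is replaced by $a_i+1$ (arithmetic modulo $r$). Notably, no properties of the specific scalars $\lambda_{i,a}$ from \eqref{Eqn C3 lam} are used, so the identity is in fact a purely structural consequence of the definition of $A'_i$.
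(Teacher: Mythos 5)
Your proof is correct. The paper does not prove this lemma at all; it is imported verbatim from reference~\cite{YiLiu} (stated there as Lemma~4), so there is no in-paper argument to compare against. Your direct expansion --- substituting the outer-product form of $A'_i$, renaming the dummy index in $X$ to avoid collision, and collapsing the double sum via the orthogonality $e_{a(i,a_i+1)} e_b^{\top} = \delta_{a(i,a_i+1),\,b}$ --- is exactly the one-line computation one would write, and your remark that the identity is independent of the particular values $\lambda_{i,a}$ is a correct and useful observation. The only point worth being explicit about (which you do flag) is that the digit update $a_i \mapsto a_i + 1$ in $a(i,a_i+1)$ is taken modulo $r$, so that the map $a \mapsto a(i,a_i+1)$ is a bijection on $[0,N)$; this is what makes the right-hand side a well-defined re-indexing of the coordinates of $X$.
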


\begin{Theorem}\label{Thm MDS C3}
The field size $q$ of the $(n, k)$ MDS code $\mathcal{C}_3$ can be reduced to $q>\lceil\frac{n}{n'}\rceil$ with $q$ being odd if  $r$ is even, and $q>r\lceil\frac{n}{n'}\rceil$ otherwise,
 by setting
\begin{equation}\label{Eqn C3 x assi}
x_{t,i}=x_{i}^t=c^{\lfloor {i\over n'}\rfloor t}
\end{equation}
in \eqref{Eqn general coding matrix} for $t\in [0,r)$ and $i\in [0,n)$, where $c$ is a primitive element of $\mathbf{F}_q$.
\end{Theorem}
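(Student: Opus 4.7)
The plan is to verify the two hypotheses of Lemma~\ref{Lemma pre MDS} for the resulting code. With the assignment \eqref{Eqn C3 x assi}, one has $A_{t,i}=x_{t,i}(A'_{i\%n'})^t=A_i^t$ where $A_i:=c^{\lfloor i/n'\rfloor}A'_{i\%n'}$, so it suffices to prove that $A_iA_j=A_jA_i$ and that $A_i-A_j$ is nonsingular for every pair $i\ne j$ in $[0,n)$. Commutativity is immediate from Lemma~\ref{lem commu}, since scalar factors commute with everything and the $A'$-matrices pairwise commute.

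For nonsingularity, I would write $i=un'+i'$ and $j=vn'+j'$ with $i',j'\in[0,n')$ and split into cases. If $i'=j'$ (so $u\ne v$), then $A_i-A_j=(c^u-c^v)A'_{i'}$; Lemma~\ref{lem AX} exhibits $A'_{i'}$ as a generalized permutation matrix with nonzero entries in $\{1,c\}$ and hence invertible, while $c^u\ne c^v$ holds as soon as $q-1>|u-v|$, which is implied by $q>\lceil n/n'\rceil$.

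The substantive case is $i'\ne j'$. I would first handle the boundary subcase in which one of $i',j'$ equals $n'-1$, say $j'=n'-1$ so that $A'_{j'}=I$; then $A_i-A_j=c^uA'_{i'}-c^vI$ is nonsingular iff $c^{v-u}$ is not an eigenvalue of $A'_{i'}$. Iterating Lemma~\ref{lem AX} around each length-$r$ cycle of the digit-shift in direction $i'$ and applying Lemma~\ref{lem lambda}(iii) gives $(A'_{i'})^r=cI$, so the spectrum of $A'_{i'}$ consists exactly of the $r$-th roots of $c$. Thus nonsingularity reduces to $(q-1)\nmid r(v-u)-1$ for every $v-u$ with $|v-u|<\lceil n/n'\rceil$. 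When $r$ is even and $q$ is odd, $r(v-u)-1$ is odd while $q-1$ is even, so the divisibility fails automatically and $q>\lceil n/n'\rceil$ suffices; when $r$ is odd, the bound $q>r\lceil n/n'\rceil$ makes $q-1$ strictly exceed every admissible $|r(v-u)-1|$, again ruling out the divisibility.

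The main obstacle is the remaining subcase $i',j'\in[0,n'-1)$ with $i'\ne j'$. The plan is to adapt the MDS-property proof of the improved YB code~2 in \cite{YiLiu} by threading an extra scalar through the computation: factoring out $c^v$ reduces the question to showing that $c^wA'_{i'}-A'_{j'}$ is nonsingular for every integer $w$ with $|w|<\lceil n/n'\rceil$. Following the original argument, I would decompose the basis $\{e_a\}$ into the joint orbits of the pair of digit-shifts in directions $i'$ and $j'$; on each orbit $c^wA'_{i'}-A'_{j'}$ takes a block-companion form whose determinant expands, via Lemma~\ref{lem lambda}(i)--(ii), into a polynomial in $c^w$ whose leading and constant coefficients collapse, by Lemma~\ref{lem lambda}(iii), to explicit powers of $c$. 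The stated field-size and parity constraints on $q$ are precisely what guarantees that none of these polynomials vanish for any admissible $w$; together with the previous cases this establishes the MDS property and completes the proof.
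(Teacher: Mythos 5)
Your proof follows the same overall structure as the paper's: invoke Lemma~\ref{Lemma pre MDS}, reduce to nonsingularity of $A_i-A_j$, and split on the residues of $i,j$ modulo $n'$. Your treatment of the boundary case (one of $i',j'$ equals $n'-1$) via the identity $(A'_{i'})^r=cI$ is a clean and correct alternative to the paper's direct kernel iteration: that identity does follow from Lemma~\ref{lem AX} and Lemma~\ref{lem lambda}(iii), and the resulting constraint $(q-1)\nmid\bigl(r(v-u)-1\bigr)$ matches the paper's factor $c^{rv+1}-c^{ru}$.

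The gap is in the case $i',j'\in[0,n'-1)$ with $i'\ne j'$, which you only sketch. The paper does not expand a block-companion determinant; it iterates the kernel recurrence around the length-$r$ diagonal orbit of the pair $(a_{i'},a_{j'})$, and Lemma~\ref{lem lambda}(i)--(ii) collapse the cumulative product so that, for each $a$, one of the four factors $c^{r(u-v)+1}-1$, $c^{r(u-v)-1}-1$, $c^{r(u-v)+(r-1)}-1$, $c^{r(u-v)-(r-1)}-1$ must annihilate $x_a$. The parity and field-size hypotheses are then checked against exactly these four exponents: all are odd when $r$ is even (so an even $q-1$ never divides them), and all have absolute value at most $r\lceil n/n'\rceil-1<q-1$ when $r$ is odd. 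Your outline never derives which exponents actually occur; it simply asserts that the stated constraints make "these polynomials" nonvanishing. In particular, the appearance of $r-1$ in the exponents --- which the boundary case alone does not produce --- is precisely what forces the bound $q>r\lceil n/n'\rceil$ for odd $r$, and your sketch offers no way to see this. Completing the argument requires carrying out the diagonal-orbit computation and exhibiting those exponents explicitly.
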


\begin{proof}
Still, we only need to verify the  MDS property of the  code $\mathcal{C}_3$.
It is seen from  \eqref{Eqn C3 x assi} that the  code $\mathcal{C}_3$ is  defined in the form of    \eqref{Eqn parity check eq} and \eqref{Eqn A power} with
\begin{equation}\label{Eqn_C3_PCM}
  A_{t, i}= A_{i}^t=(c^{\lfloor {i\over n'}\rfloor }A'_{i\%n'})^t, ~t\in [0, r).
\end{equation}
That is
\begin{equation}\label{Eqn coding matrix C5}
A_{un'+n'-1}=c^uA'_{n'-1}=c^uI, \mbox{~for} ~u\in[0,~\lfloor\frac{n}{n'}\rfloor),
\end{equation}
and
\begin{equation}\label{Eqn coding matrix C51}
A_{un'+i'}=c^uA'_{i'}=\sum\limits_{a=0}^{N-1}c^u\lambda_{i',a}e_a^{\top}e_{a(i',a_{i'}+1)},
\end{equation}
for $u\in[0,~\lceil\frac{n}{n'}\rceil)$ and $i'\in [0,~n'-1)$ with  $un'+i'<n$.
According to Lemma \ref{Lemma pre MDS},   the  code $\mathcal{C}_3$    possesses the MDS property if $A_iA_j=A_jA_i$ and $A_i-A_j$ is nonsingular for all $i,j\in [0,n)$ with $i\ne j$.

First, by Lemma \ref{lem commu},  \eqref{Eqn coding matrix C5} and  \eqref{Eqn coding matrix C51}, we easily see that $A_iA_j=A_jA_i$ holds for any $i,j\in [0,n)$ with $i\ne j$.

Next, we  show that $A_i-A_j$ is nonsingular. Note that $A_i-A_j$ being nonsingular is equivalent to saying that for any $X=\sum\limits_{a=0}^{N-1}x_ae_a^{\top}$, $(A_i-A_j)X=\mathbf{0}$ implies $X=\mathbf{0}$. In the following, we analyze it through three cases.
For $i,j\in [0, n)$ with $i\ne j$, let us rewrite $i=un'+i'$ and $j=vn'+j'$ for some $u,v\in [0, \lceil\frac{n}{n'}\rceil)$ and $i',j'\in [0,n')$, where $(u,i')\ne (v,j')$.

\textbf{Case 1:} If $i\equiv j ~\bmod n'$,  i.e., $i'=j'$ and $u\ne v$,  then by \eqref{Eqn_C3_PCM}, we have
\begin{equation*}
  A_i-A_j=(c^u-c^v)A'_{i'}=c^v(c^{u-v}-1)A'_{i'},
\end{equation*}
which is nonsingular since $0<|u-v|\le \lceil\frac{n}{n'}\rceil-1<q-1$.

\textbf{Case 2:} If $i\not\equiv j ~\bmod n'$, $i'\ne n'-1$, and $j'\ne n'-1$, then  by Lemma \ref{lem AX}, we have
 \begin{IEEEeqnarray*}{rCl}
\nonumber &&(A_i-A_j)X\\\nonumber&=&(c^uA'_{i'}-c^vA'_{j'})X\\
\nonumber&=&\sum\limits_{a=0}^{N-1}(c^u\lambda_{i',a}x_{a(i',a_{i'}+1)}e_a^{\top}-c^v\lambda_{j',a}x_{a(j',a_{j'}+1)}e_a^{\top})\\
&=&\mathbf{0}
\end{IEEEeqnarray*}
if and only if
\begin{equation*}
c^u\lambda_{i',a}x_{a(i',a_{i'}+1)}-c^v\lambda_{j',a}x_{a(j',a_{j'}+1)}=0, ~a\in [0,N),
\end{equation*}
which is equivalent to
\begin{IEEEeqnarray}{rCl}\label{Eqn case2 e2}
\nonumber \hspace{-6mm}x_a &=&\frac{c^v\lambda_{j',a(i',a_{i'}-1)}}{c^u\lambda_{i',a(i',a_{i'}-1)}}x_{a(i',j',a_{i'}-1,a_{j'}+1)}\\
\hspace{-6mm}&=&
{\prod_{t=0}^{r-1}c^v\lambda_{j',a(i',j',a_{i'}-t,a_{j'}+t-1)}\over \prod_{t=0}^{r-1}c^u\lambda_{i',a(i',j',a_{i'}-t,a_{j'}+t-1)}}x_a, ~a\in [0,N).
\end{IEEEeqnarray}

Applying Lemma \ref{lem lambda} to \eqref{Eqn case2 e2}, if $j'>i'$, we get
\begin{equation*}
  (c^{rv}-c^{ru+1})x_a =c^{rv}(1-c^{ru-rv+1})x_a=0,
\end{equation*}
or
\begin{equation*}
  (c^{rv+r-1}-c^{ru})x_a =c^{ru}(c^{rv-ru+r-1}-1)x_a=0,
\end{equation*}
otherwise, we have
\begin{equation*}
  (c^{rv+1}-c^{ru})x_a =c^{ru}(c^{rv-ru+1}-1)x_a=0,
\end{equation*}
or
\begin{equation*}
(c^{rv}-c^{ru+r-1})x_a =c^{rv}(1-c^{ru-rv+r-1})x_a=0.
\end{equation*}
If $r$ is even, then $ru-rv+1$, $rv-ru+r-1$, $rv-ru+1$, and $ru-rv+r-1$ is odd, thus $$c^{ru-rv+1}, c^{rv-ru+r-1}, c^{rv-ru+1}, c^{ru-rv+r-1}\ne 1$$ when $q$ is odd; Otherwise, for any {\small $$W\in \{|ru-rv+1|, |rv-ru+r-1|, |rv-ru+1|, |ru-rv+r-1|\},$$}we have
\begin{equation*}
  0<W\le r\lceil\frac{n}{n'}\rceil-1<q-1
\end{equation*}
when $q>r\lceil\frac{n}{n'}\rceil$, i.e., $$c^{ru-rv+1}, c^{rv-ru+r-1}, c^{rv-ru+1}, c^{ru-rv+r-1}\ne 1$$ when $q>r\lceil\frac{n}{n'}\rceil$.
Hence,  if $q$ is odd and $r$ is even, or $q>r\lceil\frac{n}{n'}\rceil$ and $r$ is odd, we have that $$(c^{rv}-c^{ru+1})(c^{rv+r-1}-c^{ru})(c^{rv+1}-c^{ru})(c^{rv}-c^{ru+r-1})\ne 0,$$ thus $x_a=0$ for all $a\in [0,N)$, i.e., $X=0$.
Then, $A_i-A_j$ is nonsingular.

\textbf{Case 3:} If $i\not\equiv j ~\bmod n'$ and either $i'=n'-1$ or $j'=n'-1$, W.L.O.G., assuming that $i'=n'-1$, then $j'\ne n'-1$. Similar to Case 2, we have
\begin{equation*}
x_a=x_a\prod_{t=0}^{r-1}{c^v\lambda_{j',a(j',a_{j'}+t)}\over c^u}, a\in [0,N),
\end{equation*}
which in conjunction with Lemma \ref{lem lambda}, we have $$(c^{rv+1}-c^{ru})x_a=0$$ for all $a\in [0,N)$.  This implies that $x_a=0$ for all $a\in [0,N)$ by a similar analysis as in Case 2, i.e., $X=\mathbf{0}$.
Thus, $A_i-A_j$ is nonsingular.

Collecting the above three cases, we finish the proof.
\end{proof}

Let us see to  what extent  the field size $q$ of the $(n, k)$ MDS code $\mathcal{C}_3$ can be reduced  by Theorem
\ref{Thm MDS C3}.  For example, when $n'=12$, $r=3$, and  $n=24$. According to Theorem \ref{Thm MDS C3}, we can set  $x_{t,i}=x_{i}^t=2^{\lfloor {i\over 12}\rfloor t}$
in \eqref{Eqn general coding matrix} over $\mathbf{F}_7$ for $t\in [0,3)$ and $i\in [0,24)$, where $2$ is a primitive element of $\mathbf{F}_7$. Whereas,  by Theorem \ref{Thm C2}, the existence of the MDS code $\mathcal{C}_3$ requires a finite field with size larger than $4\times 10^7$.

\subsection{An $(n,k)$ MDS code $\mathcal{C}_4$ obtained by applying the generic transformation to a newly constructed MDS code $\mathcal{C}'_4$}

In this section, by using the approach of \cite{Barg1},
we first  construct an $\left(n'=\left(r+1\right)m,k'=n'-r\right)$ MDS code $\mathcal{C}'_4$  with sub-packetization level $r^m$,  and then   propose an $(n,k)$ MDS code  $\mathcal{C}_4$ with small sub-packetization level by  applying the generic transformation to the   code $\mathcal{C}'_4$. In fact, the   code $\mathcal{C}'_4$  can be viewed as an extension of the $(n'=rm,k'=r(m-1))$ MDS code in \cite{Barg2} with a longer code length.
Besides, $\mathcal{C}'_4$ in parity-check form can also be regarded as the counterpart of the   $(n'=k'+r,k'=(r+1)m)$  long minimum storage regenerating (MSR) code \cite{Long_IT} in systematic form. For simplicity, we call  $\mathcal{C}'_4$ the long code in this paper.
 In the following, we give the parity-check matrix, repair matrices and select matrices of the long code $\mathcal{C}'_4$.

The parity-check matrix $(A'_{t,i'})_{t\in [0,r),i'\in [0, n')}$  of the $(n'=(r+1)m,k'=n'-r)$  long  code $\mathcal{C}'_4$ satisfies
\begin{equation}\label{Eqn A=yB}
A'_{t,i'}=y_{t,i'}B'_{t,i'}
\end{equation}
and \eqref{Eqn pc C4'} in the next page,
\begin{figure*}[hb]
\hrulefill
\begin{equation}\label{Eqn pc C4'}
\left(\begin{array}{c}
         V_{i',0}\\
         V_{i',1}\\
         \vdots\\
         V_{i',r-1}
       \end{array}\right)B'_{t,i'}=\left\{\begin{array}{ll}
         \left(\begin{array}{c}
          \lambda_{i',0}^tV_{i',0}+\sum\limits_{u=1}^{r-1}(\lambda_{i',0}^t-\lambda_{i',u}^t)V_{i',u}\\
           \lambda_{i',1}^tV_{i',1}\\
           \vdots\\
           \lambda_{i',r-1}^tV_{i',r-1}
         \end{array}\right), & \mbox{if~}0 \le i' <m,\\
          \left(\begin{array}{c}
           \lambda_{i',0}^tV_{i',0}\\
           \lambda_{i',1}^tV_{i',1}+\sum\limits_{u=0,u \ne 1}^{r-1}(\lambda_{i',1}^t-\lambda_{i',u}^t)V_{i',u}\\
           \lambda_{i',2}^tV_{i',2}\\
           \vdots\\
           \lambda_{i',r-1}^tV_{i',r-1}
         \end{array}\right), & \mbox{if~}m \le i' <2m,\\
         ~~~~~~~~~~~~~~~~~~~~~~~~~~\vdots &
        ~~~~~~~~ \vdots\\
         \left(\begin{array}{c}
           \lambda_{i',0}^tV_{i',0}\\
           \vdots\\
           \lambda_{i',r-2}^tV_{i',r-2}\\
           \lambda_{i',r-1}^tV_{i',r-1}+\sum\limits_{u=0}^{r-2}(\lambda_{i',r-1}^t-\lambda_{i',u}^t)V_{i',u}
         \end{array}\right), & \mbox{if~}(r-1)m \le i' <rm,\\
         \left(\begin{array}{c}
           \lambda_{i',0}^tV_{i',0}\\
           \lambda_{i',1}^tV_{i',1}\\
           \vdots\\
           \lambda_{i',r-1}^tV_{i',r-1}
         \end{array}\right), & \mbox{if~}rm \le i' <(r+1)m,
       \end{array}\right.
\end{equation}
\end{figure*}
where $y_{t,i'}, \lambda_{i',u}\in \mathbf{F}_{q'}\backslash \{0\}$ for $i'\in [0, n')$ and $t, u\in [0,r)$,  $V_{i', 0}, \ldots, V_{i',r-1}$  are respectively defined by \eqref{Eqn_Vt} for $i'\in [0, m)$ and \eqref{B3} for $i'\in [m, n')$, i.e.,
\begin{IEEEeqnarray}{rCl}\label{Eqn reint A}
\nonumber \hspace{-7mm}&&V_{i',v}B'_{t,i'}\\
\hspace{-7mm}&=&\left\{\hspace{-2mm}
    \begin{array}{ll}
        \lambda_{i',v}^tV_{i',v}+\hspace{-1mm}\sum\limits_{u=0,u \ne v}^{r-1}\hspace{-1mm}( \lambda_{i',v}^t\hspace{-1mm}-\hspace{-1mm}\lambda_{i',u}^t)V_{i',u}, &\mbox{if\ }\lfloor{i'\over m}\rfloor=v,\\
        \lambda^{t}_{i',v}V_{i',v}, &\mbox{otherwise,}
    \end{array}\right.
\end{IEEEeqnarray}
for $i' \in [0,n')$ and $v, t\in [0,r)$.  The repair matrices and select matrices of the $(n',k')$ MDS code $\mathcal{C}'_4$ are respectively defined by
     \begin{equation}\label{eqn the definition of repair matrix of base long MDS code}
     R'_{i',j'}=S'_{i',t}=\left\{\begin{array}{ll}
         V_{i',\lfloor \frac{i'}{m}\rfloor}, & \mbox{if~} 0 \le i' < rm, \\
         \sum\limits_{u=0}^{r-1}V_{i',u}, & \mbox{if~} rm \le i' < n',
       \end{array}\right.
     \end{equation}
for $j'\in [0, n')\backslash\{i'\}$ and $t\in [0, r)$.

Obviously, $B'_{t,i'}$ is nonsingular for $t\in [0,r)$ and $i'\in [0, n')$ according to \eqref{Eqn pc C4'}.
Then we have the following result.
\begin{Theorem}\label{Thm_C4'_MDS}
The code $\mathcal{C}'_4$ has the MDS property over $\mathbf{F}_{q'}$ if $q' > N{n'-1\choose r-1}+1$.
\end{Theorem}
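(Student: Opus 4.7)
The plan is to verify the MDS property through the Combinatorial Nullstellensatz (Lemma \ref{Lemma Comb Null}), viewing the nonzero scalars $\{y_{t,i'}\}$ as free variables over $\mathbf{F}_{q'}$. The MDS property is equivalent to requiring, for every $r$-subset $T=\{i'_0<i'_1<\cdots<i'_{r-1}\}$ of $[0,n')$, that the $rN\times rN$ block matrix $M_T=(y_{t,i'_j}B'_{t,i'_j})_{t,j\in[0,r)}$ be nonsingular. Before doing anything else I would record the (easy) observation that each $B'_{t,i'}$ is itself nonsingular: in the reordered basis sorted according to the partition $\{V_{i',0},\ldots,V_{i',r-1}\}$, equation \eqref{Eqn pc C4'} shows that $B'_{t,i'}$ is (block) triangular with diagonal blocks $\lambda_{i',u}^t I_{N/r}$, hence $\det(B'_{t,i'})=\prod_{u=0}^{r-1}\lambda_{i',u}^{tN/r}\ne 0$.

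Next I would form the single polynomial
\begin{equation*}
  f\bigl(\{y_{t,i'}\}\bigr)\;=\;\prod_{T}\det(M_T)
\end{equation*}
indexed by all $\binom{n'}{r}$ subsets $T$. Because $y_{t,i'}$ appears only in the $(t,j)$-block of $M_T$ precisely when $i'=i'_j\in T$, its degree in $f$ is at most $N\binom{n'-1}{r-1}$. I claim $f$ admits the distinguished monomial
\begin{equation*}
  M^{*}\;=\;\prod_{T}\prod_{t=0}^{r-1}y_{t,i'_t(T)}^{\,N}\;=\;\prod_{t,i'}y_{t,i'}^{\,N\binom{i'}{t}\binom{n'-1-i'}{r-1-t}},
\end{equation*}
where $i'_t(T)$ denotes the $t$-th smallest element of $T$, and I would show that its coefficient in $f$ is nonzero. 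Inside a single factor $\det(M_T)$, the coefficient of $\prod_t y_{t,i'_t(T)}^N$ is supplied only by Leibniz terms that respect the block partition (since $y_{t,i'_t(T)}$ lives solely in the $(t,t)$-block of $M_T$), and this coefficient evaluates to $\pm\prod_{t}\det(B'_{t,i'_t(T)})$, which is nonzero by the preliminary observation.

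It then remains to argue that $M^{*}$ is obtained from $\prod_T\det(M_T)$ in only one way, namely by picking the identity-block monomial from every factor; the support pattern of $M^{*}$ is rigid enough to force this because, across different $T$s, the variable $y_{t,i'}$ can only be supplied by those $T$ containing $i'$, and the exponent $N\binom{i'}{t}\binom{n'-1-i'}{r-1-t}$ matches exactly the count of $T$ in which $i'$ is the $t$-th smallest element. A bookkeeping/counting argument that sums the exponents across alternative factorizations will show that any deviation from the identity-block choice in some $T$ would force a corresponding overshoot of some $y_{t,i'}$ in another $T$, contradicting the prescribed exponents of $M^{*}$. With the coefficient of $M^{*}$ thus shown to be $\pm\prod_T\prod_{t}\det(B'_{t,i'_t(T)})\ne 0$, Lemma \ref{Lemma Comb Null} applies with $S_{t,i'}=\mathbf{F}_{q'}\setminus\{0\}$: since $|S_{t,i'}|=q'-1$ must exceed every exponent $N\binom{i'}{t}\binom{n'-1-i'}{r-1-t}\le N\binom{n'-1}{r-1}$, the hypothesis $q'>N\binom{n'-1}{r-1}+1$ yields nonzero assignments $\{y_{t,i'}\}$ with $f\ne 0$, hence $\det(M_T)\ne 0$ for every $T$, establishing the MDS property.

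The step I expect to be the main obstacle is the uniqueness of the factorization of $M^{*}$ across the $\binom{n'}{r}$ factors, i.e., ruling out alternative combinations of Leibniz monomials from different $\det(M_T)$'s that could conspire to reproduce $M^{*}$ with a cancelling sign. Handling this cleanly may require formulating the exponent pattern as a bipartite constraint $(t,T)\mapsto i'_t(T)$ and exploiting the fact that in every $\det(M_T)$ each admissible Leibniz monomial has a doubly $N$-stochastic exponent pattern, which then pins down a single choice per $T$ once one fixes the global target $M^{*}$.
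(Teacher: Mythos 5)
Your approach matches the paper's: the paper proves this theorem ``similar to that of Theorem \ref{Thm general MDS},'' which is exactly the argument you give — treat the nonzero scalars (your $y_{t,i'}$, the paper's $x_{t,j}$) as indeterminates, form $f=\prod_T\det(M_T)$ over all $r$-subsets, extract the block-diagonal monomial from each factor using the nonsingularity of the diagonal blocks (the paper isolates this per-factor step as Proposition \ref{Pro1}, proved by Laplace expansion), and invoke Combinatorial Nullstellensatz with the exponent bound $N\binom{n'-1}{r-1}$. The ``main obstacle'' you flag — that the distinguished monomial $M^{*}$ might be reconstructible from the $\binom{n'}{r}$ factors in more than one way, allowing cancellation — is a real subtlety that the paper's write-up leaves implicit; it can be closed cleanly by noting that the block-diagonal monomial $\prod_t y_{t,i'_t(T)}^N$ is the lexicographically largest monomial in $\det(M_T)$ (for the variable order $y_{0,\cdot}\succ y_{1,\cdot}\succ\cdots$, refined by $i'$, the doubly-$N$-stochastic exponent constraint forces the lex-max choice to be the diagonal one), and since a monomial order is multiplicative, the lex-max monomial of the product is the product of the lex-max monomials, with coefficient the product of the individual nonzero coefficients — no bipartite bookkeeping needed.
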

\begin{proof}
It can be proven similar to that of Theorem \ref{Thm general MDS}.
\end{proof}

\begin{Theorem}\label{Thm_C4'_RB}
The code $\mathcal{C}'_4$ has the optimal repair bandwidth if $\lambda_{i',0},\lambda_{i',1},\cdots,\lambda_{i',r-1}$ are $r$ distinct elements in $\mathbf{F}_{q'}$ for any $i' \in [0,n')$.
     \end{Theorem}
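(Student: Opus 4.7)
The plan is to verify the two repair conditions \eqref{repair_node_requirement1 n-1} and \eqref{repair_node_requirement3 n-1} with $\beta=N/r$, yielding total bandwidth $(n'-1)N/r=\gamma^\ast$. Since $A'_{t,i'}=y_{t,i'}B'_{t,i'}$ with $y_{t,i'}\neq 0$ and since $S'_{i',t}=R'_{i',j'}$ depends on neither $t$ nor $j'$ (call it $S'_{i'}$), the two conditions reduce to: (i) the $N\times N$ stack of $S'_{i'}B'_{t,i'}$ for $t=0,\ldots,r-1$ is nonsingular; and (ii) for every $j'\neq i'$ and every $t\in[0,r)$, the row span of $S'_{i'}B'_{t,j'}$ lies inside the row span of $S'_{i'}$. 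A structural fact used throughout is that, by the ascending-order convention in Section \ref{subsection:partition}, for each fiber index $\ell\in[0,N/r)$ the rows $e_{a(u,\ell)}^\top$ of $V_{i',0},\ldots,V_{i',r-1}$ are indexed by vectors that agree on every coordinate except the $i''$-th, where $i'':=i'\bmod m$.

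For condition (i), I restrict the stack to each fiber $\ell$, obtaining an $r\times r$ coefficient matrix $C_\ell$ on the basis $\{e_{a(u,\ell)}\}_{u\in[0,r)}$. When $i'\ge rm$, equation \eqref{Eqn reint A} gives $V_{i',u}B'_{t,i'}=\lambda_{i',u}^tV_{i',u}$ for every $u$, so $C_\ell$ is the Vandermonde matrix on the $r$ distinct values $\lambda_{i',0},\ldots,\lambda_{i',r-1}$ and is therefore nonsingular. When $i'<rm$ with $v:=\lfloor i'/m\rfloor$ and $S'_{i'}=V_{i',v}$, the same equation yields $C_\ell=V\cdot T$, where $V$ is the same Vandermonde and $T$ is the $r\times r$ matrix whose $v$-th column is the $v$-th standard basis vector of $\mathbf{F}_{q'}^r$ and whose $u$-th column ($u\neq v$) is the difference of the $v$-th and $u$-th standard basis vectors; after permuting the $v$-th row and column of $T$ to the top-left, $T$ becomes upper triangular with $\pm 1$ diagonal, hence nonsingular. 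Thus $\det C_\ell\neq 0$, and since distinct $\ell$'s involve disjoint basis vectors, the whole stack has rank $N$.

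For condition (ii), I first read off from \eqref{Eqn reint A} how $B'_{t,j'}$ acts on each standard basis vector: the row $e_a^\top$ is rescaled to $\lambda_{j',a_{j''}}^te_a^\top$ whenever $j'\ge rm$ or $a_{j''}\neq v':=\lfloor j'/m\rfloor$, and otherwise it is sent to the mixing row $\lambda_{j',v'}^te_a^\top+\sum_{w\neq v'}(\lambda_{j',v'}^t-\lambda_{j',w}^t)e_{b(a,w)}^\top$, where $b(a,w)$ is obtained from $a$ by resetting its $j''$-th coordinate to $w$. I then split into four cases according to whether $i',j'$ lie in $[0,rm)$ or $[rm,n')$, further split by whether $i''=j''$. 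Applying the above action row-by-row to $S'_{i'}$, I check case-by-case that each row of $S'_{i'}B'_{t,j'}$ either equals a scalar multiple of the same row of $S'_{i'}$, or (in the sum case $S'_{i'}=\sum_uV_{i',u}$ with $i''=j''$) collapses back to a scalar multiple via the algebraic cancellation $(\lambda_{j',v'}^t-\lambda_{j',u}^t)+\lambda_{j',u}^t=\lambda_{j',v'}^t$, or (in the sum case with $i''\neq j''$) is a linear combination of at most $r$ rows of $S'_{i'}$.

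The main obstacle is this last sub-case, namely $i'\ge rm$, $j'<rm$, $i''\neq j''$, with the common $j''$-th coordinate of the fiber equal to $v'$: the mixing in $B'_{t,j'}$ spreads the $\ell_0$-th row of $S'_{i'}$ onto partner rows $\sum_u e_{b(a(u,\ell_0),w)}^\top$ for $w\neq v'$, and one has to recognize that each such partner sum is itself a row of $S'_{i'}=\sum_u V_{i',u}$---specifically the fiber obtained from $\ell_0$ by shifting the $j''$-th coordinate from $v'$ to $w$ while still summing over the $i''$-th coordinate. This uses crucially that $i''\neq j''$, so the operations of ``summing over $i''$'' and ``shifting $j''$'' commute. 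Once this bookkeeping is completed, every row of $S'_{i'}B'_{t,j'}$ lies in the row span of $S'_{i'}=R'_{i',j'}$, whence $\textrm{rank}(R'_{i',j'})=N/r$ is preserved in \eqref{repair_node_requirement3 n-1} and the repair bandwidth equals $(n'-1)N/r=\gamma^\ast$.
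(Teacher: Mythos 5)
Your proof is correct and follows essentially the same route as the paper's Appendix B: verifying \eqref{repair_node_requirement1 n-1} via a Vandermonde-type invertibility argument (your fiber-wise factorization $C_\ell = VT$ is the per-row version of the paper's column manipulations on the $r\times r$ determinant) and verifying \eqref{repair_node_requirement3 n-1} by a four-way case split showing $S'_{i'}B'_{t,j'}$ stays inside the row span of $S'_{i'}=R'_{i',j'}$, with the same telescoping cancellation $(\lambda_{j',v'}^t-\lambda_{j',w}^t)+\lambda_{j',w}^t=\lambda_{j',v'}^t$ handling the mixing row. The one piece you flag as the ``main obstacle''---that summing over the $i''$-th coordinate commutes with shifting the $j''$-th coordinate when $i''\neq j''$, so the spread rows are themselves rows of $\sum_u V_{i',u}$---is exactly what the paper's decomposition into the $V_{i',j',a,w}$ encodes.
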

     \begin{proof}
The proof is given in Appendix \ref{sec:Appen2}.
\end{proof}
Based on the long code $\mathcal{C}'_4$, we have the following result by directly applying the generic transformation.

\begin{Theorem}\label{Thm C4 result1}
By applying the generic transformation in Section \ref{sec generic} to   the  $(n', k')$ long code $\mathcal{C}'_4$, an $(n,k)$ MDS code $\mathcal{C}_4$ over $\mathbf{F}_q$ with
$k=n-r$ and $q>N{n-1\choose r-1}+1$ can be obtained. Specifically, the sub-packetization level of the MDS code $\mathcal{C}_4$ is   $r^{n'\over {r+1}}$ while its repair bandwidth   for node $i$ ($i\in [0, n)$) is
\begin{equation*}
  \gamma_i = \left\{
                      \begin{array}{ll}
                      (1+\frac{(\lceil\frac{n}{n'}\rceil-1)(r-1)}{n-1})\gamma^*, &\mbox{\ \ if\ \ } 0\le i\% n'<n\%n', \\
                       (1+\frac{(\lfloor\frac{n}{n'}\rfloor-1)(r-1)}{n-1})\gamma^*, & \mbox{\ \ otherwise}.
                      \end{array}
                    \right.
\end{equation*}
\end{Theorem}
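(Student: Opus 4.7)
The plan is to derive Theorem \ref{Thm C4 result1} by combining the two generic results of Section \ref{sec generic} with the structural properties of the base code $\mathcal{C}'_4$ already established. The sub-packetization claim is immediate: the generic transformation preserves $N$, and with $n' = (r+1)m$ the base code has $N = r^m = r^{n'/(r+1)}$. Hence only the MDS property and the repair-bandwidth expression still require argument.

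For the MDS property I would invoke Theorem \ref{Thm general MDS}. Its condition (i) is exactly the hypothesis $q > N\binom{n-1}{r-1} + 1$ asserted in the theorem. Condition (ii) demands that every $A'_{t,i'} = y_{t,i'} B'_{t,i'}$ be nonsingular, which (since $y_{t,i'} \ne 0$) reduces to invertibility of $B'_{t,i'}$. Inspecting \eqref{Eqn reint A}, $B'_{t,i'}$ preserves each length-$r$ fiber of the $r$-ary index $(a_0,\dots,a_{m-1})$ along the $i'$-th coordinate, so it decomposes into $r^{m-1}$ independent $r \times r$ blocks. Outside the distinguished range (i.e., $i' \ge rm$, or $i' < rm$ with block row $v \ne \lfloor i'/m \rfloor$) each such block is diagonal with nonzero entries $\lambda_{i',u}^t$; in the distinguished case $v = \lfloor i'/m \rfloor$ and $i' < rm$, row $v$ carries $\lambda_{i',v}^t$ on the diagonal and $\lambda_{i',v}^t - \lambda_{i',u}^t$ in the other columns, while every other row $u \ne v$ remains diagonal with entry $\lambda_{i',u}^t$. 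Cofactor expansion along the $r-1$ diagonal rows collapses the determinant of the fiber-block to $\prod_{u=0}^{r-1}\lambda_{i',u}^t$, which is nonzero because every $\lambda_{i',u} \in \mathbf{F}_{q'}\setminus\{0\}$. Hence $B'_{t,i'}$ is invertible, and Theorem \ref{Thm general MDS} delivers the MDS property of $\mathcal{C}_4$.

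For the repair bandwidth I would apply Theorem \ref{Thm gene band}. Theorem \ref{Thm_C4'_RB} certifies that $\mathcal{C}'_4$ attains the optimal repair bandwidth via the repair and select matrices in \eqref{eqn the definition of repair matrix of base long MDS code}, so all hypotheses of Theorem \ref{Thm gene band} are met; the two-case expression it produces for $\gamma_i$ agrees with the statement of Theorem \ref{Thm C4 result1} verbatim. The only non-routine step is the small determinant computation that verifies the fiber-block is nonsingular; everything else is an immediate invocation of earlier results, so I do not anticipate any serious obstacle.
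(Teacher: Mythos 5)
Your proposal is correct and takes essentially the same route as the paper: the paper gives no separate proof for Theorem~\ref{Thm C4 result1}, remarking only that the result follows ``by directly applying the generic transformation,'' which is exactly the combination of Theorem~\ref{Thm general MDS} (MDS property), Theorem~\ref{Thm gene band} (repair bandwidth), and Theorem~\ref{Thm_C4'_RB} (optimal repair of $\mathcal{C}'_4$) that you invoke. The one place you elaborate beyond the paper is the fiber-block cofactor argument showing $B'_{t,i'}$ is nonsingular, which the paper simply asserts as obvious from \eqref{Eqn pc C4'}; your argument is a correct justification of that assertion.
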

In what follows, we present a solution to determine the exact field size of the MDS code $\mathcal{C}_4$ for the case of $r=2$, which is much smaller than $N{n-1\choose r-1}+2$.

By \eqref{Eqn general coding matrix} and \eqref{Eqn A=yB}, the parity-check matrix $(A_{t,i})_{t\in [0,r),i\in [0, n)}$ of the $(n,k)$ MDS code $\mathcal{C}_4$ satisfies
\begin{equation}\label{Eqn pc C4}
    A_{t,i}=x_{t,i}A'_{t,i\%n'}=x_{t,i}y_{t,i\%n'}B'_{t,i\%n'}=z_{t,i}B'_{t,i\%n'},
\end{equation}
where
\begin{equation*}
    z_{t,i}=x_{t,i}y_{t,i\%n'},\,\,t\in [0,r),\,\,i\in [0, n).
\end{equation*}
Then we have the following result.

\begin{Theorem}\label{Thm C4 result2}
       When $r=2$, the field size $q$ of the $(n, k)$ MDS code $\mathcal{C}_4$ can be reduced to
\begin{equation}\label{Eqn_Thm11_q}
  q>\hspace{-1mm}\left\{\hspace{-2mm}
                      \begin{array}{ll}
                       2m(\lceil\frac{n}{n'}\rceil-1)+2(n\% n'), & \mbox{if\ }0<n\% n'<m,\\
                       2m\lceil\frac{n}{n'}\rceil, &\mbox{otherwise},
                      \end{array}
                    \right.
\end{equation}
by setting
       \begin{equation}\label{Eqn z}
         z_{t,i}=c^{2mt\lfloor\frac{i}{n'}\rfloor}
       \end{equation}
for $t=0,1$, $i \in [0,n)$ and
       \begin{IEEEeqnarray}{rCl}
\label{C4r=2coeff1} &&\lambda_{i',0}=\lambda_{i'+m,0}=\lambda_{i'+2m,1}=c^{2i'},\\
\label{C4r=2coeff2} &&\lambda_{i',1}=\lambda_{i'+m,1}=\lambda_{i'+2m,0}=c^{2i'+1},
       \end{IEEEeqnarray}
       in \eqref{Eqn pc C4'} for $i' \in [0,m)$, where $n'=3m$ and $c$ is a primitive element of $\mathbf{F}_q$.
     \end{Theorem}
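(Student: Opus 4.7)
The plan is to follow the template of the proof of Theorem \ref{Thm MDS C1 explicit}, adapted to the more intricate structure of $\mathcal{C}'_4$. First I would observe that when $t=0$, every block $B'_{0,i'}$ collapses to the identity: in \eqref{Eqn reint A} the scalars $\lambda_{i',v}^0-\lambda_{i',u}^0$ all vanish, and the chosen $z_{0,i}=c^0=1$ gives $A_{0,i}=I$ for every $i\in[0,n)$. Since Theorem \ref{Thm_C4'_RB} already delivers the claimed repair bandwidth as soon as the $\lambda_{i',v}$ are distinct within each column, which \eqref{C4r=2coeff1}--\eqref{C4r=2coeff2} clearly guarantee, only the MDS property remains to be verified. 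By the standard reduction this is equivalent to showing that every $2\times 2$ block submatrix of $(A_{t,i})$ is nonsingular, and because the top block row is $(I,I)$, a single block row reduction further reduces the task to showing
\begin{equation*}
D_{i,j}\;\triangleq\;A_{1,j}-A_{1,i}\;=\;c^{2mv}B'_{1,j'}-c^{2mu}B'_{1,i'}
\end{equation*}
is nonsingular for every pair $(u,i')\ne (v,j')$, where $i=un'+i'$ and $j=vn'+j'$.

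The analysis will split into two main cases. When $i'=j'$ and $u\ne v$, $D_{i,j}=(c^{2mv}-c^{2mu})B'_{1,i'}$; since $B'_{1,i'}$ is nonsingular one only needs $c^{2m(v-u)}\ne 1$, which, combined with $0<|v-u|\le\lceil n/n'\rceil-1$, is implied by the bound in \eqref{Eqn_Thm11_q}. When $i'\ne j'$, I would left-multiply $D_{i,j}$ by the partition matrix stacking $V_{i',0}$ and $V_{i',1}$, and then refine by the finer partition $V_{i',j',t_1,t_2}$ coming from \eqref{Eqn_B3} (extended to indices in $[0,3m)$ via \eqref{B3}). On each such subspace the action of $B'_{1,i'}$ and $B'_{1,j'}$ can be read off directly from \eqref{Eqn reint A}, and invertibility on that piece reduces to the nonvanishing of a small scalar determinant built from the relevant $\lambda$'s together with the factors $c^{2mu}$ and $c^{2mv}$. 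The very particular assignment \eqref{C4r=2coeff1}--\eqref{C4r=2coeff2} is designed so that each such scalar determinant reduces to a single condition of the form $c^{k}\ne 1$ with $|k|$ bounded by precisely the quantity in \eqref{Eqn_Thm11_q}.

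The main obstacle will be the bookkeeping in this second case: one must handle all subcases determined by the three groups $\lfloor i'/m\rfloor,\lfloor j'/m\rfloor\in\{0,1,2\}$ and by whether $i'\equiv j'\pmod{m}$, since these control both whether $\{V_{i',0},V_{i',1}\}$ and $\{V_{j',0},V_{j',1}\}$ coincide as set partitions and which off-diagonal entries of $B'_{1,i'}$ and $B'_{1,j'}$ are nonzero. The point of choosing $\lambda_{i',v}=\lambda_{i'+m,v}$ while swapping the pair for $\lambda_{i'+2m,v}$ is precisely that, after the cancellations in each subcase, the exponent appearing in the $c^{k}\ne 1$ condition is controlled by $2m\lceil n/n'\rceil$, or by the sharper $2m(\lceil n/n'\rceil-1)+2(n\%n')$ when $0<n\%n'<m$. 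Once this case analysis is completed, \eqref{Eqn_Thm11_q} ensures $D_{i,j}$ is nonsingular for every admissible $i\ne j$, which establishes the MDS property of $\mathcal{C}_4$.
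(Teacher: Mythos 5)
Your proposal is correct and follows essentially the same route as the paper's proof: reduce the MDS property to the nonsingularity of $A_{1,i}-A_{1,j}$, split on $i'=j'$ versus $i'\ne j'$, and in the latter case left-multiply by the stacked partition matrices $V_{i',j',t_1,t_2}$ so that \eqref{Eqn reint A} turns the block difference into scalar conditions $c^k\ne 1$, bounded exactly by \eqref{Eqn_Thm11_q}. One small imprecision: in the subcases where $j'\equiv i'\pmod m$ but $i'\ne j'$ (the paper's Cases 4--6 with $j'=i'+m$ or $j'=i'+2m$), the partitions $\{V_{i',t}\}$ and $\{V_{j',t}\}$ coincide so no finer refinement is available, and the resulting determinant factors into a \emph{product} of two conditions $(c^{2m(u-v)+1}-1)(c^{2m(u-v)-1}-1)\ne 0$ rather than a single $c^k\ne1$; this is still covered by the same bound, but the structure is slightly different from the generic $i'\not\equiv j'\pmod m$ subcases.
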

     \begin{proof}
According to \eqref{Eqn pc C4'},   the code $\mathcal{C}_4$ has the MDS property if and only if any $2 \times 2$ sub-block matrix of
       \begin{equation*}
         \left(\hspace{-2mm}\begin{array}{cccc}
          A_{0,0} & A_{0,1} & \cdots & A_{0,n-1}\\
           A_{1,0} & A_{1,1} & \cdots & A_{1,n-1}
         \end{array}\hspace{-2mm}\right)\hspace{-1mm}=\hspace{-1mm}\left(\hspace{-2mm}\begin{array}{cccc}
            I & I & \cdots & I\\
           A_{1,0} & A_{1,1} & \cdots & A_{1,n-1}
         \end{array}\hspace{-2mm}\right)
       \end{equation*}
       is nonsingular, i.e, $A_{1,i}-A_{1,j}$ is nonsingular for any $i,j \in [0,n)$ with $i \ne j$. Let us rewrite $i=un'+i'$ and $j=vn'+j'$ for some $u,v \in [0, \lceil\frac{n}{n'}\rceil)$ and $i',j' \in [0,n')$, where $(u,i') \ne (v,j')$. In the following, we analyze the nonsingularity of $A_{1,i}-A_{1,j}$ in the following 6 cases according to \eqref{Eqn pc C4}-\eqref{C4r=2coeff2}.

       \textbf{Case 1:} When $0 \le i'=j'<3m$, then
       \begin{IEEEeqnarray*}{rCl}
         && \mbox{rank}\left(
        A_{1,i} - A_{1,j}
         \right)\\&=&
         \mbox{rank}\left(z_{1,i}B'_{1,i'} - z_{1,j}B'_{1,j'}\right)
         \\&=&\mbox{rank}\left((z_{1,i}-z_{1,j})B'_{1,i'}\right)\\
         &=&\mbox{rank}((c^{2mu}-c^{2mv})B'_{1,i'})\\
         &=&N\\
         &\Leftrightarrow& c^{2mv}\left(c^{2m(u-v)}-1\right) \ne 0,
       \end{IEEEeqnarray*}
which always holds since $$0<|2m(u-v)|\le 2m\lceil\frac{n}{n'}\rceil-2m<q-1.$$

       \textbf{Case 2:} When $0 \le i'< j'<m$, then
       \begin{IEEEeqnarray}{rCl}
  \nonumber      && \mbox{rank}(A_{1,i} - A_{1,j})\\
  \nonumber&=&\mbox{rank}(z_{1,i}B'_{1,i'} - z_{1,j}B'_{1,j'})\\
  \nonumber&=&\mbox{rank}(
         \left(\begin{array}{c}
           V_{i',j',0,0}\\V_{i',j',0,1}\\V_{i',j',1,0}\\V_{i',j',1,1}
         \end{array}\right)(z_{1,i}B'_{1,i'} - z_{1,j}B'_{1,j'}))\\
   \nonumber      &=&\mbox{rank}(
         \left(\begin{array}{c}
           V_{i',j',0,0}(z_{1,i}B'_{1,i'} - z_{1,j}B'_{1,j'})\\
           V_{i',j',0,1}(z_{1,i}B'_{1,i'} - z_{1,j}B'_{1,j'})\\
           V_{i',j',1,0}(z_{1,i}B'_{1,i'} - z_{1,j}B'_{1,j'})\\
           V_{i',j',1,1}(z_{1,i}B'_{1,i'} - z_{1,j}B'_{1,j'})
         \end{array}\right))\\
  \nonumber       &=&\mbox{rank}(\left(\begin{array}{l}
           (z_{1,i}\lambda_{i',0}-z_{1,j}\lambda_{j',0})V_{i',j',0,0}\\\hspace{8mm}+z_{1,i}(\lambda_{i',0}-\lambda_{i',1})V_{i',j',1,0}\\\hspace{15mm}-z_{1,j}(\lambda_{j',0}-\lambda_{j',1})V_{i',j',0,1}\\
           (z_{1,i}\lambda_{i',0}-z_{1,j}\lambda_{j',1})V_{i',j',0,1}\\\hspace{8mm}+z_{1,i}(\lambda_{i',0}-\lambda_{i',1})V_{i',j',1,1}\\
            (z_{1,i}\lambda_{i',1}-z_{1,j}\lambda_{j',0})V_{i',j',1,0}\\\hspace{8mm}-z_{1,j} (\lambda_{j',0}-\lambda_{j',1})V_{i',j',1,1}\\
           (z_{1,i}\lambda_{i',1}-z_{1,j}\lambda_{j',1})V_{i',j',1,1}
           \end{array}\right))\\
    \nonumber     &=&N\\
     \nonumber     &\Leftrightarrow& z_{1,i}\lambda_{i',a}-z_{1,j}\lambda_{j',b}\ne 0 \mbox{~for~all~}a, b=0,1,\\
 \nonumber      &\Leftrightarrow& c^{2m(u-v)+2(i'-j')+a-b}-1\ne 0\mbox{~for~all~}a, b=0,1,
       \end{IEEEeqnarray}
which is equivalent to
\begin{equation}\label{Eqn-C4-Case2}
  0<|2m(u-v)+2(i'-j')+a-b|<q-1,~ a, b=0,1.
\end{equation}
Obviously,
\begin{equation*}
0<|2m(u-v)+2(i'-j')+a-b|\le W
\end{equation*}
where $W=2mz+2w+1$, $z=\lceil\frac{n}{n'}\rceil-1$, $w=n\%n'-1$ if $0<n\%n'<m$ and $w=m-1$ otherwise, i.e.,
\begin{equation*}
W=\left\{\hspace{-1mm}\begin{array}{ll}
2m\left(\lceil\frac{n}{n'}\rceil-1\right)+2(n\%n')-1,& \mbox{if}~0<n\%n'<m,\\
2m\lceil\frac{n}{n'}\rceil-1, & \mbox{otherwise.}
\end{array}\right.
\end{equation*}
Therefore, \eqref{Eqn-C4-Case2} holds  if \eqref{Eqn_Thm11_q} is satisfied.

\textbf{Case 3:} When $m \le i' < j'<2m$ or $2m \le i' < j'<3m$, similar to that of Case 2, we also have that
       \begin{equation*}
         \mbox{rank}(A_{1,i} - A_{1,j})=N \Leftrightarrow c^{2m(u-v)+2(i'-j')\pm (a-b)}-1\ne 0
       \end{equation*}
for all $a, b=0,1$,
which holds from a similar analysis as  in Case 2.

\textbf{Case 4:} When $0 \le i' < m$ and $m \le j'<2m$, if $j'=i'+m$, then by \eqref{B3} we have
{\small       \begin{IEEEeqnarray*}{rCl}
 \hspace{-1mm} && \mbox{rank}(A_{1,i}-A_{1,j})\\
  \hspace{-1mm}&=& \mbox{rank}(z_{1,i}B'_{1,i'} - z_{1,j}B'_{1,j'})\\
   \hspace{-1mm}&=& \mbox{rank}(\left(\begin{array}{c}
               V_{i',0}(z_{1,i}B'_{1,i'} - z_{1,j}B'_{1,j'}) \\
               V_{i',1}(z_{1,i}B'_{1,i'} - z_{1,j}B'_{1,j'})
         \end{array}
\right))\\
 \hspace{-1mm}&=& \mbox{rank}(\left(\hspace{-1mm}\begin{array}{c}
 z_{1,i}\lambda_{i',0}V_{i',0}+z_{1,i}(\lambda_{i',0}\hspace{-1mm}-\hspace{-1mm}\lambda_{i',1})V_{i',1}- z_{1,j}\lambda_{j',0}V_{i',0} \\
 z_{1,i}\lambda_{i',1}V_{i',1}-z_{1,j}\lambda_{j',1}V_{i',1}- z_{1,j}(\lambda_{j',1}\hspace{-1mm}-\hspace{-1mm}\lambda_{j',0})V_{i',0}
         \end{array}
 \hspace{-2mm}\right))\\
 \hspace{-1mm}&=& \mbox{rank}(\left(\hspace{-1mm}\begin{array}{c}
 (z_{1,i}\lambda_{i',0}-z_{1,j}\lambda_{j',0})V_{i',0}+z_{1,i}(\lambda_{i',0}-\lambda_{i',1})V_{i',1}  \\
 (z_{1,i}\lambda_{i',1}-z_{1,j}\lambda_{j',1})V_{i',1}- z_{1,j}(\lambda_{j',1}-\lambda_{j',0})V_{i',0}
         \end{array}
\hspace{-1mm}\right))\\
 \hspace{-1mm}&=&\mbox{rank}(\left(\hspace{-1mm}\begin{array}{l}
 (z_{1,i}\lambda_{i',0}-z_{1,j}\lambda_{j',1})(V_{i',0}+V_{i',1})  \\
 (z_{1,i}\lambda_{i',1}-z_{1,j}\lambda_{j',1})V_{i',1}- z_{1,j}(\lambda_{j',1}-\lambda_{j',0})V_{i',0}
         \end{array}
\hspace{-1mm}\right))\\
 \hspace{-1mm}&=&N,
\end{IEEEeqnarray*}
}
which is equivalent to{\small
\begin{equation*}
  (z_{1,i}\lambda_{i',0}\hspace{-.5mm}-\hspace{-.5mm}z_{1,j}\lambda_{j',1})\left((z_{1,i}\lambda_{i',1}\hspace{-1mm}-\hspace{-1mm}z_{1,j}\lambda_{j',1})\hspace{-.5mm}
  +\hspace{-.5mm}z_{1,j}(\lambda_{j',1}\hspace{-1mm}-\hspace{-1mm}\lambda_{j',0})\right)\hspace{-.5mm}\ne\hspace{-.5mm} 0,
\end{equation*}}
 i.e.,
\begin{IEEEeqnarray*}{rCl}
\hspace{-2mm}&&\hspace{-1mm}(c^{2mu+2i'}-c^{2mv+2i'+1})\left((c^{2mu+2i'+1}-c^{2mv+2i'+1})\right.\\\hspace{-2mm}&& \hspace{45mm} \left.+c^{2mv}(c^{2i'+1}-c^{2i'})\right)\\
\hspace{-2mm}&=&\hspace{-1mm}
c^{2mv+2i'+1}(c^{2m(u-v)-1}-1)c^{2mv+2i'}(c^{2m(u-v)+1}-1)\\
\hspace{-2mm}&\ne& \hspace{-1mm}0.
\end{IEEEeqnarray*}
The above inequality always holds since $$0<|2m(u-v)\pm 1|\le 2m\lceil\frac{n}{n'}\rceil-2m+1<q-1;$$
Otherwise, similar to Case 2, we have that
\begin{equation*}
                \mbox{rank}(A_{1,i}-A_{1,j})=N
\end{equation*}
is equivalent to
$$c^{2m(u-v)+2(i'-j'+m)+a-b}-1\ne 0,~\mbox{for~all~}a, b=0,1,$$
which holds according to a similar analysis  as     in Case 2.

 \textbf{Case 5:} When $0 \le i' < m$ and $2m \le j'<3m$, if $j'=i'+2m$, then by \eqref{B3} we have
{\small
       \begin{IEEEeqnarray*}{rCl}
&& \mbox{rank}(A_{1,i}-A_{1,j})\\
&=&\mbox{rank}(z_{1,i}B'_{1,i'} - z_{1,j}B'_{1,j'})\\
&=&\mbox{rank}(\left(\begin{array}{c}
                  V_{i',0}(z_{1,i}B'_{1,i'} - z_{1,j}B'_{1,j'}) \\
                  V_{i',1}(z_{1,i}B'_{1,i'} - z_{1,j}B'_{1,j'})
                \end{array}
            \right))\\
&=&\mbox{rank}(\left(\hspace{-.5mm}
                \begin{array}{l}
 z_{1,i}\lambda_{i',0}V_{i',0}\hspace{-.5mm}+\hspace{-.5mm}z_{1,i}(\lambda_{i',0}\hspace{-1mm}-\hspace{-1mm}\lambda_{i',1})V_{i',1}\hspace{-.5mm} -\hspace{-.5mm} z_{1,j}\lambda_{j',0}V_{i',0} \\
                    z_{1,i}\lambda_{i',1}V_{i',1} - z_{1,j}\lambda_{j',1}V_{i',1}
                \end{array}\hspace{-.5mm}\right))\\
&=&\mbox{rank}(\left(\hspace{-.5mm}
                \begin{array}{l}
 (z_{1,i}\lambda_{i',0}\hspace{-.5mm}-\hspace{-.5mm} z_{1,j}\lambda_{j',0})V_{i',0}+z_{1,i}(\lambda_{i',0}\hspace{-.5mm}-\hspace{-.5mm}\lambda_{i',1})V_{i',1}  \\
                    (z_{1,i}\lambda_{i',1}- z_{1,j}\lambda_{j',1})V_{i',1}
 \end{array}\hspace{-.5mm}\right))\\
 &=&N\\
&\Leftrightarrow& c^{2mv+2i'+1}(c^{2m(u-v)-1}-1)c^{2mv+2i'}(c^{2m(u-v)+1}-1) \ne 0;
\end{IEEEeqnarray*}
}
which holds for a  similar reason as in Case 4;
Otherwise,
\begin{IEEEeqnarray*}{rCl}
\hspace{-3mm}     && \mbox{rank}(A_{1,i}-A_{1,j})\\
 \hspace{-3mm}&=&\mbox{rank}(z_{1,i}B'_{1,i'} - z_{1,j}B'_{1,j'})\\
 \hspace{-3mm}&=& \mbox{rank}
         (\left(\begin{array}{c}
           V_{i',j',0,0}(z_{1,i}B'_{1,i'} - z_{1,j}B'_{1,j'})\\
           V_{i',j',0,1}(z_{1,i}B'_{1,i'} - z_{1,j}B'_{1,j'})\\
           V_{i',j',1,0}(z_{1,i}B'_{1,i'} - z_{1,j}B'_{1,j'})\\
           V_{i',j',1,1}(z_{1,i}B'_{1,i'} - z_{1,j}B'_{1,j'})
         \end{array}\right))\\
\hspace{-3mm}&=&\mbox{rank}(\left(
                \begin{array}{l}
            z_{1,i}\lambda_{i',0}V_{i',j',0,0}- z_{1,j}\lambda_{j',0} V_{i',j',0,0}\\\hspace{15mm}+  z_{1,i}(\lambda_{i',0}-\lambda_{i',1}) V_{i',j',1,0}\\
                    z_{1,i}\lambda_{i',0}V_{i',j',0,1}- z_{1,j}\lambda_{j',1} V_{i',j',0,1}\\\hspace{15mm}+ z_{1,i}(\lambda_{i',0}-\lambda_{i',1}) V_{i',j',1,1} \\
                    z_{1,i}\lambda_{i',1}V_{i',j',1,0}- z_{1,j}\lambda_{j',0} V_{i',j',1,0}\\
                     z_{1,i}\lambda_{i',1}V_{i',j',1,1}- z_{1,j}\lambda_{j',1} V_{i',j',1,1}
                \end{array}\right))\\
\hspace{-3mm}&=&\mbox{rank}(\left(
                \begin{array}{l}
            (z_{1,i}\lambda_{i',0}- z_{1,j}\lambda_{j',0})V_{i',j',0,0}\\\hspace{18mm}+  z_{1,i}(\lambda_{i',0}-\lambda_{i',1}) V_{i',j',1,0}\\
                    (z_{1,i}\lambda_{i',0}-z_{1,j}\lambda_{j',1})V_{i',j',0,1}\\\hspace{18mm}+ z_{1,i}(\lambda_{i',0}-\lambda_{i',1}) V_{i',j',1,1}  \\
                    (z_{1,i}\lambda_{i',1}- z_{1,j}\lambda_{j',0}) V_{i',j',1,0}\\
                     (z_{1,i}\lambda_{i',1}- z_{1,j}\lambda_{j',1}) V_{i',j',1,1}
                \end{array}\right))\\
\hspace{-3mm}&=&N,
\end{IEEEeqnarray*}
which is equivalent to
\begin{equation*}
  z_{1,i}\lambda_{i',a}-z_{1,j}\lambda_{j',b}\ne 0,\mbox{~for~all~}a, b=0,1,
\end{equation*}
i.e.,
\begin{equation*}
c^{2mv+2(j'-2m)+1-b}(c^{2m(u-v)+2(i'-j'+2m)+a+b-1}-1)\ne 0
\end{equation*}
for all $a, b=0,1$,
which holds due to a similar analysis   as  in Case 2.

\textbf{Case 6:} When $m \le i' < 2m$ and $2m \le j'<3m$, similar to that of Case 5, if $j'=i'+m$, we  have
\begin{equation*}
         \mbox{rank}(A_{1,i} - A_{1,j})=N,
       \end{equation*}
is equivalent to
\begin{equation*}
  (c^{2m(u-v)+1}-1)(c^{2m(u-v)-1}-1)\ne 0,
\end{equation*}
      otherwise
       \begin{equation*}
         \mbox{rank}(A_{1,i} - A_{1,j})=N
       \end{equation*}
is equivalent to
\begin{equation*}
  c^{2m(u-v)+2(i'-j'+m)+a+b-1}-1\ne 0 \mbox{~for~all~}a, b=0,1.
\end{equation*}
        The above two inequalities always hold due to a similar reason as    in Case 5.

Combining the above 6 cases, we finish the proof.
     \end{proof}

Finally, we demonstrate to what extent  Theorem  \ref{Thm C4 result2} can reduce the field size $q$ of the $(n, k)$ MDS code $\mathcal{C}_4$. For example, when $n'=6$, $m=2$, and $n=24$. According to Theorem
\ref{Thm C4 result2}, we can set
\begin{equation*}
z_{0,i}=1,\,\, z_{1,i}=3^{4\lfloor\frac{i}{6}\rfloor},
\end{equation*}
for  $i \in [0,24)$ in \eqref{Eqn pc C4} and
\begin{IEEEeqnarray*}{rCl}
&&\lambda_{i',0}=\lambda_{i'+2,0}=\lambda_{i'+4,1}=3^{2i'},\\
&&\lambda_{i',1}=\lambda_{i'+2,1}=\lambda_{i'+4,0}=3^{2i'+1},
\end{IEEEeqnarray*}
in \eqref{Eqn pc C4'}  for $i'=0, 1$ over $\mathbf{F}_{17}$ with $3$ being the primitive element. Whereas, by Theorem \ref{Thm C4 result1}, the existence of the MDS code $\mathcal{C}_4$ requires a finite field with size larger than $92$.


\section{An $(n,k)$ MDS code $\mathcal{C}_5$ with the optimal update property and small sub-packetization over small finite fields} \label{sec:zigzag}

Note from \eqref{Eqn coding matrix C1} that the parity-check matrix of the MDS code $\mathcal{C}_1$ has a constraint, that is, block matrices $A_{t,i}$ should satisfy that $A_{t,j_1}A_{t,j_2}^{-1}$ is a scalar matrix over   $\mathbf{F}_q$  for all $j_1\equiv j_2(\bmod\, n')$ and $t\in [0,r)$,
which reduces the designing space for the parameters $\lambda_{i,0}, \ldots, \lambda_{i,r-1}$ in \eqref{Eqn_YB_code} to guarantee the  MDS property.
In this section, we propose another explicit $(n,k)$ MDS code which has a similar     structure as that of the  MDS code $\mathcal{C}_1$, but allows more flexible  choices of  $\lambda_{i,0}, \ldots, \lambda_{i,r-1}$, and thus can  further reduce the field size.

Let $N=r^{n'}$ and $n>n'$, where $n$ and $n'$ are two positive integer.
Construct an $(n,k)$   code $\mathcal{C}_5$ with longer code length given by \eqref{Eqn parity check eq} and \eqref{Eqn A power}, where $A_i$, $i\in [0,n)$ satisfy
\begin{equation}\label{Eqn Hadamard coding matrix}
\left(
                     \begin{array}{c}
                       V_{i,0} \\
                       V_{i,1} \\
                       \vdots \\
                       V_{i,r-1}
                     \end{array}
                   \right)A_i=\left(
                     \begin{array}{c}
                      \lambda_{i,0} V_{i,0} \\
                      \lambda_{i,1} V_{i,1} \\
                       \vdots \\
                      \lambda_{i,r-1} V_{i,r-1}
                     \end{array}
                   \right),
\end{equation}
with $\lambda_{i,t} \in \mathbf{F}_q\backslash\{0\}$ and $V_{i, t}$ being defined by \eqref{Eqn_Vt} and \eqref{B3} for $t\in [0, r)$.
The repair matrices and select matrices are respectively  defined by
\begin{equation}\label{Eqn Hadmard R}
R_{i,j}=\left\{
                      \begin{array}{ll}
                        V_{i,0}+V_{i,1}+\cdots+V_{i,r-1}, &\mbox{\ \ if\ \ } j\not\equiv i \bmod n', \\
                        I, & \mbox{\ \ otherwise.\ \ }
                      \end{array}
                    \right.
\end{equation}
and
\begin{equation}\label{Eqn Hadmard S}
S_{i,t}=V_{i,0}+V_{i,1}+\cdots+V_{i,r-1},~t\in[0,r).
\end{equation}

\begin{Theorem}\label{Thm repair}
Every failed node of the  code $\mathcal{C}_5$ can be regenerated by the repair matrices defined in \eqref{Eqn Hadmard R} and \eqref{Eqn Hadmard S} if $\lambda_{i,0},\lambda_{i,1},\cdots,\lambda_{i,r-1}$ are pairwise distinct for each $i\in [0,n)$. Furthermore, the repair bandwidth  for node $i$ ($i\in [0, n)$) is
\begin{equation*}
  \gamma_i = \left\{
                      \begin{array}{ll}
                      (1+\frac{(\lceil\frac{n}{n'}\rceil-1)(r-1)}{n-1})\gamma^*, &\mbox{\ \ if\ \ } 0\le i\% n'<n\%n', \\
                       (1+\frac{(\lfloor\frac{n}{n'}\rfloor-1)(r-1)}{n-1})\gamma^*, & \mbox{\ \ otherwise}.
                      \end{array}
                    \right.
\end{equation*}
\end{Theorem}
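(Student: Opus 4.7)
The plan is to verify the two repair conditions from Section \ref{sec:repair}, namely \eqref{repair_node_requirement1 n-1} and \eqref{repair_node_requirement3 n-1}, and then read off $\gamma_i$ by directly summing $\beta_{i,j}=\mathrm{rank}(R_{i,j})$ over $j\ne i$. The structural fact I intend to exploit throughout is that, by \eqref{Eqn Hadamard coding matrix} together with the periodic extension \eqref{B3}, each $A_i$ is diagonal in the standard basis with $A_ie_a^{\top}=\lambda_{i,a_{i\%n'}}e_a^{\top}$, so every power $A_i^{t}$ acts on the partition $\{V_{i,0},\ldots,V_{i,r-1}\}$ blockwise by the scalars $\lambda_{i,u}^{t}$.

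To handle \eqref{repair_node_requirement1 n-1}, I would use $V_{i,u}A_i^{t}=\lambda_{i,u}^{t}V_{i,u}$ to rewrite each block as $S_{i,t}A_{t,i}=\sum_{u=0}^{r-1}\lambda_{i,u}^{t}V_{i,u}$. After permuting the $N$ columns so that the $r$ basis vectors $\{e_a\}$ sharing the same ``other'' coordinates (i.e., all coordinates of $a$ except the $(i\%n')$-th) are grouped together, the stacked $N\times N$ matrix becomes a direct sum of $N/r$ copies of the $r\times r$ Vandermonde matrix on nodes $\lambda_{i,0},\ldots,\lambda_{i,r-1}$. The hypothesis that these nodes are pairwise distinct forces every Vandermonde block to be invertible, giving full rank $N$.

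For \eqref{repair_node_requirement3 n-1} I would split on whether $j\equiv i\pmod{n'}$. In the congruent case $R_{i,j}=I$ already has rank $N$, so the augmented matrix cannot have greater rank and the condition is automatic. In the non-congruent case, setting $i'=i\%n'$ and $j'=j\%n'$ with $i'\ne j'$, a typical row of $R_{i,j}=V_{i',0}+\cdots+V_{i',r-1}$ has the form $\sum_{u=0}^{r-1}e_{a(i',u)}$ for some fixed choice of the remaining coordinates. Because $i'\ne j'$, the $j'$-th coordinate is the \emph{same} across this sum, so $A_j^{t}$ scales the whole row by the single factor $\lambda_{j,a_{j'}}^{t}$. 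Hence every row of $S_{i,t}A_{t,j}$ is a scalar multiple of the corresponding row of $R_{i,j}$, the two row spaces coincide, and the augmented matrix still has rank $\mathrm{rank}(R_{i,j})=N/r$.

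With both conditions secured, the bandwidth follows by a simple count: among the $n-1$ helpers, those congruent to $i$ modulo $n'$ each contribute $N$ to $\gamma_i$, while the remaining helpers each contribute $N/r$. The number of congruent helpers is $\lceil n/n'\rceil-1$ when $0\le i\%n'<n\%n'$ and $\lfloor n/n'\rfloor-1$ otherwise, so plugging into $\gamma_i=\sum_{j\ne i}\mathrm{rank}(R_{i,j})$ and simplifying exactly as in the closing display of the proof of Theorem \ref{Thm gene band} produces the stated formula. The mildly delicate step is the column-permutation argument in the second paragraph, which requires tracking how $\{V_{i,u}\}_{u}$ induces the direct-sum decomposition into Vandermonde blocks; once that is set up, the rest is essentially bookkeeping.
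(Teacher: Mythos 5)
Your proposal is correct and follows essentially the same route as the paper's proof: exploit $V_{i,u}A_i^t=\lambda_{i,u}^tV_{i,u}$, reduce the full-rank condition \eqref{repair_node_requirement1 n-1} to an $r\times r$ Vandermonde determinant in $\lambda_{i,0},\ldots,\lambda_{i,r-1}$, handle \eqref{repair_node_requirement3 n-1} by the observation that for $j\not\equiv i\pmod{n'}$ the rows of $S_{i,t}A_{t,j}$ are scalar multiples of the corresponding rows of $R_{i,j}$ (the paper groups by $V_{i,j,u,v}$ blocks, which amounts to the same thing), and count $\sum_{j\ne i}\mathrm{rank}(R_{i,j})$. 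The only cosmetic differences are that you verify the Vandermonde rank via a column permutation into an $N/r$-fold direct sum of $r\times r$ blocks whereas the paper factors out the Vandermonde as a block-scalar matrix times the stacked permutation $\bigl(V_{i,0}^{\top}\ \cdots\ V_{i,r-1}^{\top}\bigr)^{\top}$, and that for the congruent case $j\equiv i\pmod{n'}$ you use the shortcut $R_{i,j}=I$ already has full rank $N$ so the condition is automatic, which is even cleaner than the paper's explicit verification.
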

\begin{proof}
Firstly, for $i\in[0,n)$, by \eqref{Eqn Hadamard coding matrix}, we have
\begin{equation}\label{Eqn A^t}
\left(
                     \begin{array}{c}
                       V_{i,0} \\
                       V_{i,1} \\
                       \vdots \\
                       V_{i,r-1}
                     \end{array}
                   \right)A_i^t=\left(
                     \begin{array}{c}
                      \lambda_{i,0}^t V_{i,0} \\
                      \lambda_{i,1}^t V_{i,1} \\
                       \vdots \\
                      \lambda_{i,r-1}^t V_{i,r-1}
                     \end{array}
                   \right),~t\in[0,r).
\end{equation}
Then,
\begin{align*}
&\textrm{rank}(\left(
\begin{array}{c}
S_{i,0} A_{0,i}\\
S_{i,1} A_{1,i} \\
\vdots \\
S_{i,r-1} A_{r-1,i}
\end{array}
\right))\\=&\textrm{rank}(\left(
\begin{array}{c}
S_{i,0} \\
S_{i,1} A_{i} \\
\vdots \\
S_{i,r-1} A_{i}^{r-1}
\end{array}
\right))\\
=&\textrm{rank}(\left(
\begin{array}{c}
V_{i,0}+V_{i,1}+\cdots+V_{i,r-1} \\
\lambda_{i,0}V_{i,0}+\lambda_{i,1}V_{i,1}+\cdots+\lambda_{i,r-1}V_{i,r-1} \\
\vdots \\
\lambda_{i,0}^{r-1}V_{i,0}+\lambda_{i,1}^{r-1}V_{i,1}+\cdots+\lambda_{i,r-1}^{r-1}V_{i,r-1}
\end{array}
\right))\\=&\textrm{rank}(
\left(
  \begin{array}{cccc}
    1 & 1 & 1 & 1 \\
    \lambda_{i,0} & \lambda_{i,1} & \cdots & \lambda_{i,r-1} \\
    \vdots & \vdots & \ddots & \vdots \\
    \lambda_{i,0}^{r-1} & \lambda_{i,1}^{r-1} & \cdots & \lambda_{i,r-1}^{r-1}
  \end{array}
\right)\left(
                     \begin{array}{c}
                       V_{i,0} \\
                       V_{i,1} \\
                       \vdots \\
                       V_{i,r-1}
                     \end{array}
                   \right)).
\end{align*}
Obviously,  the rank is $N$  if $\lambda_{i,u}\ne \lambda_{i,v}$ for all $u, v\in[0,r)$ with $u\ne v$.

Next, we prove that \eqref{repair_node_requirement3 n-1} holds. By means of \eqref{Eqn_B3} and \eqref{Eqn A^t},
if $j\not\equiv i \bmod n'$, then we have
\begin{align*}
\textrm{rank} (\left(
\begin{array}{c}
R_{i,j} \\
S_{i,t} A_{t,j}
\end{array}
\right))
&=\textrm{rank}(\left(
\begin{array}{c}
 \sum\limits_{u=0}^{r-1}V_{i,u} \\
 \sum\limits_{u=0}^{r-1}V_{i,u}A_{j}^t
\end{array}
\right))\\
&=
\textrm{rank}(\left(
\begin{array}{c}
\sum\limits_{u=0}^{r-1}V_{i,u} \\
\sum\limits_{u=0}^{r-1}V_{i,j,u,0}A_{j}^t \\
\sum\limits_{u=0}^{r-1}V_{i,j,u,1}A_{j}^t \\
\vdots\\
\sum\limits_{u=0}^{r-1}V_{i,j,u,r-1}A_{j}^t
\end{array}
\right))\\
&=
\textrm{rank} (\left(
\begin{array}{c}
\sum\limits_{u=0}^{r-1}V_{i,u} \\
\lambda_{j,0}^t\sum\limits_{u=0}^{r-1}V_{i,j,u,0} \\
\lambda_{j,1}^t\sum\limits_{u=0}^{r-1}V_{i,j,u,1}\\
\vdots\\
\lambda_{j,r-1}^t\sum\limits_{u=0}^{r-1}V_{i,j,u,r-1}
\end{array}
\right))\\
&=\textrm{rank}(R_{i,j});
\end{align*}
Otherwise, we have
\begin{align*}
\textrm{rank} (\left(
\begin{array}{c}
R_{i,j} \\
S_{i,t} A_{t,j}
\end{array}
\right))&=
\textrm{rank} (\left(
\begin{array}{c}
 I \\
 \sum\limits_{u=0}^{r-1}V_{i,u}A_{j}^t
\end{array}
\right))\\&=
\textrm{rank} (\left(
\begin{array}{c}
 I \\
 \sum\limits_{u=0}^{r-1}\lambda_{j,u}^tV_{j,u}
\end{array}
\right))\\&=
\textrm{rank}(R_{i,j}).
\end{align*}
Therefore, by  \eqref{Eqn_RB} and \eqref{Eqn Hadmard R}, the repair bandwidth of node $i$ is
\begin{align*}
  \gamma_i\hspace{-1mm} &=\hspace{-1mm} \sum\limits_{j=0,j\ne i}^{n-1} \mbox{rank}(R_{i,j})\\
  \hspace{-1mm}&=\hspace{-1mm}(n-1)\frac{N}{r}\hspace{-1mm}+\hspace{-1mm}\frac{(r-1)N}{r}|\{j:j\in [0, n)\backslash\{i\}, j\equiv i \bmod n'\}|\\
  \hspace{-1mm}&=\hspace{-1mm}
  \left\{
                      \begin{array}{ll}
                      (1+\frac{(\lceil\frac{n}{n'}\rceil-1)(r-1)}{n-1})\gamma^*, &\mbox{\ \ if\ \ } 0\le i\% n'<n\%n', \\
                       (1+\frac{(\lfloor\frac{n}{n'}\rfloor-1)(r-1)}{n-1})\gamma^*, & \mbox{\ \ otherwise},
                      \end{array}
                    \right.
\end{align*}
where $\gamma^*=(n-1)\frac{N}{r}$ is the optimal value for repair bandwidth.
\end{proof}

\begin{Theorem}\label{Thm MDS}
The code $\mathcal{C}_5$  possesses the MDS property if
\begin{itemize}
  \item [(i)] $\lambda_{i,u}\ne \lambda_{j,v}$ for all $u,v\in[0,r)$ and $i,j\in[0,n)$ with $j\not\equiv i \bmod n'$,
  \item [(ii)] $\lambda_{i,u}\ne \lambda_{i+gn',u}$ for all $u\in[0,r)$, $g\in[1,\lceil \frac{n}{n'}\rceil)$, $i\in[0,n')$   with $i+gn'<n$.
\end{itemize}
\end{Theorem}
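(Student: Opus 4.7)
The plan is to apply Lemma \ref{Lemma pre MDS}, which requires verifying (a) pairwise commutativity $A_i A_j = A_j A_i$, and (b) nonsingularity of $A_i - A_j$, for all $i,j \in [0,n)$ with $i \ne j$. The key observation is that under the standard basis, every $A_i$ turns out to be diagonal, which trivializes both requirements.

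First I would show diagonality. By \eqref{Eqn_Vt} and \eqref{B3}, the sets $V_{i,0}, V_{i,1}, \ldots, V_{i,r-1}$ form a partition of $\{e_0, \ldots, e_{N-1}\}$ for every $i \in [0,n)$, with $V_{i,u} = \{e_a : a_{i \bmod n'} = u\}$, where $(a_0, \ldots, a_{n'-1})$ denotes the $r$-ary expansion of $a$. Then \eqref{Eqn Hadamard coding matrix} directly says that each $e_a \in V_{i,u}$ is a right-eigenvector of $A_i$ with eigenvalue $\lambda_{i,u}$, so $A_i$ is diagonal with $a$-th diagonal entry $\lambda_{i,\, a_{i \bmod n'}}$. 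Since every $A_i$ is diagonal in the common standard basis, (a) holds automatically.

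Next I would reduce (b) to the two hypotheses. Because $A_i - A_j$ is diagonal with $a$-th entry $\lambda_{i,\, a_{i \bmod n'}} - \lambda_{j,\, a_{j \bmod n'}}$, nonsingularity amounts to this entry being nonzero for every $a \in [0,N)$. Split into two cases. If $i \not\equiv j \bmod n'$, then the two coordinates $i \bmod n'$ and $j \bmod n'$ of the $r$-ary expansion are independent as $a$ varies, so the pair $(a_{i \bmod n'}, a_{j \bmod n'})$ takes every value in $[0,r) \times [0,r)$. Nonsingularity is then equivalent to $\lambda_{i,u} \ne \lambda_{j,v}$ for all $u, v \in [0,r)$, which is exactly hypothesis (i). If $i \equiv j \bmod n'$, write $i = i_0 + g_1 n'$ and $j = i_0 + g_2 n'$ with $i_0 \in [0,n')$ and $g_1 \ne g_2$; then $a_{i \bmod n'} = a_{j \bmod n'}$ for every $a$, and nonsingularity reduces to $\lambda_{i,u} \ne \lambda_{j,u}$ for all $u$, which is hypothesis (ii) (applied inside the residue class indexed by $i_0$).

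There is no real obstacle: once one recognizes that \eqref{Eqn Hadamard coding matrix} literally defines each $A_i$ as a diagonal matrix under the standard basis, the proof collapses to a case analysis on the congruence class of $i - j$ modulo $n'$ and a careful parsing of how \eqref{B3} makes $V_{i,u}$ depend only on $i \bmod n'$. The only bookkeeping care needed is to correctly identify which diagonal entries appear in each case and to match them against the indexing ranges in hypotheses (i) and (ii).
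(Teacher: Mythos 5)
Your approach matches the paper's in every structural respect: the paper's own proof is a one-line pointer to the proof of Theorem~\ref{Thm MDS C1 explicit}, which invokes Lemma~\ref{Lemma pre MDS}, observes that each $A_i$ is diagonal under the standard basis (the $a$-th diagonal entry being $\lambda_{i,\,a_{i\bmod n'}}$), and then splits the nonsingularity check on whether $i\equiv j\pmod{n'}$. Your Case~1 (with $i\not\equiv j\pmod{n'}$) is correct: the two relevant $r$-ary digits of $a$ range freely and independently over $[0,r)^2$, so nonsingularity of $A_i-A_j$ is equivalent to hypothesis~(i).

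However, there is a gap in your Case~2. Writing $i=i_0+g_1n'$ and $j=i_0+g_2n'$ with $i_0\in[0,n')$ and $g_1\ne g_2$, you correctly reduce nonsingularity of $A_i-A_j$ to $\lambda_{i,u}\ne\lambda_{j,u}$ for all $u\in[0,r)$, and then claim this ``is hypothesis~(ii).'' Read literally, hypothesis~(ii) only asserts $\lambda_{i,u}\ne\lambda_{i+gn',u}$ with $i\in[0,n')$, i.e.\ it covers exactly those pairs in which the smaller of the two indices lies in the first block $[0,n')$ (equivalently, $\min(g_1,g_2)=0$). As soon as $n\ge 3n'$ there are pairs with $g_1,g_2\ge 1$, for instance $(i_0+n',\ i_0+2n')$, about which~(ii) as stated says nothing, so your reduction does not actually land on~(ii). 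What the argument needs is the strengthened condition $\lambda_{i,u}\ne\lambda_{j,u}$ for all $u$ and all $i\ne j$ with $i\equiv j\pmod{n'}$; this is what the analogous step in the proof of Theorem~\ref{Thm MDS C1 explicit} actually verifies (over all valid pairs $(z_0,v_0)\ne(z_1,v_1)$), and the concrete assignment of Theorem~\ref{Thm assignment} does satisfy the stronger condition. The imprecision seems to originate in the wording of the statement itself, but your write-up should either state the required stronger condition explicitly or flag the discrepancy, rather than asserting that~(ii) as written suffices.
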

\begin{proof}
The proof can be proceeded in the same fashion as that of Theorem \ref{Thm MDS C1 explicit}.
\end{proof}

In the following, we give an assignment of the values $\lambda_{i,u}$, $i\in[0,n)$, $u\in [0,r)$ so that the requirements in Theorems \ref{Thm repair} and \ref{Thm MDS} can be satisfied.

\begin{Theorem}\label{Thm assignment}
The requirements in Theorems \ref{Thm repair} and \ref{Thm MDS} can be satisfied if   $q$ is a prime power such that
\begin{equation*}
  q >\left\{
                      \begin{array}{ll}
                       rn'(\lceil\frac{n}{rn'}\rceil-1)+(n\% n')r, & \mbox{\ \ if\ \ }0<n\% (rn')<n',\\
                       rn'\lceil\frac{n}{rn'}\rceil, &\mbox{\ \ otherwise}.
                      \end{array}
                    \right.
\end{equation*}
\end{Theorem}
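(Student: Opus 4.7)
The plan is to produce an explicit assignment of the $\lambda_{i,u}$ analogous to the one used for $\mathcal{C}_1$ in Theorem \ref{Thm MDS C1 explicit}, and then verify the three distinctness requirements—pairwise distinctness within each $i$ from Theorem \ref{Thm repair}, and conditions (i) and (ii) of Theorem \ref{Thm MDS}—by direct exponent arithmetic, exactly as in the case analysis already carried out for $\mathcal{C}_1$.

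Concretely, I would write each index $i\in[0,n)$ uniquely as $i=zrn'+vn'+i'$ with $z\in[0,\lceil n/(rn')\rceil)$, $v\in[0,r)$ and $i'\in[0,n')$, and set $\lambda_{i,u}=c^{zn'+i'}\delta^{v+u}$, where $c$ is a primitive element of $\mathbf{F}_q$ and $\delta=c^{(q-1)/r}$ is a fixed primitive $r$-th root of unity (the divisibility $r\mid q-1$ is compatible with, and is the natural companion of, the stated bound on $q$). Pairwise distinctness within a fixed $i$ is then immediate, because $\delta^v,\delta^{v+1},\ldots,\delta^{v+r-1}$ are all distinct $r$-th roots of unity, so Theorem \ref{Thm repair} applies.

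For Theorem \ref{Thm MDS}(i), write $j=z'rn'+v'n'+j'$ with $i'\ne j'$. Equality $\lambda_{i,u}=\lambda_{j,v}$ rearranges to $c^{(z-z')n'+(i'-j')}=\delta^{(v'+v)-(v+u)}$; raising to the $r$-th power and using $\delta^r=1$ yields $q-1\mid r\bigl((z-z')n'+(i'-j')\bigr)$. Bounding $|r((z-z')n'+(i'-j'))|$ exactly as in Case~1 of the proof of Theorem \ref{Thm MDS C1 explicit}—splitting on whether $0<n\%(rn')<n'$ or not and using $i,j\in[0,n)$—shows this exponent lies strictly between $0$ and $q-1$ under the hypothesis on $q$, giving the required contradiction. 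For Theorem \ref{Thm MDS}(ii), where $i'=j'$ and the same $u$ is used, the equality collapses to $c^{(z-z')n'}=\delta^{v'-v}$, equivalently $(z-z')n'-\tfrac{q-1}{r}(v-v')\equiv 0\pmod{q-1}$; bounding via $|z-z'|\le\lceil n/(rn')\rceil-1$ and $|v-v'|\le r-1$, and invoking $q>rn'(\lceil n/(rn')\rceil-1)+r$, rules this out, in parallel with Case~2 of Theorem \ref{Thm MDS C1 explicit}.

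The hard part, such as it is, will not be the algebra but the bookkeeping of boundary cases: extracting the piecewise field-size bound that distinguishes $0<n\%(rn')<n'$ from the complementary case requires tracking the tight maximum of $|(z-z')n'+(i'-j')|$ when $i,j$ are constrained to $[0,n)$ rather than the full cuboid of indices, and likewise for the same-residue inequality. Since the assignment coincides structurally with that of $\mathcal{C}_1$ and the inequalities to be verified are identical up to relabeling, the case analysis is a direct port of the one in Theorem \ref{Thm MDS C1 explicit}, and no new idea beyond careful enumeration is needed to conclude.
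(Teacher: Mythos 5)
Your assignment $\lambda_{i,u}=c^{zn'+i'}\delta^{v+u}$ with $\delta=c^{(q-1)/r}$ requires $r\mid(q-1)$, but the theorem makes no such hypothesis: it asserts the requirements can be met for \emph{every} prime power $q$ exceeding the stated bound. The restriction $r\mid(q-1)$ is not a harmless companion of the bound --- it is precisely the extra condition that the paper imposes for $\mathcal{C}_1$ in Theorem~\ref{Thm MDS C1 explicit} and then deliberately drops for $\mathcal{C}_5$; the whole point of Section~\ref{sec:zigzag}, as stated just before \eqref{Eqn Hadamard coding matrix}, is that $\mathcal{C}_5$ relaxes the scalar-multiple constraint on the $A_{t,j}$ and therefore permits a more flexible choice of the $\lambda_{i,t}$, removing the root-of-unity structure. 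Re-importing the $\mathcal{C}_1$ assignment defeats that purpose and proves a strictly weaker statement than the one claimed.

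The gap also means you are doing work you do not need to do. Because $\mathcal{C}_5$ has no scalar-quotient constraint, the paper's proof needs no primitive element and no exponent arithmetic at all: it simply chooses a family of $rn'\lceil n/(rn')\rceil$ (or $rn'\lfloor n/(rn')\rfloor+(n\%n')r$, depending on the residue case) \emph{arbitrary} pairwise-distinct nonzero field elements $\xi_{i',v}^{(z)}$, and assigns $\lambda_{i,t}=\xi_{i',\,(t+u)\bmod r}^{(z)}$ for $i=zrn'+un'+i'$. The cyclic shift by $u$ in the second index is what makes all three conditions trivial to check: for fixed $i$ the map $t\mapsto(t+u)\bmod r$ is a bijection of $[0,r)$ (Theorem~\ref{Thm repair}); for $i'\ne j'$ the values come from different $\xi$-families (Theorem~\ref{Thm MDS}(i)); and for $i'=j'$ with $g\ge1$ either the superscript $z$ or the cyclic offset $u$ changes (Theorem~\ref{Thm MDS}(ii)). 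The bound on $q$ just guarantees enough distinct nonzero elements exist. Your exponent-chasing would, at best, reproduce this conclusion on a proper subclass of fields; to prove the theorem as stated you must replace the primitive-root assignment with something that works over any $\mathbf{F}_q$ with $q$ above the bound.
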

\begin{proof}
If $0<n\% (rn')<n'$, then $\lceil\frac{n}{rn'}\rceil-1=\lfloor\frac{n}{rn'}\rfloor$ and $n\% (rn')=n\% n'$,
 let  $\xi_{i',v}^{(z)}$, $z\in[0,\lfloor\frac{n}{rn'}\rfloor)$,  $i'\in [0,n')$, $v\in[0,r)$, and $\xi_{i',v}^{(\lfloor\frac{n}{rn'}\rfloor)}$, $i'\in [0, n\% n')$, $v\in [0, r)$  be $rn'\lfloor\frac{n}{rn'}\rfloor+(n\%n')r$ pairwise distinct nonzero elements in $\mathbf{F}_q$; Otherwise, let  $\xi_{i',v}^{(z)}$, $z\in[0,\lceil\frac{n}{rn'}\rceil)$, $i'\in [0,n')$, $v\in[0,r)$ be $rn'\lceil\frac{n}{rn'}\rceil$ pairwise distinct nonzero elements in $\mathbf{F}_q$.  Then for $i=zrn'+un'+i'$, $i'\in [0,n')$, $u\in[0,r)$, $z\in[0,\lceil\frac{n}{rn'}\rceil)$, if we set
$\lambda_{i,t}=\xi_{i',t+u}^{(z)}$ for $i\in[0,n)$ and $t\in [0,r)$, where
the subscript $t+u$ is computed modulo $r$, it is easy to verify that the requirements in Theorems \ref{Thm repair} and \ref{Thm MDS} can be satisfied.
\end{proof}

In the following, we give a concrete example of the MDS code $\mathcal{C}_5$ according to Theorem \ref{Thm assignment}.

\begin{Example}\label{Ex2}
Let $r=2$, $n'=3$, and $n=12$, then the parity-check matrix of the $(12,10)$ MDS code $\mathcal{C}_5$ over $\mathbf{F}_{13}$ is defined through
\begin{equation*}
A_0=\left(
      \begin{array}{c}
        e_0 \\
        e_1 \\
        e_2 \\
        e_3 \\
        c e_4 \\
        c e_5 \\
        c e_6 \\
        c e_7
      \end{array}
    \right),~A_1=\left(
      \begin{array}{c}
        c^2 e_0 \\
        c^2 e_1 \\
        c^3 e_2 \\
        c^3 e_3 \\
        c^2 e_4 \\
        c^2 e_5 \\
        c^3 e_6 \\
        c^3 e_7
      \end{array}
    \right),~A_2=\left(
      \begin{array}{c}
        c^4 e_0 \\
        c^5 e_1 \\
        c^4 e_2 \\
        c^5 e_3 \\
        c^4 e_4 \\
        c^5 e_5 \\
        c^4 e_6 \\
        c^5 e_7
      \end{array}
    \right),
\end{equation*}
\begin{equation*}
A_3=\left(
      \begin{array}{c}
        c e_0 \\
        c e_1 \\
        c e_2 \\
        c e_3 \\
        e_4 \\
        e_5 \\
        e_6 \\
        e_7
      \end{array}
    \right),~A_4=\left(
      \begin{array}{c}
        c^3 e_0 \\
        c^3 e_1 \\
        c^2 e_2 \\
        c^2 e_3 \\
        c^3 e_4 \\
        c^3 e_5 \\
        c^2 e_6 \\
        c^2 e_7
      \end{array}
    \right),~A_5=\left(
      \begin{array}{c}
        c^5 e_0 \\
        c^4 e_1 \\
        c^5 e_2 \\
        c^4 e_3 \\
        c^5 e_4 \\
        c^4 e_5 \\
        c^5 e_6 \\
        c^4 e_7
      \end{array}
    \right),
\end{equation*}
\begin{equation*}
   A_6=\left(
      \begin{array}{c}
        c^6 e_0 \\
        c^6 e_1 \\
        c^6 e_2 \\
        c^6 e_3 \\
        c^7 e_4 \\
        c^7 e_5 \\
        c^7 e_6 \\
        c^7 e_7
      \end{array}
    \right),~A_7=\left(
      \begin{array}{c}
        c^8 e_0 \\
        c^8 e_1 \\
        c^9 e_2 \\
        c^9 e_3 \\
        c^8 e_4 \\
        c^8 e_5 \\
        c^9 e_6 \\
        c^9 e_7
      \end{array}
    \right),~A_{8}=\left(
      \begin{array}{c}
        c^{10} e_0 \\
        c^{11} e_1 \\
        c^{10} e_2 \\
        c^{11} e_3 \\
        c^{10} e_4 \\
        c^{11} e_5 \\
        c^{10} e_6 \\
        c^{11} e_7
      \end{array}
    \right),
\end{equation*}
\begin{equation*}
A_{9}=\left(
      \begin{array}{c}
        c^7 e_0 \\
        c^7 e_1 \\
        c^7 e_2 \\
        c^7 e_3 \\
        c^6 e_4 \\
        c^6 e_5 \\
        c^6 e_6 \\
        c^6 e_7
      \end{array}
    \right),~A_{10}=\left(
      \begin{array}{c}
        c^9 e_0 \\
        c^9 e_1 \\
        c^8 e_2 \\
        c^8 e_3 \\
        c^9 e_4 \\
        c^9 e_5 \\
        c^8 e_6 \\
        c^8 e_7
      \end{array}
    \right),~A_{11}=\left(
      \begin{array}{c}
        c^{11} e_0 \\
        c^{10} e_1 \\
        c^{11} e_2 \\
        c^{10} e_3 \\
        c^{11} e_4 \\
        c^{10} e_5 \\
        c^{11} e_6 \\
        c^{10} e_7
      \end{array}
    \right),
\end{equation*}
where $c=2$.
\end{Example}

Similar to the MDS code $\mathcal{C}_1$, we have the following result.

\begin{Theorem}\label{C5-update}
The MDS code   $\mathcal{C}_5$ has the optimal update property.
\end{Theorem}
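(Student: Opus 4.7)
The plan is to mimic the argument used for Theorem~\ref{C1-update}: invoke Ye and Barg's criterion \cite{Barg1} which asserts that an $(n,k)$ MDS code defined by \eqref{Eqn parity check eq} and \eqref{Eqn A power} has the optimal update property as soon as every block matrix of its parity-check matrix is diagonal. Since $\mathcal{C}_5$ is set up in exactly that form with $A_{t,i}=A_i^{t}$, it suffices to show that each $A_i$ (for $i\in[0,n)$) is a diagonal matrix; powers of diagonal matrices are diagonal, so the conclusion then follows immediately.

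To establish diagonality of $A_i$, I would argue directly from the defining relation \eqref{Eqn Hadamard coding matrix}. For fixed $i$, the collection $\{V_{i,0},V_{i,1},\dots,V_{i,r-1}\}$ is a partition of the standard basis $\{e_0,\dots,e_{N-1}\}$ into $r$ blocks, so the vertical stacking $P_i=(V_{i,0}^{\top}\;V_{i,1}^{\top}\;\cdots\;V_{i,r-1}^{\top})^{\top}$ is a permutation matrix of order $N$. Let $D_i$ be the block-diagonal (in fact, diagonal) matrix with the scalar block $\lambda_{i,t}I_{N/r}$ in the $t$-th diagonal slot. Then \eqref{Eqn Hadamard coding matrix} reads $P_iA_i=D_iP_i$, i.e.\ $A_i=P_i^{-1}D_iP_i$. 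Conjugating a diagonal matrix by a permutation matrix just permutes the diagonal entries, so $A_i$ is itself diagonal (explicitly, the $a$-th diagonal entry of $A_i$ equals $\lambda_{i,a_i}$, where $a_i$ denotes the $i$-th digit of the $r$-ary expansion of $a$).

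Once diagonality of every $A_i$ is in hand, every $A_{t,i}=A_i^{t}$ is diagonal, so the parity-check matrix of $\mathcal{C}_5$ is a block matrix whose blocks are all diagonal. Applying the Ye--Barg optimal-update criterion then finishes the proof. I do not anticipate any real obstacle here; the only point that requires care is the passage from the equation \eqref{Eqn Hadamard coding matrix}, which at first glance only constrains the action of $A_i$ on the rows of the $V_{i,t}$'s, to the conclusion that $A_i$ itself is diagonal. This is immediate once one recognizes that the stacking $P_i$ is a permutation matrix, which is the key observation underpinning the argument.
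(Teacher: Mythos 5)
Your proposal is correct and follows essentially the same route the paper takes: the paper's proof (by reference to Theorem~\ref{C1-update}) simply observes that all block matrices of the parity-check matrix are diagonal and then invokes the Ye--Barg criterion from \cite{Barg1}; you prove the same thing while usefully spelling out why \eqref{Eqn Hadamard coding matrix} forces $A_i$ to be diagonal, via the permutation-conjugation identity $A_i=P_i^{-1}D_iP_i$. One small imprecision in the parenthetical at the end: since $\mathcal{C}_5$ has $n>n'$ and $N=r^{n'}$, the partition $V_{i,t}$ is defined through \eqref{B3} as $V_{i\%n',\,t}$, so for $i\ge n'$ the $a$-th diagonal entry of $A_i$ is $\lambda_{i,\,a_{i\%n'}}$ rather than $\lambda_{i,\,a_i}$ (the digit index $a_i$ is not even defined once $i\ge n'$); this does not affect the validity of your argument, only the explicit formula.
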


\section{Comparisons}
In this section, we give  comparisons  of some key parameters among the  proposed MDS codes  and some existing  notable  MDS  codes.

\begin{table*}[htbp]
\begin{center}
\caption{A comparison of some key parameters among the $(n,k)$ MDS codes proposed in this paper and some existing  notable $(n,k)$ MDS  codes, where we set $n=sn'$ for convenience and $r=n-k$}\label{Table comp}
\setlength{\tabcolsep}{3pt}
\begin{tabular}{|c|c|c|c|c|c|}
\hline
&Sub-packetization& \multirow{2}{*}{Field size } &The ratio of repair bandwidth  & \multirow{2}{*}{Remark}& \multirow{2}{*}{References} \\
&level $N$  &  &to the optimal value $\gamma^*$  &  &\\
\hline\hline
The new MDS code $\mathcal{C}_1$  & $r^{n'}$ & $q>rn'\lceil\frac{n}{rn'}\rceil$, $r|(q-1)$ & $=1+\frac{(s-1)(r-1)}{n-1}<1+\frac{r}{n'}$ & Optimal update & Thms \ref{Thm C1}-\ref{C1-update}\\
\hline
The new MDS code $\mathcal{C}_5$  & $r^{n'}$ &  $q>rn'\lceil\frac{n}{rn'}\rceil$ & $=1+\frac{(s-1)(r-1)}{n-1}<1+\frac{r}{n'}$ & Optimal update & Thms \ref{Thm assignment}, \ref{C5-update}\\
\hline
The RTGE code 2 &   $O(r^{r\tau}\log n)$  & $O(n)$    & $\le 1+\frac{1}{\tau}$  & $\tau>0$&  \cite{Rawat}\\
\hline
The YB code 1  &$r^{n}$   &  $q\ge rn$  &  $1$ ~~(Optimal)&   Optimal update   & \cite{Barg1}\\
\hline\hline
The new MDS code $\mathcal{C}_2$  & $r^{n'-1}$ &  $q>r\lceil{n'\over r}\rceil(\lceil{n\over n'}\rceil-1)+n'$ & $=1+\frac{(s-1)(r-1)}{n-1}<1+\frac{r}{n'}$ &  & Thm \ref{Thm MDS C2}\\
\hline
The new MDS code $\mathcal{C}_3$  & $r^{n'-1}$ & $
                      \begin{array}{ll}
                      q> s, q \mbox{~is~odd},&\hspace{-2mm} \mbox{if~} r \mbox{~is~even}\\
                       q>sr, &\hspace{-2mm}\mbox{otherwise}
                      \end{array}$ & $=1+\frac{(s-1)(r-1)}{n-1}<1+\frac{r}{n'}$ &  & Thm \ref{Thm MDS C3}\\
\hline
The improved YB code 2    &$r^{n-1}$   &  $q>r$  &  $1$ ~~(Optimal)&       & \cite{YiLiu}\\
\hline
 Shortened duplication-zigzag  &    $r^{n'-1}$  &  $q>s$    &  $=1+\frac{(s-1)(r-1)}{n-1}<1+\frac{r}{n'}$ & &  \cite{Zigzag} \\
\hline\hline
 The new MDS code $\mathcal{C}_4$ & $r^{n'\over {r+1}}$ &  $\begin{array}{ll}q> \frac{2n}{3}, & \mbox{if~} r=2\\ q>N{n-1\choose r-1}+1, & \mbox{if~} r>2 \end{array}$ & $=1+\frac{(s-1)(r-1)}{n-1}<1+\frac{r}{n'}$ & Implicit when $r>2$ & Thms \ref{Thm C4 result1}, \ref{Thm C4 result2} \\
\hline
The RTGE code 1 &   $r^{\tau}$  & $q>n^{(r-1)N+1}$    & $=1+\frac{(s-1)(r-1)}{n-1}< 1+\frac{1}{\tau}$  & $\begin{array}{l}\tau \mbox{~is~an~integer}\\1\le \tau \le \lceil\frac{n}{r} \rceil-1  \end{array}$ &  \cite{Rawat}\\
\hline
 Long code $\mathcal{C}_4'$ &    $r^{\frac{n}{r+1}}$  &  $q>N{n-1\choose r-1}+1$    &  $1$ ~~(Optimal) & Implicit when $r>2$  & Thms \ref{Thm_C4'_MDS}, \ref{Thm_C4'_RB}  \\
\hline
\end{tabular}
\end{center}
\end{table*}

\begin{table*}[htbp]
\begin{center}
\caption{A comparison of some key parameters among the  MDS codes $\mathcal{C}_4$ and the RTGE code 1 under some specific code lengths for $r=2$}\label{Table comp2}
\begin{tabular}{|c|c|c|c|c|c|}
\hline
& Code length & Number of& Sub-packetization&  Field size   &  \multirow{2}{*}{$\frac{\gamma}{\gamma^*}$}  \\
& $n$ &  parties $r$ & level $N$ & $q$ &    \\
\hline\hline
\multirow{3}{*}{The new MDS code $\mathcal{C}_4$}  & $12$ &  $2$ & $2^2$ & $3^2$ & $1+\frac{1}{11}$  \\
 & $18$ &  $2$ & $2^{2}$ & $13$ & $1+\frac{2}{17}$  \\
  & $24$ &  $2$ & $2^{2}$ & $17$ & $1+\frac{3}{23}$ \\
\hline
\multirow{3}{*}{The RTGE code 1}  & $12$ &  $2$ & $2^3$ & $>10^{9}$ & $1+\frac{1}{11}$ \\
 & $18$ &  $2$ & $2^{3}$ & $>10^{11}$ & $1+\frac{2}{17}$ \\
  & $24$ &  $2$ & $2^{3}$ & $>10^{12}$ & $1+\frac{3}{23}$  \\
\hline
\end{tabular}
\end{center}
\end{table*}

\begin{table*}[htbp]
\begin{center}
\caption{A comparison of some key parameters among the  MDS codes $\mathcal{C}_4$ and the RTGE code 1 under some specific code lengths for $r=3$}\label{Table comp3}
\setlength{\tabcolsep}{3.5pt}
\begin{tabular}{|c|c|c|c|c|c||c|}
\hline
& Code length & Number of& Sub-packetization&  Field size   &  \multirow{2}{*}{$\frac{\gamma}{\gamma^*}$}  &  \multirow{2}{*}{Remark} \\
& $n$ &  parties $r$ & level $N$ & $q$ &   &  \\
\hline\hline
\multirow{2}{*}{The new MDS code $\mathcal{C}_4$}  & $24$ &  $3$ & $3^3$ & $>6831$ & $1+\frac{1}{23}$  &\multirow{2}{*}{Implicit construction} \\
 & $36$ &  $3$ & $3^{3}$ & $>16065$ & $1+\frac{2}{35}$ & \\
\hline
\multirow{2}{*}{The RTGE code 1}  & $24$ &  $3$ & $3^4$ & $>10^{224}$ & $1+\frac{1}{23}$ &\\
& $36$ &  $3$ & $3^{4}$ & $>10^{253}$ & $1+\frac{2}{35}$  & \\
\hline
\end{tabular}
\end{center}
\end{table*}

\begin{table*}[htbp]
\begin{center}
\caption{A comparison of some key parameters among the  MDS codes $\mathcal{C}_4$ and the RTGE code 1 under some specific code lengths for $r=4$}\label{Table comp4}
\setlength{\tabcolsep}{3.5pt}
\begin{tabular}{|c|c|c|c|c|c||c|}
\hline
& Code length & Number of& Sub-packetization&  Field size   &  \multirow{2}{*}{$\frac{\gamma}{\gamma^*}$}  & \multirow{2}{*}{Remark}\\
& $n$ &  parties $r$ & level $N$ & $q$ &   &  \\
\hline\hline
\multirow{2}{*}{The new MDS code $\mathcal{C}_4$}  & $40$ &  $4$ & $4^4$ & $>2339584 $ & $1+\frac{1}{13}$ &  \multirow{2}{*}{Implicit construction}\\
 & $60$ &  $4$ & $4^{4}$ & $>8322304$ & $1+\frac{6}{59}$ & \\
\hline
\multirow{2}{*}{The RTGE code 1}  & $40$ &  $4$ & $4^5$ & $>10^{4923}$ & $1+\frac{1}{13}$  &\\
& $60$ &  $4$ & $4^{5}$ & $>10^{5464}$ & $1+\frac{6}{59}$  &\\
\hline
\end{tabular}
\end{center}
\end{table*}

Table \ref{Table comp} compares the details of these codes, while Tables \ref{Table comp2}-\ref{Table comp4} compare the new MDS code $\mathcal{C}_4$  and the RTGE code 1 in terms of the sub-packetization level,  the field size, and the repair bandwidth for $r=2,3$ and $4$, respectively.
From these tables, we see that the proposed MDS codes have the following advantages:
\begin{itemize}

\item  The new MDS codes $\mathcal{C}_1$, $\mathcal{C}_2$, $\mathcal{C}_3$, and $\mathcal{C}_5$  can support any number of parity nodes while the shortened duplication-zigzag code\footnote{Note that the code length of the  duplication-zigzag code in \cite{Zigzag}   is  in the form of $uk'+2$ with $uk'\gg2$, in order to  do a fair comparison under the same code length,   we delete two nodes of the duplication-zigzag code in \cite{Zigzag}  and term the resultant code as  shortened duplication-zigzag code.} in \cite{Zigzag} can only support two parity nodes.

\item  The new MDS codes $\mathcal{C}_1$ and $\mathcal{C}_5$ have the optimal update property.

\item The new $(n=sn',k)$ MDS codes derived in this paper indeed have a small sub-packetization level $N$. Specifically, $N=r^{n'}$ for the codes $\mathcal{C}_1$ and $\mathcal{C}_5$, $N=r^{\frac{n'}{r+1}}$ for the code $\mathcal{C}_4$, and  $N=r^{n'-1}$ for the codes $\mathcal{C}_2$ and $\mathcal{C}_3$. Note that  $n'$ can be fixed as a constant. Consequently, for each new MDS code, the sub-packetization level  can be  a constant, which is independent of  code length $n$.

\item Compared with   the RTGE code 1 in  \cite{Rawat}, when $n'=r\tau$, the new explicit MDS codes $\mathcal{C}_1$, $\mathcal{C}_2$, $\mathcal{C}_3$, and $\mathcal{C}_5$ are built on  much smaller finite  fields,  but have larger sub-packetization levels. Besides,  all the proposed MDS codes have   the same repair bandwidth as the RTGE code 1 in  \cite{Rawat} under the same parameters $n$ and $k$.

\item Particularly, the new $(n,k)$ MDS code $\mathcal{C}_4$   has  not only a smaller sub-packetization level,  but also  a much  smaller finite field when compared to the RTGE code 1.

Nevertheless, the code $\mathcal{C}_4$  is explicit  only for $r=2$, which requires a finite field with size $q>\frac{2n'}{3}\lceil\frac{n}{n'}\rceil$. For $r>2$,   further investigation is needed to find the explicit construction.

\item  In contrast to  RTGE code 2 in  \cite{Rawat}, which has sub-packetization growing logarithmically with the code length $n$, the new codes have   smaller  sub-packetizations. For example, the sub-packetization level of the MDS code $\mathcal{C}_5$ is around $\frac{1}{\log n}$  times that of the RTGE code 2 in  \cite{Rawat}   when $n'=r\tau$.

\item The  RTGE codes 1 and 2 in \cite{Rawat}  show that it is possible to trade repair bandwidth for sub-packetization, while the  proposed codes $\mathcal{C}_1$, $\mathcal{C}_2$, $\mathcal{C}_3$, and $\mathcal{C}_5$ further  show that it is possible to trade sub-packetization for field size base on the RTGE code 1, as these new codes are explicit and are over small finite fields.
\end{itemize}
In addition to the above advantages,  the new codes $\mathcal{C}_1$-$\mathcal{C}_5$ have a defect that they do not possess the load balancing property   as some of the helper nodes contribute a higher
amount of data during the node repair process. Whereas, the RTGE code 2 in  \cite{Rawat} is load balanced, where all the contacted nodes
provide (approximately) the same amount of information during
the repair process.

\section{Conclusion}
In this paper, we provided  a  powerful transformation that can greatly reduce the  sub-packetization level $N$ of the original codes with respect to the same code length $n$. Four applications of the transformation were demonstrated, three of which  are explicit and over a small finite field. In addition, another  explicit MDS code construction over a small  finite field and with  small sub-packetization level, small repair bandwidth as well as  the optimal update property was presented. The comparisons show that the obtained MDS codes outperform  existing MDS codes in terms of the field size and/or the sub-packetization level.
Extending our transformation and constructions to the case of $d<n-1$ or  multiple node failures are part of our ongoing work.

\appendices

\section{Proof of Theorem \ref{Thm general MDS}}\label{sec:Appen1}

Before proving Theorem \ref{Thm general MDS}, let us introduce some necessary definitions and results on determinants.

\begin{Definition}[\cite{mirsky2012introduction}]\label{Def minor}
A $k$-rowed minor of an $n$-rowed determinant $D=det(a_{i,j})_{i\in[0, n), j\in [0, n)}$ is any  $k$-rowed determinant obtained when $n-k$ rows and $n-k$ columns are deleted from $D$. The $k$-rowed minor obtained from $D$ by retaining only the elements belonging to rows $r_0,\ldots,r_{k-1}$ and columns $s_0,\ldots,s_{k-1}$ will be denoted by
\begin{equation*}
D(r_0,\ldots,r_{k-1}|s_0,\ldots,s_{k-1}).
\end{equation*}
The cofactor $\widetilde{D}(r_0,\ldots,r_{k-1}|s_0,\ldots,s_{k-1})$ of the minor $D(r_0,\ldots,r_{k-1}|s_0,\ldots,s_{k-1})$ in a determinant  $D$ is defined as
\begin{align*}
&\widetilde{D}(r_0,\ldots,r_{k-1}|s_0,\ldots,s_{k-1})\\=&(-1)^{r_0+\ldots+r_{k-1}+s_0+\ldots+s_{k-1}}D(r_{k},\ldots,r_{n-1}|s_{k},\ldots,s_{n-1}),
\end{align*}
where $r_{k},\ldots,r_{n-1}$ are the $n-k$ numbers among $0,\ldots, n-1$ other than $r_0,\ldots,r_{k-1}$ and $s_{k},\ldots,s_{n-1}$ are the $n-k$ numbers among $0,\ldots, n-1$ other than $s_0,\ldots,s_{k-1}$.
\end{Definition}

\begin{Lemma}[Laplace's expansion theorem \cite{mirsky2012introduction}]\label{Lemma Laplace}
Let $D$ be an $n$-rowed determinant, and let $r_0, \ldots, r_{k-1}$ be integers such that $0\le k<n-1$ and $0\le r_0<\ldots<r_{k-1}<n$. Then
 \begin{equation*}
 \begin{split}
 D=&\sum_{0\le u_0< \ldots< u_{k-1}<n}D(r_0, \ldots, r_{k-1}|u_0, \ldots, u_{k-1}) \\
    & \hspace{22mm}\times\widetilde{D}(r_0, \ldots, r_{k-1}|u_0, \ldots, u_{k-1}).
\end{split}
 \end{equation*}
\end{Lemma}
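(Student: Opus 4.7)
The plan is to prove Laplace's expansion theorem by starting from the Leibniz formula and reorganizing the sum according to the image of the distinguished rows under each permutation.

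First I would write $D = \sum_{\sigma \in S_n} \mathrm{sgn}(\sigma) \prod_{i=0}^{n-1} a_{i,\sigma(i)}$, where $S_n$ is the symmetric group on $\{0, 1, \ldots, n-1\}$. For every $\sigma$, the image $\{\sigma(r_0), \ldots, \sigma(r_{k-1})\}$ is some $k$-element subset which, after sorting, has the form $\{u_0, \ldots, u_{k-1}\}$ with $u_0 < \cdots < u_{k-1}$. This groups $S_n$ according to that subset, so I can rewrite $D$ as a double sum: an outer sum over subsets $U = \{u_0 < \cdots < u_{k-1}\}$ of $\{0, \ldots, n-1\}$, and an inner sum over permutations $\sigma$ whose image of $\{r_0, \ldots, r_{k-1}\}$ equals $U$.

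Next, for a fixed $U$, each such $\sigma$ decomposes uniquely into a bijection $\tau \colon \{r_0, \ldots, r_{k-1}\} \to \{u_0, \ldots, u_{k-1}\}$ together with a complementary bijection $\tau' \colon \{r_k, \ldots, r_{n-1}\} \to \{u_k, \ldots, u_{n-1}\}$, in the notation of Definition \ref{Def minor}. The product $\prod_i a_{i,\sigma(i)}$ factors across the two parts, so summing over $\tau$ with the appropriate signs reproduces, up to a global sign depending only on $U$ and the choice of $r_i$'s, the minor $D(r_0, \ldots, r_{k-1} | u_0, \ldots, u_{k-1})$, and summing over $\tau'$ reproduces the complementary minor $D(r_k, \ldots, r_{n-1} | u_k, \ldots, u_{n-1})$.

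The main obstacle will be the sign bookkeeping. I would introduce the ``sorting'' permutations $\pi$ that sends $(0, 1, \ldots, n-1)$ to $(r_0, \ldots, r_{n-1})$ and $\rho$ that sends $(0, 1, \ldots, n-1)$ to $(u_0, \ldots, u_{n-1})$, and then verify the factorization $\mathrm{sgn}(\sigma) = \mathrm{sgn}(\rho) \cdot \mathrm{sgn}(\tau \sqcup \tau') \cdot \mathrm{sgn}(\pi)^{-1}$ obtained by writing $\sigma = \rho \circ (\tau \sqcup \tau') \circ \pi^{-1}$ under the natural identifications. A standard bubble-sort counting argument shows $\mathrm{sgn}(\pi) = (-1)^{r_0 + \cdots + r_{k-1} - \binom{k}{2}}$ and likewise $\mathrm{sgn}(\rho) = (-1)^{u_0 + \cdots + u_{k-1} - \binom{k}{2}}$, so their combined contribution is $(-1)^{r_0 + \cdots + r_{k-1} + u_0 + \cdots + u_{k-1}}$, exactly matching the sign in the definition of the cofactor $\widetilde{D}$. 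Absorbing this factor turns the complementary minor into $\widetilde{D}(r_0, \ldots, r_{k-1} | u_0, \ldots, u_{k-1})$, and collecting terms yields the stated identity.
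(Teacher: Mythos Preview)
Your proof outline is correct and follows the classical route to Laplace's expansion: partition the Leibniz sum by the image of the selected rows, factor each piece into a minor times a complementary minor, and track the sign via inversion counts. The sign computation is right; the inversion count for the sorting permutation $\pi$ is indeed $\sum_{i=0}^{k-1}(r_i - i) = \sum r_i - \binom{k}{2}$, and the $\binom{k}{2}$ terms cancel between $\pi$ and $\rho$.

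That said, the paper does not prove this lemma at all: it is simply quoted from Mirsky's textbook \cite{mirsky2012introduction} as a standard result and used as a tool in the proof of Proposition~\ref{Pro1}. So there is no ``paper's own proof'' to compare against; your argument is a correct self-contained justification that the paper omits by citation.
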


\begin{Proposition}\label{Pro1}
Let $u\ge 2$  and let
{\small
\begin{equation*}
B=\hspace{-.5mm}\left(\hspace{-1.5mm}
\begin{array}{cccc}
y_{0,0}B_{0,0} & y_{0,1}B_{0,1} & \cdots &  y_{0,u-1}B_{0,u-1} \\
y_{1,0}B_{1,0} & y_{1,1}B_{1,1} & \cdots &  y_{1,u-1}B_{1,u-1} \\
\vdots & \vdots & \ddots &   \vdots \\
y_{u-1,0}B_{u-1,0} & y_{u-1,1}B_{u-1,1} & \cdots &  y_{u-1,u-1}B_{u-1,u-1} \end{array}
\hspace{-2mm}\right)
\end{equation*}
}be a block matrix of order $uN$ over a certain finite field $\mathbf{F}_q$, where $y_{i,j}$ is an indeterminate in  $\mathbf{F}_q$ and $B_{i,j}$ is a full rank matrix of order $N$  for  $i,j\in[0,u)$. Then $\det (B)$ is a homogeneous polynomial of degree $uN$   which includes the term
\begin{equation}\label{Eqn det term}
\left( \prod\limits_{t=0}^{u-1}\det(B_{t,t}) \right)y_{0,0}^Ny_{1,1}^N\cdots y_{u-1,{u-1}}^N.
\end{equation}
\end{Proposition}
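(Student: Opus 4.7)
\smallskip\noindent\emph{Proof plan.}

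The plan is to establish both claims in tandem: the homogeneity of degree $uN$ from a direct inspection of the Leibniz expansion, and the explicit coefficient of $y_{0,0}^Ny_{1,1}^N\cdots y_{u-1,u-1}^N$ by induction on $u$, using Laplace's expansion theorem (Lemma \ref{Lemma Laplace}) applied to the first $N$ rows of $B$.

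For homogeneity, write $\det(B)=\sum_{\sigma}\mathrm{sgn}(\sigma)\prod_{i=0}^{uN-1} b_{i,\sigma(i)}$, where $b_{i,j}$ denotes the $(i,j)$-entry of $B$. Any entry lying in block row $\lfloor i/N\rfloor$ and block column $\lfloor j/N\rfloor$ carries exactly one factor of the indeterminate $y_{\lfloor i/N\rfloor,\lfloor j/N\rfloor}$, so each summand in the Leibniz expansion is a product of exactly $uN$ such variables. Hence $\det(B)$ is a homogeneous polynomial in the $y_{i,j}$'s of degree $uN$.

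For the coefficient claim, I would induct on $u$. The base case $u=1$ is immediate, since $\det(y_{0,0}B_{0,0})=y_{0,0}^N\det(B_{0,0})$. For $u\ge 2$, apply Laplace's expansion with $k=N$ and $r_\ell=\ell$ for $\ell\in[0,N)$:
\begin{equation*}
\det(B)=\sum_{0\le u_0<\cdots<u_{N-1}<uN} D(0,\ldots,N{-}1\mid u_0,\ldots,u_{N-1})\,\widetilde D(0,\ldots,N{-}1\mid u_0,\ldots,u_{N-1}).
\end{equation*}
The first $N$ rows of $B$ carry only the variables $y_{0,0},\ldots,y_{0,u-1}$, so in the Leibniz expansion of each $N\times N$ minor the column indexed by $u_\ell$ contributes a factor of $y_{0,\lfloor u_\ell/N\rfloor}$ to every monomial. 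The only way to obtain the pure power $y_{0,0}^N$ is therefore $\lfloor u_\ell/N\rfloor=0$ for all $\ell$, which forces $(u_0,\ldots,u_{N-1})=(0,\ldots,N-1)$. For this choice the minor equals $y_{0,0}^N\det(B_{0,0})$, and since $(-1)^{2(0+1+\cdots+(N-1))}=1$ the cofactor is simply $\det\bigl((y_{i,j}B_{i,j})_{i,j\in[1,u)}\bigr)$, a determinant of the same block form on $(u-1)N$ rows whose variables are disjoint from $y_{0,0}$. The induction hypothesis then guarantees that this cofactor contains $\bigl(\prod_{t=1}^{u-1}\det(B_{t,t})\bigr)y_{1,1}^N\cdots y_{u-1,u-1}^N$, and multiplying produces the term \eqref{Eqn det term}.

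I do not foresee any serious obstacle: the argument is an iterated Laplace expansion, and the full-rank hypothesis on the diagonal blocks $B_{t,t}$ ensures that the coefficient $\prod_{t=0}^{u-1}\det(B_{t,t})$ is in fact nonzero, so the asserted term genuinely appears in $\det(B)$. The main care point is confirming that no other column selection in the Laplace sum can contribute to $y_{0,0}^N$, which follows cleanly from the block-column structure of the variables outlined above.
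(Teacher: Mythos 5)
Your proof is correct and follows essentially the same route as the paper's: induction on $u$ via Laplace expansion along one block row (the paper peels off the last block row, you peel off the first, and you start from $u=1$ rather than $u=2$, but these are cosmetic). You do add a welcome precision that the paper leaves implicit, namely the observation that the minors from the first $N$ rows depend only on the variables $y_{0,\cdot}$ while the complementary cofactors depend only on $y_{i,j}$ with $i\geq 1$, so that the column selection $(0,\ldots,N-1)$ is the unique summand in the Laplace expansion that can produce $y_{0,0}^N y_{1,1}^N\cdots y_{u-1,u-1}^N$ and hence no cancellation can occur.
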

\begin{proof}
Clearly, $\det(B)$ is a $uN$-rowed determinant, the expansion of which includes $(uN)!$ terms, where each term is a monomial of degree $uN$. Therefore,   $\det (B)$ is a homogeneous polynomial of degree $uN$.  In the following, we prove that $\det(B)$ includes the term in \eqref{Eqn det term} by induction.

Let $D=\det(B)$, when $u=2$, then by Definition \ref{Def minor} and Lemma \ref{Lemma Laplace}, we can get \eqref{Eqn_pro1_D_1} in the next page,
\begin{figure*}[hb]
\hrulefill
 \begin{align}
\nonumber D&=D(0,\ldots,N-1|0,\ldots,N-1)\widetilde{D}(0,\ldots,N-1|0,\ldots,N-1)\\
\nonumber&\hspace{10mm}+\sum_{0\le j_0<\ldots<j_{N-1}<2N\atop (j_0,\ldots,j_{N-1})\ne(0,\ldots,N-1)} D(0,\ldots,N-1|j_0, \ldots, j_{N-1})\widetilde{D}(0,\ldots,N-1|j_0, \ldots, j_{N-1})\\
\nonumber&=\det(y_{0,0}B_{0,0})\det(y_{1,1}B_{1,1}) +\sum_{0\le j_0< \ldots<j_{N-1}<2N\atop (j_0,\ldots,j_{N-1})\ne(0,\ldots,N-1)} D(0,\ldots,N-1|j_0, \ldots, j_{N-1})\widetilde{D}(0,\ldots,N-1|j_0, \ldots, j_{N-1})\\
\label{Eqn_pro1_D_1}&=\left( \prod\limits_{t=0}^{1}\det(B_{t,t}) \right)y_{0,0}^Ny_{1,1}^N +\sum_{0\le j_0<\ldots<j_{N-1}<2N\atop (j_0,\ldots,j_{N-1})\ne(0,\ldots,N-1)} D(0,\ldots,N-1|j_0, \ldots, j_{N-1})\widetilde{D}(0,\ldots,N-1|j_0, \ldots, j_{N-1}).
\end{align}
\end{figure*}
which implies that $D$ includes the term in \eqref{Eqn det term}.

Assume that the induction hypothesis holds, i.e., $D$ includes the term in \eqref{Eqn det term} for  $u=v\ge 2$. Then, when $u=v+1$, similarly, we can obtain \eqref{Eqn_pro1_D_2} in the next page.
\begin{figure*}[hb]
\hrulefill
\begin{align}
\nonumber D=&D(vN,\ldots,(v+1)N-1|vN,\ldots,(v+1)N-1)\widetilde{D}(vN,\ldots,(v+1)N-1|vN,\ldots,(v+1)N-1)\\
\nonumber &+\sum_{0\le j_0< \ldots<j_{N-1}<(v+1)N\atop (j_0,\ldots,j_{N-1})\ne(vN,\ldots,(v+1)N-1)} D(vN,\ldots,(v+1)N-1|j_0, \ldots, j_{N-1})\widetilde{D}(vN,\ldots,(v+1)N-1|j_0, \ldots, j_{N-1})\\
\label{Eqn_pro1_D_2}\begin{split}
 =&\det(y_{v,v}B_{v,v})\widetilde{D}(vN,\ldots,(v+1)N-1|vN,\ldots,(v+1)N-1)\\
   &+\sum_{0\le j_0< \ldots< j_{N-1}<(v+1)N\atop (j_0,\ldots,j_{N-1})\ne(vN,\ldots,(v+1)N-1)} D(vN,\ldots,(v+1)N-1|j_0, \ldots, j_{N-1})\widetilde{D}(vN,\ldots,(v+1)N-1|j_0, \ldots, j_{N-1}).
\end{split}
\end{align}
\end{figure*}
Note from Definition \ref{Def minor} that $\widetilde{D}(vN,\ldots,(v+1)N-1|vN,\ldots,(v+1)N-1)$ is a $vN$-rowed determinant, which
includes the term $$\left( \prod\limits_{t=0}^{v-1}\det(B_{t,t}) \right)y_{0,0}^Ny_{1,1}^N\cdots y_{v-1,{v-1}}^N$$ by the induction hypothesis. Hence, $D$ includes the term $$\left( \prod\limits_{t=0}^{v}\det(B_{t,t}) \right)y_{0,0}^Ny_{1,1}^N\cdots y_{v,{v}}^N.$$

Based on the above analysis, we proved that $\det(B)$ includes the term in  \eqref{Eqn det term} for any $u\ge 2$.
\end{proof}

\textit{\textbf{Proof of Theorem \ref{Thm general MDS}:}}
By \eqref{Eqn general coding matrix},  the parity-check matrix of the new $(n, k)$ code is
\begin{equation*}
A=\left(
\begin{array}{cccc}
A_{0,0} & A_{0,1} & \cdots & A_{0,n-1} \\
A_{1,0} & A_{1,1} & \cdots & A_{1,n-1} \\
\vdots & \vdots & \ddots & \vdots \\
A_{r-1,0} & A_{r-1,1} & \cdots  & A_{r-1,n-1}
\end{array}
\right)
\end{equation*}
with the $j$-th block column being
\begin{equation*}
  \left(
    \begin{array}{c}
      x_{0,j}A'_{0,j\%n'}\\
      x_{1,j}A'_{1,j\%n'} \\
      \vdots \\
      x_{r-1,j}A'_{r-1,j\%n'} \\
    \end{array}
  \right).
\end{equation*}
Then the new code is MDS if and only if any $r\times r$ sub-block matrix of $A$ is nonsingular.

For any $J=\{j_0,j_1,\cdots,j_{r-1}\}\subset [0,n)$, let $P_J$ be  the $r\times r$ sub-block matrix of $A$ formed by the $r$ block columns indicated by $J$, i.e.,
\begin{equation*}
P_J= \hspace{-1mm}\left(\hspace{-1mm}\begin{array}{ccc}
        x_{0,j_0}A'_{0,j_0\%n'}  & \cdots & x_{0,j_{r-1}}A'_{0,j_{r-1}\%n'} \\
        x_{1,j_0}A'_{1,j_0\%n'}  & \cdots &x_{1,j_{r-1}} A'_{1,j_{r-1}\%n'} \\
        \vdots  & \ddots & \vdots \\
      x_{r-1,j_0} A'_{r-1,j_0\%n'}  & \cdots &x_{r-1,j_{r-1}} A'_{r-1,j_{r-1}\%n'}
      \end{array}
\hspace{-1mm} \right),
\end{equation*}
which is nonsingular if $\det(P_J)$ is nonzero. Define $P=\prod\limits_{J\subset [0,n), |J|=r}P_J$, then $\det(P)=\prod\limits_{J\subset [0,n),|J|=r}\det(P_J)$. Thus, it suffices to prove that there is an assignment to the variables $x_{i,j}$, $i\in[0,r)$, $j\in[0,n)$  that does not evaluate $\det(P)$ to zero.

By Proposition \ref{Pro1}, $\det(P_J)$ is a homogeneous polynomial of degree $rN$ which includes the term
\begin{equation*}
 \left(\prod\limits_{t=0}^{r-1}\det(A'_{t,j_t\%n'}) \right)x_{0,j_0}^Nx_{1,j_1}^N\cdots x_{r-1,j_{r-1}}^N.
\end{equation*}
Then, $\det(P)$ is  a homogeneous polynomial  of degree $rN{n\choose r}$, where
each indeterminate $x_{i,j}$  has degree at most $N{n-1\choose r-1}$.
 Therefore, by Lemma \ref{Lemma Comb Null}, if $q>N{n-1\choose r-1}+1$, then there are  $x_{0,0}, \ldots, x_{0, n-1}, \ldots, x_{r-1,0}, \ldots, x_{r-1,n-1}\in \mathbf{F}_q\backslash\{0\}$  that does not evaluate $\det(P)$ to zero.  This finishes the proof. \hfill$\square$

\section{Proof of Theorem \ref{Thm_C4'_RB}}\label{sec:Appen2}

 The new storage code $\mathcal{C}'_4$   has the optimal repair bandwidth if and only if \eqref{repair_node_requirement1 n-1} and \eqref{repair_node_requirement3 n-1}  hold.

(i) Firstly, by \eqref{Eqn A=yB}, \eqref{Eqn pc C4'}, and \eqref{eqn the definition of repair matrix of base long MDS code}, we determine the necessary and sufficient conditions for \eqref{repair_node_requirement1 n-1} according to the following two cases.
\begin{itemize}
    \item []Case 1: For any $i'\in [0,rm)$, let $u=\lfloor \frac{i'}{m}\rfloor$, then we have
\begin{align*}
              & \mbox{rank}(\left(
                \begin{array}{c}
                    S'_{i',0}A'_{0,i'} \\
                    S'_{i',1}A'_{1,i'} \\
                    \vdots \\
                    S'_{i',r-1}A'_{r-1,i'}
 \end{array}\right))\\=&\mbox{rank}(\left(
                \begin{array}{c}
                   y_{0,i'} V_{i',u}B'_{0,i'} \\
                   y_{1,i'} V_{i',u}B'_{1,i'} \\
                    \vdots \\
                   y_{r-1,i'} V_{i',u}B'_{r-1,i'}
                \end{array}\right))\\ =&\mbox{rank}(\left(\hspace{-1mm}
                \begin{array}{c}
                    V_{i',u} \\
 \lambda_{i',u}V_{i',u}+\sum\limits_{a=0,a\ne u}^{r-1}(\lambda_{i',u}-\lambda_{i',a})V_{i',a} \\
                    \vdots \\
 \lambda_{i',u}^{r-1}V_{i',u}+\sum\limits_{a=0,a\ne u}^{r-1}(\lambda_{i',u}^{r-1}-\lambda_{i',a}^{r-1})V_{i',a}
                \end{array}\hspace{-1mm}\right)),
\end{align*}
which is of full rank if and only if \eqref{Eqn_p_fThm10_1} in the next page holds,
\begin{figure*}[hb]
\hrulefill
\begin{align}
\nonumber & \left|
                \begin{array}{ccccccc}
                    0&\cdots&0 & 1 & 0 & \cdots & 0 \\
                    \lambda_{i',u}-\lambda_{i',0} &\cdots&\lambda_{i',u}-\lambda_{i',u-1} & \lambda_{i',u} & \lambda_{i',u}-\lambda_{i',u+1} &\cdots&\lambda_{i',u}-\lambda_{i',r-1}\\
                    \vdots&\ddots&\vdots & \vdots&\vdots&\ddots & \vdots  \\
                    \lambda_{i',u}^{r-1}-\lambda_{i',0}^{r-1} &\cdots&\lambda_{i',u}^{r-1}-\lambda_{i',u-1}^{r-1} & \lambda_{i',u}^{r-1} & \lambda_{i',u}^{r-1}-\lambda_{i',u+1}^{r-1} &\cdots&\lambda_{i',u}^{r-1}-\lambda_{i',r-1}^{r-1}
                \end{array}\right|\\
\nonumber                 =&(-1)^{r-1}\left|
                \begin{array}{ccccccc}
                    1&\cdots&1 & 1 & 1 & \cdots & 1 \\
                    \lambda_{i',0} &\cdots&\lambda_{i',u-1} & \lambda_{i',u} & \lambda_{i',u+1} &\cdots&\lambda_{i',r-1}\\
                    \vdots&\ddots&\vdots & \vdots&\vdots&\ddots & \vdots  \\
        \lambda_{i',0}^{r-1} &\cdots&\lambda_{i',u-1}^{r-1} & \lambda_{i',u}^{r-1} & \lambda_{i',u+1}^{r-1} &\cdots&\lambda_{i',r-1}^{r-1}
                \end{array}\right|\\
\label{Eqn_p_fThm10_1}                \ne &0,
\end{align}
\end{figure*}
i.e., $\lambda_{i',0},\lambda_{i',1},\cdots,\lambda_{i',r-1}$ are pairwise distinct.

    \item []Case 2: For $i'\in [rm,(r+1)m)$,
        \begin{IEEEeqnarray*}{rCl}
              &&\mbox{rank}(\left(
                \begin{array}{c}
                    S'_{i',0}A'_{0,i'} \\
                    S'_{i',1}A'_{1,i} \\
                    \vdots \\
                    S'_{i',r-1}A'_{r-1,i'}
 \end{array}\right))\\&=&\mbox{rank}(\left(
                \begin{array}{c}
 y_{0,i'}(V_{i',0}+\cdots+V_{i',r-1})B'_{0,i'} \\
 y_{1,i'}(V_{i',0}+\cdots+V_{i',r-1})B'_{1,i'} \\
                    \vdots \\
                    y_{r-1,i'}( V_{i',0}+\cdots+V_{i',r-1} )B'_{r-1,i'}
                \end{array}\right)) \\ &=&\mbox{rank}(\left(
                \begin{array}{c}
                    V_{i',0}+\cdots+V_{i',r-1} \\
                    \lambda_{i',0}V_{i,0}+\cdots+\lambda_{i',r-1}V_{i,r-1} \\
                    \vdots \\
                    \lambda_{i',0}^{r-1}V_{i',0}+\cdots+\lambda_{i',r-1}^{r-1}V_{i',r-1}
                \end{array}\right))\\&=&N\\  &\Leftrightarrow& \left|
                \begin{array}{ccc}
                    1&\cdots&1 \\
 \lambda_{i',0}&\cdots&\lambda_{i',r-1}\\
                    \vdots&\ddots&\vdots  \\
 \lambda^{r-1}_{i',0}&\cdots&\lambda^{r-1}_{i',r-1}
                \end{array}\right|\ne0,
        \end{IEEEeqnarray*}
which holds if and only if $\lambda_{i',0},\lambda_{i',1},\cdots,\lambda_{i',r-1}$ are pairwise distinct.
\end{itemize}

(ii) Secondly, by  \eqref{B3}, \eqref{Eqn A=yB}, \eqref{Eqn reint A}, and \eqref{eqn the definition of repair matrix of base long MDS code}, we establish the necessary and sufficient conditions for \eqref{repair_node_requirement3 n-1} according to the following four cases.
\begin{itemize}
    \item []Case 1: For $t\in [0,r)$ and $i',j'\in[0,rm)$ with  $i'\ne j'$, let $u=\lfloor \frac{i'}{m}\rfloor$ and $v=\lfloor \frac{j'}{m}\rfloor$.
            If $j'\not\equiv i'$ mod $m$, then  we have
            \begin{IEEEeqnarray*}{rCl}
  &&\mbox{rank}(\left(
                    \begin{array}{c}
                        R'_{i',j'}\\
                        S'_{i',t}A'_{t,j'}
                    \end{array}\right))\\&=& \mbox{rank}(\left(
                    \begin{array}{c}
                        V_{i',u}\\
                        V_{i',u}A'_{t,j'}
                    \end{array}\right))\\&=& \mbox{rank}(\left(
                    \begin{array}{c}
                        V_{i',u}\\
                        V_{i',j',u,0}B'_{t,j'} \\
                        \vdots\\
                         V_{i',j',u,v-1}B'_{t,j'} \\
                         V_{i',j',u,v}B'_{t,j'} \\
                         V_{i',j',u,v+1}B'_{t,j'} \\
                        \vdots\\
                          V_{i',j',u,r-1}B'_{t,j'}
 \end{array}\right))\\&=&\mbox{rank}(\left(\hspace{-2mm}
                    \begin{array}{c}
                        V_{i',u}\\
                       \lambda^{t}_{j',0} V_{i',j',u,0} \\
                       \vdots\\
                       \lambda^{t}_{j',v-1} V_{i',j',u,v-1} \\
                       \lambda^{t}_{j',v} V_{i',j',u,v}\hspace{-1mm}+\hspace{-2mm}\sum\limits_{a=0,a\ne v}^{r-1}( \lambda^{t}_{j',v}\hspace{-1mm}-\hspace{-1mm}\lambda^{t}_{j',a}) V_{i',j',u,a} \\
                       \lambda^{t}_{j',v+1} V_{i',j',u,v+1} \\
                       \vdots\\
                       \lambda^{t}_{j',r-1} V_{i',j',u,r-1}
 \end{array}\hspace{-2mm}\right))\\&=&\mbox{rank}(\left(
                    \begin{array}{c}
                        V_{i',u}\\
                        V_{i',j',u,0} \\
                       \vdots\\
                       V_{i',j',u,r-1}
                    \end{array}\right))\\&=&N/r;
\end{IEEEeqnarray*}
Otherwise, $u\ne v$, thus we have\\
\begin{IEEEeqnarray*}{rCl}
                    \mbox{rank}(\left(
                    \begin{array}{c}
                        R'_{i',j'}\\
                       S'_{i',t}A'_{t,j'}
                    \end{array}\right))&=& \mbox{rank}(\left(
                    \begin{array}{c}
                        V_{i',u}\\
                        y_{t,j'}V_{i',u}B'_{t,j'}
 \end{array}\right))\\&=&\mbox{rank}(\left(
                    \begin{array}{c}
                        V_{i',u}\\
                       \lambda^{t}_{j',u}V_{i',u}
                    \end{array}\right))\\&=&N/r.
                \end{IEEEeqnarray*}
\item []Case 2: For $t\in [0,r)$,  $i'\in [rm,(r+1)m)$ and $j'\in [0,rm)$, let  $u=\lfloor \frac{j'}{m}\rfloor$.
If $j'\not \equiv i'$ mod $m$,  we have
\begin{IEEEeqnarray*}{rCl}
&&\mbox{rank}(\left(
                    \begin{array}{c}
                        R'_{i',j'}\\
                        S'_{i',t}A_{t,j'}
                    \end{array}\right))\\&=& \mbox{rank}(\left(
                    \begin{array}{c}
 V_{i',0}+V_{i',1}+\cdots+V_{i',r-1}\\
(V_{i',0}+V_{i',1}+\cdots+V_{i',r-1})B_{t,j'}
                    \end{array}\right))\\&=& \mbox{rank}(\left(
                    \begin{array}{c}
 \sum\limits_{a=0}^{r-1}V_{i',a}\\
 \sum\limits_{a=0}^{r-1}V_{i',j',a,0}B'_{t,j'} \\
                        \vdots\\
 \sum\limits_{a=0}^{r-1}V_{i',j',a,u-1}B'_{t,j'} \\
 \sum\limits_{a=0}^{r-1}V_{i',j',a,u}B'_{t,j'} \\
 \sum\limits_{a=0}^{r-1}V_{i',j',a,u+1}B'_{t,j'} \\
                        \vdots\\
 \sum\limits_{a=0}^{r-1}V_{i',j',a,r-1}B'_{t,j'}
                    \end{array}\right))\\&=& \mbox{rank}(\left(
                    \begin{array}{c}
 \sum\limits_{a=0}^{r-1}V_{i',a}\\
 \lambda^{t}_{j',0}\sum\limits_{b=0}^{r-1}V_{i',j',a,0} \\
                        \vdots\\
 \lambda^{t}_{j',u-1}\sum\limits_{a=0}^{r-1}V_{i',j',a,u-1} \\
 \hspace{-30mm}\sum\limits_{a=0}^{r-1}(\lambda^{t}_{j',u}V_{i',j',a,u}\\ \hspace{8mm}  -\sum\limits_{b=0,b \ne u}^{r-1}(\lambda^{t}_{j',u}-\lambda_{j',b})V_{i',j',a,b}) \\
 \lambda^{t}_{j',u+1}\sum\limits_{a=0}^{r-1}V_{i',j',a,u+1} \\
                        \vdots\\
 \lambda^{t}_{j',r-1}\sum\limits_{a=0}^{r-1}V_{i',j',a,r-1}
 \end{array}\right))\\&=&\mbox{rank}(\left(
                    \begin{array}{c}
 \sum\limits_{a=0}^{r-1}V_{i',a}\\
 \sum\limits_{a=0}^{r-1}V_{i',j',a,0} \\
                      \vdots\\
 \sum\limits_{a=0}^{r-1}V_{i',j',a,r-1}
                    \end{array}\right))\\&=&N/r;
\end{IEEEeqnarray*}
Otherwise,
\begin{align*}
&\mbox{rank}(\left(
                    \begin{array}{c}
                        R'_{i',j'}\\
                        S'_{i',t}A_{t,j'}
                    \end{array}\right))\\=& \mbox{rank}(\left(
                    \begin{array}{c}
 \sum\limits_{a=0}^{r-1}V_{i',a}\\
 \sum\limits_{a=0}^{r-1}V_{i',a}B'_{t,j'}
 \end{array}\right))\\=&\mbox{rank}(\left(
                    \begin{array}{c}
 \sum\limits_{a=0}^{r-1}V_{i',a}\\
 \lambda^t_{j',u}\sum\limits_{a=0}^{r-1} V_{j',a}
                    \end{array}\right))\\=&N/r.
\end{align*}
\item []Case 3: For $t\in [0,r)$, $i'\in [0,rm)$ and $j'\in [rm,(r+1)m)$,
we easily have
\begin{align*}
                    \mbox{rank}(\left(\hspace{-0.3mm}
                    \begin{array}{c}
                        R'_{i',j'}\\
                        S'_{i',t}A'_{t,j'}
                    \end{array}\hspace{-0.3mm}\right))&= \mbox{rank}(\left(\hspace{-0.3mm}
                    \begin{array}{c}
                        V_{i',\lfloor{i'\over m}\rfloor}\\
                        y_{t,j'}V_{i',\lfloor{i'\over m}\rfloor}B'_{t,j'}
                    \end{array}\hspace{-0.3mm}\right))\\&=N/r.
                \end{align*}

\item []Case 4: For $i',j'\in [rm,(r+1)m)$ and $i'\ne j'$, we have
\begin{align*}
&\mbox{rank}(\left(
                    \begin{array}{c}
                        R'_{i',j'}\\
                        S'_{i',t}A'_{t,j'}
                    \end{array}\right))\\=& \mbox{rank}(\left(
                    \begin{array}{c}
                        \sum\limits_{a=0}^{r-1}V_{i',a}\\
                        \sum\limits_{a=0}^{r-1}V_{i',a}B'_{t,j'}
                    \end{array}\right))\\=& \mbox{rank}(\left(
                    \begin{array}{c}
                        \sum\limits_{a=0}^{r-1}V_{i',a}\\
 \sum\limits_{a=0}^{r-1}V_{i',j',a,0}B'_{t,j'} \\
                        \vdots\\
                        \sum\limits_{a=0}^{r-1}V_{i',j',a,r-1}B'_{t,j'}
                    \end{array}\right))\\=& \mbox{rank}(\left(
                    \begin{array}{c}
                        \sum\limits_{a=0}^{r-1}V_{i',a}\\
                        \lambda^{t}_{j',0}\sum\limits_{a=0}^{r-1}V_{i',j',a,0} \\
                        \vdots\\
                        \lambda^{t}_{j',r-1}\sum\limits_{a=0}^{r-1}V_{i',j',a,r-1}
                    \end{array}\right))\\=&N/r.
\end{align*}
\end{itemize}

This finishes the proof after combining (i) and (ii).

\section*{Acknowledgment}
The authors would like to thank the Associate Editor Dr. Parastoo Sadeghi and the three anonymous reviewers for their valuable suggestions and comments, which have greatly improved the presentation and quality of this paper. Jie Li would like to thank Prof. Alexander Barg and Prof. Itzhak Tamo for helpful discussions during his visit at the University of Maryland, College Park.

\ifCLASSOPTIONcaptionsoff
  \newpage
\fi

\begin{IEEEbiographynophoto}{Jie Li} (Member, IEEE) received the B.S. and M.S. degrees in mathematics from Hubei University, Wuhan, China, in 2009 and 2012, respectively, and received the Ph.D. degree from the department of communication engineering, Southwest Jiaotong University, Chengdu, China, in 2017.

From  2015 to   2016, he was a visiting Ph.D. student in the Department of Electrical Engineering and Computer Science, The University of Tennessee at Knoxville, TN, USA.  From   2017 to   2019, he was  a postdoctoral researcher at the Department of Mathematics, Hubei University, Wuhan, China. Since   2019, he has been a postdoctoral researcher at  the Department of Mathematics and Systems Analysis, Aalto University, Finland. His research interests include coding for distributed storage, private information retrieval, and sequence design.

Dr. Li received the IEEE Jack Keil Wolf ISIT Student Paper Award in 2017.
\end{IEEEbiographynophoto}

\begin{IEEEbiographynophoto}{Yi Liu} (Graduate Student Member, IEEE) received the B.S. degree in mathematics
and applied mathematics from Xihua University,
Chengdu, China, in 2014. He is currently pursuing
the Ph.D. degree in information security with
Southwest Jiaotong University. His research interest
includes coding for distributed storage.
\end{IEEEbiographynophoto}

\begin{IEEEbiographynophoto}{Xiaohu Tang} (Senior Member, IEEE)  received the B.S. degree in applied mathematics from
the Northwest Polytechnic University, Xi'an, China, the M.S. degree in applied
mathematics from the Sichuan University, Chengdu, China, and the Ph.D.
degree in electronic engineering from the Southwest Jiaotong University,
Chengdu, China, in 1992, 1995, and 2001 respectively.

From 2003 to 2004, he was a research associate in the Department of Electrical
and Electronic Engineering, Hong Kong University of Science and Technology.
From 2007 to 2008, he was a visiting professor at University of Ulm,
Germany. Since 2001, he has been in the School of Information Science and Technology,
Southwest Jiaotong University, where he is currently a professor. His research
interests include coding theory, network security, distributed storage and information processing for big data.

Dr. Tang was the recipient of the National excellent Doctoral Dissertation
award in 2003 (China), the Humboldt Research Fellowship in 2007
(Germany), and the Outstanding Young Scientist Award by NSFC in 2013
(China). He served as Associate Editors for several journals including \textit{IEEE
Transactions on Information Theory} and \textit{IEICE Transactions on
Fundamentals}, and served on a number of technical program committees of
conferences.
\end{IEEEbiographynophoto}


\begin{thebibliography}{11}

\bibitem{Dimakis} A.G. Dimakis, P. Godfrey, Y. Wu, M. Wainwright, and K. Ramchandran, ``Network coding for distributed storage systems," \textit{IEEE Trans. Inform. Theory,} vol. 56, no. 9, pp. 4539-4551, Sep. 2010.

\bibitem{product} K.V. Rashmi, N.B. Shah, and P.V. Kumar, ``Optimal exact-regenerating codes for distributed storage at the MSR and MBR points via a product-matrix construction," \textit{IEEE Trans. Inform. Theory,} vol. 57, no. 8, pp. 5227-5239, Aug. 2011.

\bibitem{Zigzag} T. Tamo, Z. Wang, and J. Bruck, ``Zigzag codes: MDS array codes
with optimal rebuilding," \textit{IEEE Trans. Inform. Theory,} vol. 59, no. 3, pp. 1597-1616, Mar. 2013.

\bibitem{hadamard} D.S. Papailiopoulos, A.G. Dimakis, and V.R. Cadambe, ``Repair optimal erasure codes through Hadamard designs," \textit{IEEE Trans. Inform. Theory,} vol. 59, no. 5, pp. 3021-3037, May 2013.

\bibitem{transform-IT}  J. Li,  X. Tang, and C. Tian, ``A generic transformation to enable optimal repair in MDS codes for distributed storage systems",   \textit{IEEE Trans. Inform. Theory,} vol. 64, no. 9, pp. 6257-6267, Sept. 2018.





\bibitem{Hadamard-strategy} X. Tang, B. Yang, J. Li, and H.D.L. Hollmann, ``A new repair strategy for the Hadamard minimum storage regenerating codes for distributed storage systems,'' \textit{IEEE Trans. Inform. Theory,}  vol. 61, no. 10, pp. 5271-5279, Oct. 2015.

\bibitem{repair-parity-zigzag} J. Li and X. Tang, ``Optimal exact repair strategy for the parity nodes of the $(k+2,k)$ Zigzag code," \textit{IEEE Trans. Inform. Theory,} vol. 62, no. 9, pp. 4848-4856, Sep. 2016.

\bibitem{transform-ISIT}  J. Li,  X. Tang, and C. Tian, ``A generic transformation for optimal repair bandwidth and rebuilding access in MDS codes",  in \textit{Proc. IEEE Int. Symp. Inform. Theory,} Aachen, Germany, Jun. 2017, pp. 1623-1627.

\bibitem{Long_IT} Z. Wang, T. Tamo, and J. Bruck, ``Explicit minimum storage regenerating codes,"  \textit{IEEE Trans. Inform. Theory,} vol. 62, no. 8, pp. 4466-4480, Aug. 2016.

\bibitem{extend-zigzag} Z. Wang, I. Tamo, and J. Bruck, ``On codes for optimal rebuilding access," in \textit{Proc. 49th Annu. Allerton Conf. Commun.,
Control, Comput.,} Monticello, IL,  Sep. 2011, pp. 1374-1381.

\bibitem{invariant_subspace} J. Li, X. Tang, and U. Parampalli, ``A framework of constructions of minimal storage regenerating codes
with the optimal access/update property," \textit{IEEE Trans. Inform. Theory,} vol. 61, no. 4, pp. 1920-1932, Apr. 2015.

\bibitem{Barg1} M. Ye and A. Barg, ``Explicit constructions of high-rate MDS array codes with optimal repair bandwidth," \textit{IEEE Trans. Inform. Theory,} vol. 63, no. 4, pp. 2001-2014, Apr. 2017.


\bibitem{Barg2} M. Ye and A. Barg, ``Explicit constructions of optimal-access MDS codes with nearly optimal sub-packetization," \textit{IEEE Trans. Inform. Theory,} vol. 63, no. 10, pp. 6307-6317, Oct. 2017.

\bibitem{Sasidharan-Kumar2} B. Sasidharan, M. Vajha, and P.V. Kumar, ``An explicit, coupled-layer construction of a high-rate MSR code with low sub-packetization level, small field size and all-node repair,"   \textit{ arXiv: 1607.07335 [cs.IT]}

\bibitem{YiLiu} Y. Liu, J. Li, and X. Tang,  ``Explicit constructions of high-rate MSR codes with optimal access property over small finite fields," \textit{IEEE Trans. Commun.,}  vol. 66, no. 10, pp. 4405-4413, Oct. 2018.


\bibitem{Goparaju} S. Goparaju, A. Fazeli, and A. Vardy, ``Minimum storage regenerating codes
for all parameters," \textit{IEEE Trans. Inform. Theory,} vol. 63, no. 10, pp. 6318-6328, Oct. 2017.


\bibitem{elyasi2019cascade} M. Elyasi and S. Mohajer, ``Cascade codes for distributed storage systems," \textit{IEEE Trans. Inform. Theory,} to appear.

\bibitem{elyasi2018cascade} M. Elyasi and S. Mohajer, ``A cascade code construction for $(n, k, d)$ distributed storage systems," in \textit{Proc. IEEE Int. Symp. Inform. Theory,} Vail, CO, Jun. 2018, pp. 1241-1245.

\bibitem{li2019systematic} J. Li and X. Tang, ``Systematic construction of MDS codes with small sub-packetization level and near optimal repair bandwidth," in \textit{Proc. IEEE Int. Symp. Inform. Theory,} France, Paris, July 2019, pp. 1067-1071.

\bibitem{balaji2018erasure} S.B. Balaji, M.N. Krishnan, M. Vajha, V. Ramkumar, B. Sasidharan,
and P.V. Kumar, ``Erasure coding for distributed storage: An overview," Sci. China Inf. Sci., vol. 61, Art. no. 100301, Oct. 2018.



\bibitem{Goparaju_bound} S. Goparaju, I. Tamo, and R. Calderbank, ``An Improved Sub-Packetization Bound for Minimum Storage Regenerating Codes,"  \textit{IEEE Trans. Inform. Theory,} vol. 60, no. 5, pp. 2770-2779, May 2014.


\bibitem{alrabiah2019exponential} O. Alrabiah and V. Guruswami, ``An exponential lower bound on the sub-packetization of MSR codes,"  [Online]. Available at: arXiv: 1901.05112 [cs.IT]

\bibitem{Rawat} A.S. Rawat, I. Tamo, V. Guruswami, and K. Efremenko, ``MDS code constructions with small sub-packetization and near-optimal repair bandwidth," \textit{IEEE Trans. Inform. Theory,} vol. 64, no. 10, pp. 6506-6525, Oct. 2018.




\bibitem{Alon} N. Alon, ``Combinatorial nullstellensatz," \textit{Combinat. Probab.
Comput.,} vol. 8, no. 1-2, pp. 7-29, Jan. 1999.


\bibitem{mirsky2012introduction} L. Mirsky, ``An introduction to linear algebra,"  Courier  Corporation, 2012.


%
\end{thebibliography}
\end{document}